\newcommand{\commenttext}[1]{ \begin{center} {\fbox{\begin{minipage}[h]{0.9
						\linewidth}   {\sf #1} \end{minipage} }} \end{center}}
\newcommand\VRAIMENTFINAL[1]{
}
\newcommand\FINAL[1]{}
\newcommand{\olivier}[1]{\VRAIMENTFINAL{{\color{red}\commenttext{Olivier: #1}}}}
\newcommand{\sabrina}[1]{\VRAIMENTFINAL{{\color{blue}\commenttext{Sabrina: #1}}}}
\newcommand\olivierOK[1]{\FINAL{\olivier{#1}}}
\newcommand\sabrinaOK[1]{\FINAL{\sabrina{#1}}}
\newcommand\mytitre{Cheap Non-standard Analysis and Computability }
\title{\mytitre}
\author{Olivier Bournez}{Ecole Polytechnique, LIX, 91128 Palaiseau Cedex, France}{bournez@lix.polytechnique.fr}{}{ANR PROJECT RACAF}
\author{Sabrina Ouazzani}{LACL, Universit\'e Paris-Est Cr\'eteil, 61 avenue du
  G\'en\'eral de Gaulle, 94010 Cr\'eteil,
  France}{sabrina.ouazzani@lacl.fr}{}{}
\authorrunning{O. Bournez and S. Ouazzani} 
\subjclass{ Theory of computation/Computability, Mathematics of
  computing/Numerical analysis, Models of
  Computation/Continuous functions. 
}
\newcommand\vn{\vectorl{n}}
\newcommand\R{\mathbb{R}}
\newcommand\N{\mathbb{N}}
\newcommand\Q{\mathbb{Q}}
\newcommand\QP{\mathbb{Q}^{>0}}
\newcommand\D{\mathbb{D}}
\newtheorem{acks}{ACKS}
\newcommand\partientieresup[1]{\lceil #1 \rceil}
\begin{document}

\title{Cheap Non-standard Analysis and Computability}         
\keywords{Computability Theory, Computable Analysis, Non-Standard Analysis}  

\maketitle

\olivierOK{ To Sabrina:il faudrait compléter/rendre pertinant dans la macro subjclass ce qui
  est approprié. Je cite `
Il ya d ans le répertoire lipics-v2018* le style lipics et surement
quelquepart des explications de ce qui est attendu.
}

\begin{abstract}
  Non standard analysis is an area of Mathematics dealing with notions
  of infinitesimal and infinitely large numbers, in which many
  statements from classical analysis can be expressed very naturally.
  Cheap non-standard analysis introduced by Terence Tao in 2012 is based on
  the idea that considering that a property holds eventually is
  sufficient to give the essence of many of its statements.  This
  provides constructivity but at some (acceptable) price.

  We consider computability in cheap non-standard analysis. We prove
  that many concepts from computable analysis as well as several
  concepts from computability can be very elegantly and alternatively
  presented in this framework. Our statements provide the bases for  dual views and dual proofs
  to several
  statements already known in these fields.

\olivierOK{MFCS: enlevé. The model is demonstrated to also cover
  ordinal computability models.}
\end{abstract}

\olivierOK{To Sabrina: Il faut intégrer les commentaires (au minimum toutes les
  typos des referees de LICS. }

\section{Introduction}
\label{sec:introduction}


\newcommand\std[1]{ \textcolor{cyan}
  {{{#1}}}}
\newcommand\nstd[1]{ \textcolor{red}
  {{\mathbf{#1}}}}
\newcommand\idx[1]{ \textcolor{blue}
  {{#1}}}
\newcommand\val[1]{ \textcolor{blue}
  {{#1}}}

\newcommand\SFT[1]{\overline{#1}}
\newcommand\Snepsilon{\SFT{\nepsilon}}
\newcommand\Sndelta{\SFT{\ndelta}}
\newcommand\Snp{\SFT{\np}}
\newcommand\Snq{\SFT{\nq}}

\newcommand\myomega{\underline{\omega}}
\newcommand\mylambda{\underline{\lambda}}

\newcommand\sN{\std{N}}
\newcommand\sK{\std{K}}
\newcommand\sx{\std{x}}
\newcommand\sy{\std{y}}
\newcommand\sz{\std{z}}
\newcommand\sn{\std{n}}
\newcommand\sk{\std{k}}
\newcommand\sm{\std{m}}
\newcommand\sr{\std{r}}
\newcommand\sa{\std{a}}
\newcommand\ssb{\std{b}}
\newcommand\ssc{\std{c}}
\newcommand\sd{\std{d}}
\newcommand\slambda{\std{\lambda}}
\newcommand\sss{\std{s}}
\newcommand\ssf{\std{f}}
\newcommand\sF{\std{F}}
\newcommand\sG{\std{G}}
\newcommand\sg{\std{g}}
\newcommand\sh{\std{h}}
\newcommand\ssp{\std{p}}
\newcommand\sq{\std{q}}
\newcommand\smu{\std{\mu}}
\newcommand\sepsilon{\std{\epsilon}}
\newcommand\sdelta{\std{\delta}}
\newcommand\spsi{\std{\psi}}
\newcommand\si{\std{i}}
\newcommand\sX{\std{X}}
\newcommand\sY{\std{Y}}
\newcommand\sD{\std{D}}
\newcommand\sP{\std{P}}

\newcommand\nN{\nstd{N}}
\newcommand\npsi{\nstd{\psi}}
\newcommand\na{\nstd{a}}
\newcommand\nb{\nstd{b}}
\newcommand\nx{\nstd{x}}
\newcommand\ny{\nstd{y}}
\newcommand\nd{\nstd{d}}
\newcommand\nz{\nstd{z}}
\newcommand\nf{\nstd{f}}
\newcommand\nk{\nstd{k}}
\newcommand\nn{\nstd{n}}
\newcommand\np{\nstd{p}}
\newcommand\nq{\nstd{q}}
\newcommand\nm{\nstd{m}}
\newcommand\nnu{\nstd{u}}
\newcommand\nX{\nstd{{}^*X}}
\newcommand\nY{\nstd{{}^*Y}}
\newcommand\nD{\nstd{{}^*D}}
\newcommand\nS{\nstd{{}^*S}}
\newcommand\csomega{\std{\myomega}}
\newcommand\comega{\nstd{\myomega}}
\newcommand\clambda{\nstd{\mylambda}}
\newcommand\nepsilon{\nstd{\epsilon}}
\newcommand\ndelta{\nstd{\delta}}


\newcommand\nSn{\nstd{{}^*S}_{\iin}}
\newcommand\nNn{\nstd{N}_{\iin}}
\newcommand\nxi{\std{x}_{\ii}}
\newcommand\nxn{\std{x}_{\iin}}
\newcommand\ndn{\std{d}_{\iin}}
\newcommand\nyi{\std{y}_{\ii}}
\newcommand\nyn{\std{y}_{\iin}}
\newcommand\nzn{\std{z}_{\iin}}
\newcommand\nfn{\std{f}_{\iin}}
\newcommand\nXn{\std{X}_{\iin}}
\newcommand\nYn{\std{Y}_{\iin}}
\newcommand\nmn{\std{m}_{\iin}}
\newcommand\nnni{\std{n}_{\ii}}
\newcommand\npn{\std{p}_{\iin}}
\newcommand\nqn{\std{q}_{\iin}}
\newcommand\nepsilonn{\nstd{\epsilon}_{\iin}}
\newcommand\ndeltan{\nstd{\delta}_{\iin}}
\newcommand\comegan{\nstd{\myomega}_{\iin}}

\newcommand\im{\idx{m}}
\newcommand\iin{\idx{n}}
\newcommand\ii{\idx{i}}
\newcommand\ik{\idx{k}}

\renewcommand\vn{\val{n}}
\newcommand\vi{\val{i}}


\newcommand\A{\mathcal{A}}

\newcommand\cheap{cheap non-standard }
\newcommand\cheapP{cheap non-standard}
\newcommand\Cheap{Cheap non-standard }
\newcommand\CheapM{Cheap Non-Standard }

\newcommand\gammaun{\gamma}


\newcommand\shift {shift }
\newcommand\Shift {Shift }
\newcommand\shiftu[1]{#1^{+}}
\newcommand\shiftd[2]{#1^{+#2}}

\newcommand\machintrucby {effective with respect to }
\newcommand\machinequivalent{computably equivalent }
\newcommand\machinequivalentn{computably equivalent}
\newcommand\ieffective{effective } 
\newcommand\ieffectiven{effective} 
\newcommand\effectivelyuniformlycontinous{has an effective modulus of continuity
}


\newcommand\finite{{finite} }
\newcommand\finitely{\emph{finitely} }
\newcommand\FrechetFilter{Fr\'echet filter}
\newcommand\totalrecursive{\emph{total recursive }}
\newcommand\totalnonrecursive{\emph{total non recursive }}
\renewcommand\succ{\sss}



\newcommand\Idx{\mathcal{N}}
\newcommand\metafinite{meta-finite}
\newcommand\Metafinite{Meta-finite}
\newcommand\I{\mathcal{I}}
\newcommand\patchproperty{\emph{patch property}}
\newcommand\NN{{\N}}

\newcommand\Los{\L{}\'os}

\newcommand\integer{{integer}}
\newcommand\integers{{integers}}


\newcommand\recursivelyregular{recursively regular }


\newcommand\ov[1]{\overline{#1}}

\olivier{MFCS: mieux distinguer entier de vecteurs d'entiers}


\olivierOK{
Conventions (qu'il faudrait tenir):
\begin{itemize}
\item on dit ``total recursive'' pour une fonction, et pas
  ``computable''
\item par contre, computable pour les cheap non-standard integer
\item lettre minuscule pour les cheap non-standard integer
\item pour dire que les composantes sont $f(i)$, on dit que $\nx=\nxi=f(\ii)$.
\item Macros:
  \begin{itemize}
  \item \Cheap, and \cheap
  \item $\A$
  \item $\comega$ pour $\comega_i=i$
  \item $\gammaun$
  \end{itemize}

\end{itemize}
}

\olivierOK{Nouveautes de la version 3:
  \begin{itemize}
  \item Des macros pour le vocabulaire, car il est perfectible:
    \begin{itemize}
    \item \shift
    \item \machintrucby, \machinequivalent, \ieffective, \effectivelyuniformlycontinous
    \end{itemize}
  \item la notion de shift est une macro (et changee)
  \end{itemize}

}

\olivierOK{Nouveautes de la version 5:
  \begin{itemize}
  \item La subtilité de ce qui est appelé ``finite''.
  \item Macros \FrechetFilter, \totalrecursive, \succ
  \end{itemize}

}

\olivierOK{Nouveautes de la version 6:
  \begin{itemize}
  \item Un ensemble d'indice $\Idx$. 
   \item Macro \metafinite, \patchproperty
  \item $\NN$
  \end{itemize}
}

\olivier{changé l'ordre du texte.}

While historically reasonings in mathematics were often based on the
used of infinitesimals, in order to avoid paradoxes and debates about
the validity of some of the arguments, this was later abandoned in
favor of epsilon-delta based definitions such as today's classical definition of
continuity for functions over the reals.

Non standard analysis (NSA) originated from the work of Abraham Robinson in
the 1960's who came with a formal construction of non-standard models
of the reals and of the integers \cite{robinson1996non}. Many
statements from classical analysis can be expressed very
elegantly, using concepts such as infinitesimals or infinitely large
numbers in NSA: See
e.g. \cite{robinson1996non,goldblatt2012lectures,keisler1976foundations,keislerbook}.
%
%
%
It not only have interests for understanding
historical arguments and the way we came to some of today's notions,
but also clear interests for pedagogy and providing results that have
not been obtained before in Mathematics. See
e.g. \cite{keisler1976foundations} for nice presentations of
NSA, or \cite{keislerbook} for an undergraduate
level book presenting in a very natural way the whole
mathematical calculus, based on Abraham Robinson's infinitesimals
approach. See \cite{NSAteaching} for recent
instructive pedagogical experiments on its help for teaching mathematical concepts
to students. 

However, the construction and understanding of concepts from
NSA is sometimes hard to grasp. Its models are
built using concepts such as ultrafilter that are obtained using
non-constructive arguments through the existence of a free ultrafilters
whose existence requires to use the axiom of choice.  Moreover, 
the dependance on
the choice of this ultrafilter is sometimes not easy to understand (at
least for non model-theory experts).

Terence Tao came in 2012 in a post in his blog
\cite{CheapNonStandardAnalysis} with a very elegant explanation of the
spirit of many of the statements of non-standard analysis using
only very simple arguments, that he called \cheap analysis in
opposition to classical non-standard analysis. This theory is based on
the idea that the asymptotic limit of a sequence given by its value
after some \finite rank is enough to define non standard
objects
. 
\Cheap analysis provides constructivity but this of course comes with
some price (e.g. a non-total
order on \cheap integers, i.e. some indeterminacy). 


Computability theory, classically dealing with finite or discrete
structures such as a finite alphabet or the integers, has been far
extended in many directions at this date. Various approaches have been
considered for formalizing its issues in analysis, but at this stage
the most standard approaches for dealing with computations over the
reals are from computable analysis \cite{Wei00} and computability for
real functions \cite{Ko91}. For other approaches
for modeling computations over the reals and how they relate,
see e.g. \cite{BCSS} 
or the
appendices of \cite{Wei00}.

\olivierOK{Il faut insérere d'autres refs pertinentes,  dont au moins celle pointée par referees ICALP}

\olivier{Revoir résultats quand aura terminé}

In this paper, we explore how computability mixes with \cheap
 analysis. We prove that many concepts from computable
analysis as well as several concepts from computability can be very
elegantly and alternatively presented in this framework. In
particular, we prove that computable analysis concepts have very
nice and simple formulations in this framework. We also obtain alternative,
equivalent and nice formulations of many of its concepts in this
framework.


Our approach provides an alternative to the usual presentation of
computable analysis. In particular, nowadays, a popular approach to formalize computable analysis is based on
Type-2 computability, i.e. Turing machines working over
representations of objects by infinite sequences of integers: See
\cite{Wei00} for a monograph based on this approach. Other presentations include original
ones from \cite{Tur36short}, \cite{Grz57},
and \cite{Lac55}. 
%
More recently, links have been established between type-2
computability and transfinite computations (see \cite{GNCompAn} for
example) using surreal numbers. 
 NSA has also been used in the context of
various applications like systems modeling: See e.g. 
\cite{proginf} or \cite{benveniste2012non}. 


\olivier{Plans? Revoir plan quand aura terminé}

The paper is organized as follows. In Section \ref{CNSA}, we recall
\cheap analysis. In Section \ref{filter}, we present the very basics
of constructions from NSA, and we state some relations to \cheap analysis. 
In
Section \ref{sec:computabilityintegers}, we start to discuss
computability issues, and we consider computability
of \cheap integers and rational numbers. In Section
\ref{sec:infinitesimals}, we discuss some computability issues related
to infinitesimals and infinitely large numbers. In Section
\ref{sec:reals}, we discuss computability for real numbers.  In
Section \ref{sec:continuity}, we go to computability for functions
over the reals and we discuss continuity and uniform continuity. In
Section \ref{sec:functions}, we discuss computability of functions
over the reals. 
In Section \ref{sec:example}, we discuss some applications illustrating the
interest of using our framework. Finally, in Section \ref{sec:discuss},
we discuss our constructions and we discuss some interesting perspectives. 

\olivier{En fait, $\omega$ c'est pour les indices, et $\N$ pour les
  valeurs, meme si c'est la meme chose.}
\olivierOK{J'ai enlevé l'
  histoire de numbers. Vérifier que pas d'autres references à numbers}

In all what follows, $\N$, $\Q$ and $\R$ are respectively the set of
integers, rational numbers and real numbers.  We will sometimes also
write $\omega$ for a synonym for $\N$: we will use $\omega$ preferably
when talking about indices. 
In what follows, $\A$ is either $\N$ or $\Q$, and we assume $\N \subset \Q$. By
number, we mean either a natural or a rational number.
In the current
version, we use a color
coding to help our reader to visualize the type of each
variable. However, the paper can be read 
without this
color coding.

\olivierOK{Etait: We will consider some index set $\Idx$ that we will
consider up to Section \ref{sec:generalisation} to be $\N$. Similarly,
up to Section \ref{sec:generalisation}, when we say a \finite $i$, we
mean an $i \in \N$ and $\NN$ is a synonym for $\N$. }

\olivierOK{

COMMENTE \% OLD MACROS

\olivierOK{je dis admissible dans la phrase qui suit. admissible?}

\olivierOK{Change V10: admissible -> recursively regular. $\lambda$
  devenu $\kappa$.}
\olivierOK{MFCS: en fait  admissible. Maois bon, a voir si on garde les ordinaux}

Devenu:}

\olivier{MFCS: Commente pour l'instant: A DISCUTER CE QU'ON DIT AU
  FINAL DANS CETTE VERSION MFCS.
Actually, as discussed in Section \ref{sec:generalisation}, $\Idx$ can
actually be some greater ordinal containing $\N$ in purpose to extend
the sequences indexing. In order to distinguish between
$\N$ and $\NN$ we will respectively write natural number and \integer
for the generalised index. See Section \ref{sec:generalisation} for a
full discussion.

The generalised index is motivated by our willingness not only to link
+computability to cheap non-standard analysis (CNSA) but also to
+higher-order computability, in particular to the generalized
+computability defined along well-closed sets such that the admissible
+ordinals (see Sacks' higher recursion theory for example,
\cite{sacks1990higher}). We believe that infinitely large numbers
introduced in CNSA but extended to indices belonging to some
well-chosen sets are of some help to establish interesting parallels
between higher-order recursion models and what is called here
``meta-finite'' recursion. This leads to further perspectives, for
instance a new paradigm equivalent, to some+extent, to certain
existing transfinite time computation models, hence introducing a more
mathematical and thus easier to manipulate model than the current
ones.  Here we allow ourselves some speculation on the possibilities
offered by the proposed framework to further explore computability
theory applied to reals and higher-order computability (and their
relations), this paper provides the basis for these developments.
}

\olivierOK{Commente:
We say that a limit ordinal $\kappa$ is \recursivelyregular if and
only if it is closed under all $(\infty,\kappa)$-partial recursive
functions. Actually, as discussed in Section \ref{sec:generalisation}, $\Idx;$ can
actually be some \recursivelyregular ordinal $\kappa$ containing $\N$
(using Terminology from \cite{hinman2017recursion}),
in purpose to extend the sequences indexing, as proposed in
higher-order computability theories (\cite{sacks1990higher}). In that Section \ref{sec:generalisation}
case, we consider $\NN=\kappa$ starting from Section
\ref{sec:apartir}, and when we say a \finite $i$, we mean a
\metafinite{} $i$. We assume that $\NN$ comes with operations that
makes it a model of Peano axioms, and that $\Q$ is its field of
fractions. We assume that $\Idx$ can be embedded into $\NN$: That means for
example, in the case $\kappa > \omega$, that we can freely talk of a
quantity such as $\frac{1}{\omega}$. 
%
We state for now the following facts: The set of \metafinite{} $i$ is
infinite and any finite $i$ is \metafinite. 
%

}

\sabrinaOK{modifié un peu ici et motivé comme ds la reponse aux
  reviewers et déplacé début section 11}

\olivierOK{Fin new v6}

\section{\CheapM Analysis }
\label{CNSA}

\olivierOK{Rappel/MEMO  des macros:
  \begin{itemize}
  \item nx = non standard x (meme chose pour autres lettres)
  \item nxi = non standard x indice i
  \item nxn = non standard x indice n
  \item ii = indice i
  \item iin = indice n
  \item Peut etre superflus: vn = valeur n, vi = valeur i.
  \end{itemize}
}

\olivierOK{Attention a l'ordre dans lequel sont introduit les
  choses. Pour  l'instant, par exemple l'ordre est utilise avant
  d'etre defini.}

\sabrinaOK{J'ai mis des integers a la place des reels ds ttes les definitions de cette section}

We start by presenting/recalling \cheap analysis \cite{CheapNonStandardAnalysis}. 
%
It makes the distinction between two
types of mathematical objects: \textit{standard objects $\sx$} and
(strictly) \textit{non-standard objects $\nx = \nxn$}.  \Cheap objects
are allowed to depend on an asymptotic parameter $\iin \in \omega$
, contrary to standard objects that
come from classical analysis. A \cheap ``object'' $\nx$ is then
defined by a sequence $\nxn$, which is studied in the asymptotic limit
$ \iin \rightarrow \infty$, that is to say for sufficiently large
$\iin$. Every standard ``object'' $\sx$ is also considered as a
non-standard ``object'' identified by a constant sequence
$\nx=\nxn=\sx$ having value $\sx$. The underlying idea is similar to
what is for example done in probability theory where an element of
$\R$ is also implicitly considered as a probabilistic element of $\R$,
depending if it has a measure associated. This is done with ``object''
referring to any natural mathematical concept such as ``natural
number'', ``rational number'', ``real number'' ``set'', ``function''
(but this could also be ``point'', ``graph'', etc.
\cite{CheapNonStandardAnalysis}).

The idea is then to consider that all reasonings are done in the
asymptotic limit $ \iin \rightarrow \infty$, that is to say for
sufficiently large $\iin$. In particular, two \cheap elements $\nx =
\nxn$, $\ny = \nyn$ are considered as equal if $\nxn =\nyn$ after some
\finite rank.
More generally, 
any standard true relation or predicate is considered to be true for
some \cheap object if it is true for all sufficiently large values of
the asymptotic parameter, that is to
say after some \finite rank. Any
operation which can be applied to a standard object $\sx$ can also be
applied to a \cheap object $\nx=\nxn$ by applying the standard
operation componentwise, that is to say 
for each choice of the rank parameter $\iin$. 
For example, given two \cheap integers $\nx=\nxn$ and $\ny=\nyn$, we
say that $\nx>\ny$ if one has $\nxn > \nyn$ after some \finite
rank. Similarly, we say that the relation is false if it is false
after some \finite rank.
As another example, the sum $\nx+\ny$ of two \cheap integers
$\nx=\nxn$ and $\ny =\nyn$ is given by
$\nx + \ny = (\sx+\sy)_{ \iin} := \nxn + \nyn$.
A \cheap set $\nX$ is given by $\nX = \nXn$, where each $\nXn$ is a
set,  and if we write
$\nx=\nxn$, we have as expected $\nx \in
\nX$ if after some \finite rank $\nxn \in \nXn$. 
Similarly, if $\nf = \nfn: \sX_{\iin}
  \rightarrow \sY_{ \iin}$ is a \cheap function from a \cheap{} set
  $\nX$ to another \cheap set $\nY$, then $\nf(\nx)$ is the cheap
  nonstandard element defined by $\nf(\nx)=\ssf(\sx)_{\iin} :=
  \nfn(\nxn)$.
Every standard function is also a nonstandard function using all these
conventions, as expected. 
%




One key point is that introducing a dependence to a rank parameter
leads to the definition of fully new concepts: infinitely small and large
numbers. A \cheap rational $\nx=\nxn$ is infinitesimal if $0<\nx \le  \si$ for
all standard rational number $0<\si$. 
For example, $\nx=\nxn=1/\vn$ is an infinitesimal. 
A \cheap number can be infinitely large too: as an example, consider
$\comega=\comegan=\nn$ or $\nx=\nxn=2^{\nn}$, greater than any
standard number. Note that the inverse of an infinitely large \cheap
number is a \cheap infinitesimal.

  








 


From the fact that applying a standard operation to a \cheap number is
basically applying it to each possible value of $\iin$, separately,
most of the classical analysis properties on operations can hence be
transferred from the standard framework to the \cheap one: for
example, commutativity and associativity for addition and
multiplication operations.

However this is not always the case when one considers statements on
\cheap objects. One typical illustration that transfering properties
from standard predicates to \cheap ones is not automatic is the law of
excluded middle failure. Repeating \cite{CheapNonStandardAnalysis}:
For
instance, 
the nonstandard real number $\nx=\nxn
:= (-1)^{ \vn}$ is neither positive, negative, nor zero, because none of
the three statements $(-1)^{ \vn} > 0$, $(-1)^{ \vn} < 0$, or $(-1)^{ \vn} =
0$ are true for all sufficiently large $\vn$. 
%
Nevertheless, despite some peculiarities in the manipulation of
statements, 
most of the standard first-order
logic statements remains the same when quantified over \cheap
objects. 
%
We refer to \cite{CheapNonStandardAnalysis} for a very pedagogical and
more complete
discussions about \cheap concepts and some of its properties. 

\olivierOK{Etait ecrit:
We also retain a version of the equally fundamental countable
saturation property of nonstandard analysis, although the cheap
version is often referred to instead as the Arzel\`a-Ascoli
diagonalisation argument. Let us call a nonstandard property $P(\nx)$
pertaining to a nonstandard object $\nx$ satisfiable if, no matter how
one restricts the domain $\Sigma$ from its current value, one can find
an $\nx$ for which $P(\nx)$ becomes true, possibly after a further
restriction of the domain $\Sigma$. The countable saturation property
is then the assertion that if one has a countable sequence $P_1, P_2,
  \ldots$ of properties, such that $P_1(x),\ldots,P_n(x)$ is jointly
satisfiable for any finite $n$, then the entire sequence $P_1(\nx),
  P_2(\nx), \ldots$ is simultaneously satisfiable. The proof proceeds by
mimicking the diagonalisation component of the proof of the
Arzel\`a-Ascoli theorem, which explains the above terminology. 
}



\section{More on NSA: filters and ultrafilters}
\label{filter}

The classical constructions for non-standard analysis are done using
free ultrafilters.

We recall the definition of an ultrafilter over an infinite set $\I$,
called the index set. Typically, for us $\I=\omega$. 

\begin{definition}[Filter]
A filter $U$ over $\I$ is a {non-empty}
set of subsets of $\I$ such that:
\begin{enumerate}
\item $U$ is closed under superset: if $X \in U$, and $X \subset Y$,
  then $Y \in U$.
\item $U$ is closed under finite intersections: If $X \in U$ and $Y
  \in U$, then $X \cap Y \in U$.
\item $\I \in U$, but $\emptyset \not\in U$.
\end{enumerate}
\end{definition}

In particular, since $X \cap X^c=\emptyset$, and $\emptyset \not\in
U$, one cannot have both
$X$ and its complement in $U$.

\begin{lemma}[\FrechetFilter]
  The set of all cofinite (\textit{i.e.}
  complements of finite) subsets of $\I$ is a filter. It is called the \FrechetFilter.
\end{lemma}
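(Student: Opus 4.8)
The plan is to verify the three filter axioms directly for the collection $U$ of all cofinite subsets of $\I$, where $\I$ is infinite. First I would fix notation: a set $X \subseteq \I$ is cofinite when its complement $X^c = \I \setminus X$ is finite, so $U = \{ X \subseteq \I : X^c \text{ is finite} \}$. The work then reduces to three short set-theoretic checks, and the only place where anything could go wrong is the non-emptiness/properness clause, which is exactly where the hypothesis that $\I$ is infinite gets used.

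For the superset axiom, suppose $X \in U$ and $X \subseteq Y \subseteq \I$. Taking complements reverses inclusion, so $Y^c \subseteq X^c$; since $X^c$ is finite, its subset $Y^c$ is finite, whence $Y \in U$. For the finite-intersection axiom, suppose $X, Y \in U$. By De Morgan, $(X \cap Y)^c = X^c \cup Y^c$, a union of two finite sets, which is finite; hence $X \cap Y \in U$. These two steps are purely routine manipulations of complements.

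The third axiom is the one deserving attention. We have $\I \in U$ because $\I^c = \emptyset$ is finite. To see $\emptyset \notin U$, note $\emptyset^c = \I$, and $\I$ is finite precisely when $\I$ is \emph{not} infinite; so it is here that the standing assumption that $\I$ is infinite is essential, and without it the claim fails. Finally, $U$ is non-empty since $\I \in U$. I expect no genuine obstacle: the single subtle point is to remember to invoke the infiniteness of $\I$ exactly once, at the step showing $\emptyset \notin U$, so that $U$ is a proper filter rather than the full power set.
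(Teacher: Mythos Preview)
Your verification is correct and is the standard direct check of the filter axioms. The paper itself states this lemma without proof, treating it as a routine fact (it is only invoked later at the start of the proof of Lemma~\ref{th:foundation}), so there is no alternative argument to compare against; your write-up is exactly the kind of elementary check one would supply if asked.
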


\begin{definition}[Ultrafilter]
An ultrafilter over $\I$ is a filter $U$ over $\I$ with the additional
property that for each $X$, exactly one of the sets $X$ and $\I-X$
belongs to $U$.
A free ultrafilter is an ultrafilter $U$ such that no finite set
belongs to $U$. In the literature, a free ultrafilter is sometimes
called a non-principal ultrafilter (in opposition to principal or
fixed ones that thus contain a smallest element, called the principal
element).
\end{definition}

We just comment in the remaining lines how this relates to 
NSA. In NSA, one fixes
a free ultrafilter $U$.  One also considers sequences indexed by
$\omega$. 
Sequences $(x_i)$ and $(y_i)$ are considered equal iff the
set of indices $i$ such that $x_i=y_i$ is in the fixed free ultrafilter
$U$. Consequently, basically, \cheap analysis corresponds to the case where
$U$ is not a free ultrafilter, but the \FrechetFilter{}.
One deep
interest of the above construction (also called ultraproduct) is that
taking $U$ as a ultrafilter provides a 
transfer theorem (\Los's theorem) that guarantees that any first
order formula is true in the ultraproduct iff the set of indices $i$
when the formula is true 
belongs to the ultrafilter $U$.

%
%
\sabrinaOK{proposition d'explication:
%
}

To some extend, \cheap analysis constructions allow to reason on
objects independantly from the
ultrafilter, in the following sense (missing proofs are in appendix or
in arXiv.org version of this article\footnote{Current reference submit/2240185.}). 

\sabrinaOK{
Notice that this kind of ultraproduct construction is also present in
other areas of Mathematics such as when dealing with notions of large
numbers (such that large cardinals). Actually, this construction can
be related to the \cheap world.}

\begin{theorem}\label{th-allFreeU}
Two \cheap numbers $\na$ and $\nb$, respectively corresponding to the
sequences $(a_i)$ and $(b_i)$, are equal iff for all free ultrafilter
$U$ over $\N$ we have $(a_i)=_U (b_i)$.
\end{theorem}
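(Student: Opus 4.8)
The plan is to reduce the statement to a purely set-theoretic fact about the \emph{set of agreement} $E = \{i \in \N : a_i = b_i\}$. By the definition of equality of \cheap numbers recalled in Section~\ref{CNSA}, the assertion $\na = \nb$ means that $a_i = b_i$ after some \finite rank, i.e.\ that $E$ is cofinite, equivalently that $E$ belongs to the \FrechetFilter. On the other hand, $(a_i) =_U (b_i)$ means exactly that $E \in U$. Thus the theorem amounts to showing that $E$ is cofinite if and only if $E \in U$ for every free ultrafilter $U$ over $\N$.

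First I would prove the forward implication. Assume $E$ is cofinite, so its complement $E^c = \N \setminus E$ is finite. Let $U$ be any free ultrafilter. Since $U$ is free it contains no finite set, hence $E^c \notin U$. Because $U$ is an ultrafilter, exactly one of $E$ and $E^c$ lies in $U$, so $E \in U$. As $U$ was arbitrary, this gives $E \in U$ for all free ultrafilters, as required.

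For the converse I would argue by contraposition. Suppose $E$ is not cofinite, so $E^c$ is infinite. I would then show that the family obtained by adjoining $E^c$ to the \FrechetFilter{} has the finite intersection property: for any cofinite set $C$ one has $C \cap E^c = E^c \setminus C^c$, which is $E^c$ with only the finite set $C^c$ removed, hence still infinite and in particular nonempty. By the ultrafilter lemma (a consequence of the axiom of choice) this family extends to an ultrafilter $U$. This $U$ contains every cofinite set, so it is free, and it contains $E^c$, so $E \notin U$. This exhibits a free ultrafilter witnessing the failure of the right-hand condition, which completes the contrapositive.

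The only delicate point is the construction of the free ultrafilter in the converse: it relies on the ultrafilter lemma, and one must verify the finite intersection property to make the extension legitimate. Everything else is a direct unwinding of the definitions of cofiniteness, of freeness, and of the ultrafilter dichotomy between a set and its complement.
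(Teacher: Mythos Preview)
Your proof is correct and follows essentially the same route as the paper: both directions are argued via the agreement set $E$, the forward implication uses that free ultrafilters avoid finite sets, and the converse is by contraposition, producing a free ultrafilter containing the infinite disagreement set $E^c$. The only cosmetic difference is that the paper isolates the existence of a free ultrafilter containing a prescribed infinite set as a separate lemma (proved via Zorn's Lemma), whereas you obtain it inline from the finite intersection property and the ultrafilter lemma.
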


\olivierOK{Etait ecrit: An more generally, some relation holds for \cheap elements
iff it holds for all free ultrafilter. pas tout a fait: on sait le dire mieux sans gros mots?
}

The following lemma is based on a statement from
\cite{keisler1976foundations}.  For selfcontentness, and completeness
we provide its proof, mostly repeating  \cite[Theorem
1.42]{keisler1976foundations} but proving also the required
extension, as we need a variation of it. 

\begin{lemma}[Folklore] \label{th:foundation}
For every infinite set $\I$, there exists a free ultrafilter over
$\I$. 
Fix some infinite set $X_0$. There exists a free ultrafilter over
$\I$ that contains $X_0$. 
\end{lemma}

\begin{proof} 
The set of all cofinite (complements of finite) subsets of $\I$ is a
filter over $\I$ (called the \FrechetFilter). 

The set of all cofinite subsets $Y$ of $\I$ and of $Y$ such that  $Y^c
\cap X_0$ is finite is also a
filter.

Let $A$ be the set of all
filters $F$ over $\I$ 
such that $F$ contains 
all cofinite subsets of
$\I$ and all $Y$ such that $Y^c \cap X_0$ is finite.

 Then $A$ is nonempty and $A$ is closed under unions of chains. 
By Zorn's Lemma, $A$ has a maximal element $U$ (in fact, infinitely
many maximal elements). 

$U$ is a filter and contains no finite set, because $U$ contains all
cofinite sets but $\emptyset \not\in  U$. 


(resp. Furthermore, for  $Y \in U$, $X_0 \cap Y$ is infinite, because
$U$ contains all $Z=Y^c$ such that 
$X_0 \cap Z^c = X_0 \cap Y$ is finite: otherwise $Y \cap Z= Y \cap Y^c=\emptyset$ but
$\emptyset \not\in  U$)

To show that $U$ an ultrafilter, we consider an arbitrary set $X
\subset \I$  and prove that there is a filter $V \supset U$  which
contains either $X$ or $\I - X$, so by maximality, $X \in U$ or $\I
- X \in U$.

Case $1$: For all $Y \in U$, $X \cap Y$ is infinite. $X$ and each $Y
\in U$ belong to the set $V =\{Z \subset \I | :Z \supset X \cap Y
\mbox{ for
some } Y \in U\}.$

$V$ is a filter over $\I$, because $V$ is obviously closed under
supersets and finite intersections, and the hypothesis of Case 1
guarantees that each $Z\in V$ is infinite.

Case 2: For some $Y \in  U$, $X \cap Y$ is finite. Then for every $W
\in  U$, $(\I-  X) \cap W$ is infinite, for otherwise $Y \cap W \in
U$  would be finite. Case 1 applies to $\I -  X$, so the set
$V =\{Z \subset \I:Z \supset (\I- X) \cap Y \mbox{for some Y } \in U\}$ is
a filter over $\I$ such that $V \subset U$, $\I- X \in V$.

We see that $X$ belongs to $U$ iff  for all $Y \in U$, $X \cap Y$ is
infinite. In particular, $X_0$ belongs to $U$ iff  for all $Y \in U$, $X_0 \cap Y$ is
infinite. Hence, $X_0 \in U$.
\end{proof}

We now go to the proof of Theorem \ref{th-allFreeU}.

\begin{proof}
Fix a free ultrafilter $U$. 
Suppose that $(a_i)$ and $(b_i)$ represent the same \cheap
number. After some \finite rank 
$a_i=b_i$. Then $\{i| a_i=b_i \}$ is in $U$ (as its
complement is finite, and hence not inside). So $(a_i)=_U (b_i)$ for that
free ultrafilter.

\olivierOK{New V6: la que super important que ensemble des meta finis est infini}
Conversely, assume that for all rank $n_0$, there is a rank $n \ge n_0$
with $a_n \neq b_n$. Then  $X=\{n| a_n \neq b_n \}$ is infinite. By
 Lemma \ref{th:foundation}, one can build a free ultrafilter $U$ with $X \in
U$. Hence $(a_i) \neq_U (b_i)$ for that
free ultrafilter $U$.
\end{proof}


\section{Computability for Integers or Rational Numbers}
\label{sec:computabilityintegers}

\subsection{Very Basic Notions From Computability}

We assume some basic familiarity with computability theory. In
computability theory, any integer $\sn \in \N$ is computable: there
exists some Turing machine $M$ that writes $\sn$ in binary.
However, not all total functions $\ssf: \N \to \N$ are computable (to
avoid ambiguities we will say \totalrecursive for ``computable'' in
this context): there does not always exist some Turing machine $M$
that takes as input $\sn$ in binary and outputs $\ssf(\sn)$ in binary.
An example of \totalrecursive function is $\sx\mapsto \sx+1$. An
example of a \totalnonrecursive function is the function $\gammaun$
which maps $\sn$ to $\sx_{\sn}+1$, where $\sx_{\sn}$ is the output of
the $n$th Turing machine on input $\sn$, for a given (non-assumed
computable) enumeration of terminating Turing machines. In what
follows $\gammaun$ will denote such a non \totalrecursive
function. 

We used the wording ``Turing machines'', but it is well known that the set of
\totalrecursive functions can be defined abstractly without referring
to Turing machines: this is the smallest set that contains the constant
function $0$, the successor function $\succ(\sx)=\sx+1$, projection
functions, and closed under composition, primitive recursion, and safe
minimization. Safe minimization is minimization over safe predicates,
that is to say predicates 
$P(\sn,\sm)$ where for all $\sn$ there is a $\sm$ with
$\sP(\sn,\sm)=1$
. 

\olivierOK{MFCS: La patch property n'est pas jolie. En fait, en gros,
  c'est le
  fait que pour une suite finie $n_1,n_2,\dots,n_k$, on peut définir
  une fonction $s(.)$ qui vaut $n_i$ en $i$, et ensuite $if(n \le k,
  s,\ssf)$ et que tout cela est bine parmi les fonctions calculables. 

ENLEVE:

The following \patchproperty{} is also true for \totalrecursive
functions: for any total function $\ssf: \N \to \N$, for any \finite
$n_0$, if there is some \totalrecursive function $\sg: \N \to \N$ such
that $\sg(n)=\ssf(n)$ for all $\sn \ge n_0$, then there is a
\totalrecursive function $\sh: \N \to \N$ such that
$\sh(\sn)=\ssf(\sn)$ for all $\sn \in \N$.
}

\sabrina{On a besoin de la patchproperty dans le Théorème \ref{besoin
    patch}. Je propose de remettre ce qui suit :}

\olivier{OK. Voir si on garde/utile}

\olivier{NOUVEAU. JE PARLE DE CALCULABILITE GENERALISE. MAIS UTILE QUE IS ON S'EN SERT.}

We will several times use the following easy remark: 
If a function is computable for all arguments above a
certain rank, then this function is computable. More formally:
\begin{theorem}[Computability for all indices]\label{pr-patchproperty}
  For any total function $\ssf: \N \to \N$, for any \finite $n_0$, if
  there is some \totalrecursive function $\sg: \N \to \N$ such that
  $\sg(n)=\ssf(n)$ for all $\sn \ge n_0$, then there is a
  \totalrecursive function $\sh: \N \to \N$ such that
  $\sh(\sn)=\ssf(\sn)$ for all $\sn \in \N$.
\end{theorem}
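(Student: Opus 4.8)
The plan is to define $\sh$ so that it coincides with $\ssf$ on the finitely many arguments below $n_0$ and with $\sg$ from $n_0$ onwards, and then to check that such an $\sh$ lies among the \totalrecursive functions. Concretely, I would set
\[
\sh(\sn) = \left\{ \begin{array}{ll} \ssf(\sn) & \mbox{if } \sn < n_0, \\ \sg(\sn) & \mbox{if } \sn \ge n_0. \end{array} \right.
\]
Correctness is then immediate: for $\sn \ge n_0$ one has $\sh(\sn) = \sg(\sn) = \ssf(\sn)$ by the hypothesis on $\sg$, while for $\sn < n_0$ one has $\sh(\sn) = \ssf(\sn)$ by construction, so $\sh$ and $\ssf$ agree on all of $\N$.

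The real content is to verify that $\sh$ is computable. Since $\ssf$ is total, the values $a_{\si} := \ssf(\si)$ for $\si < n_0$ are genuine, fixed natural numbers. There are only $n_0$ of them, so they can be hard-coded: there is a \totalrecursive function returning $a_{\si}$ on input $\si$ for each $\si < n_0$ (a finite table lookup, built from the constant functions, the successor, the projections, and a bounded case distinction on the decidable predicate $\sn = \si$). Combining this table with $\sg$ through the definition by cases above amounts to one further if-then-else on the decidable predicate $\sn < n_0$, which keeps us inside the \totalrecursive functions by closure under composition and primitive recursion. Hence $\sh$ belongs to the \totalrecursive functions.

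The one point that deserves care is conceptual rather than computational: we do not assume that $\ssf$ itself is computable, and indeed $\ssf$ may well be the non-recursive $\gammaun$ introduced above. The construction is therefore non-uniform, in that we cannot produce a description of $\sh$ effectively from a description of $\ssf$; but we do not need to. The hypothesis that $\ssf$ is total is precisely what guarantees that the initial values $a_0, \ldots, a_{n_0-1}$ exist as specific integers, and being computable is a property of the function, that is of its graph, not of any particular procedure that names it. Once this is granted, the statement reduces to the routine closure fact that any finite modification of a \totalrecursive function is again a \totalrecursive function, which follows directly from the closure properties recalled before the theorem.
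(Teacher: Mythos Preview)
Your argument is correct and is exactly the standard one: patch the finitely many values below $n_0$ by a hard-coded table and use $\sg$ from $n_0$ on, then invoke closure of the \totalrecursive functions under definition by cases on a decidable predicate. The paper does not give a proof of this statement at all; it is introduced merely as an ``easy remark'' and used freely thereafter, so your write-up is precisely the routine verification the paper leaves implicit.
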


\sabrinaOK{probablement à rendre plus joli mais je ne vois pas comment faire ça pour l'instant}



\olivierOK{MFCS: Remarques}

\olivierOK{
COMPOSITION:
Supposons qu'on aie $\nx=\nxn=\ssf(\iin)$, et $\ny=\nyn=\sg(\iin)$

(*) SUPPOSONS que $\sg(\iin) \ge \iin$. (pour ce qui suit ne soit pas
problématique avec des décalages négatifs). 

Alors $\shiftd{\nx}{(\ny-\myomega)}=(\shiftd{\nx}{(\ny-\myomega)})_{\iin}=
\sx_{\iin + \sy_{\iin}-\myomega_{\iin}}=\ssf(\iin+\sg(\iin)-\iin)=\ssf(\sg(\iin))$

On a la composition de fonction modulo l'hypothèse (*) quand on compose.
}

\olivierOK{To Sabrina: D'accord avec ces affirmations?}

\sabrinaOK{D'accord, joli, mais ne pas oublier de définir $\myomega$}

\olivierOK{
RECURSION PRIMITIVE:  Supposons qu'on définit $\ssf$ par récursion
primitive à partir de $\sg$ et $\sh$. I.e:
$$\ssf(0,\sy)=\sg(y)$$
$$\ssf(\sx+1,\sy)=\sh(\ssf(\sx,\sy),\sx,\sy)$$

Si on s'autorise à considérer des triplets on peut considérer le triplet
$\nz=\nzn=(\ssf(\iin,\sy),\iin,\sy)$.

Il vérifie (o) $$\shiftu{\nz}=\sF(\nz)$$
si on prend $\sF(\ssf,\sx,\sy)=(\sh(\ssf,\sx,\sy),\sx+1,\sy)$ pour $\sx \neq 0$ et
$\sF(\ssf,\sx,\sy)=(\sg(\sy),1,\sy)$ pour $\sx=0$.

Si on parle de couples, on peut considérer le couple
$\nz=\nzn=(\ssf(\iin,\sy),\iin)$.

Il vérifie  (o') $$\shiftu{\nz}=\sG(\nz,\sy)$$ si on prend $\sG((\ssf,\sx),\sy)=(\sh(\ssf,\sx,\sy),\sx+1)$ pour $\sx \neq 0$ et
$\sG(\ssf,\sx)=(\sg(\sy),1)$ pour $\sx=0$.

Du coup, il a toujours parmi les solutions de $(o')$ au moins une
solution avec $\nzn=f(\iin,\sy)$.

Ca donne d'une certaine façon la fonction $\ssf$.

Donc toutes les
fonctions qu'on peut définir par récursion primitive.

Sauf, que l'on obtient la fonction qui à un indice associe $f$ pas à
partir d'un codage similaire de $g$ et $h$. 

}

\olivierOK{To Sabrina: D'accord avec ces affirmations?}

\sabrinaOK{D'accord mais $\nz=\nzn=(\ssf(\iin,\sy),\iin)$ (manque un argument ds la solution ?)}

\olivierOK{MFCS: autremetn dit, en fait, notre propriété shift c'est
  un truc qui permet essentiellement de faire des compositions de
  fonctions. Donc, avoir la stabilité par shift, ressemble à dire
  qu'on regarde des fonctions closes par composition.}

\subsection{Computable \CheapM  Numbers}
\label{sec:apartir}

\olivierOK{MFCS: Est-ce pertinant t judicieux d'écrire cela comme ca?}

In the literature, no discussions exist about the computability of
numbers: any standard number $\sn \in \A$ is computable. But here
\cheap integer or \cheap rational numbers may not be computable:

\begin{definition}[Computable \cheap number] 
A \cheap number $\nx=\nxn$ is computable if $\nxn$ seen as a sequence from $\omega$
to $\A$ is \totalrecursive.
\end{definition}

For example, $\nx=\nxn=\gammaun(\vn)$ is not a computable \cheap
integer.  Computable cheap non-standard integers include
$\comega=\comegan=\vn$.

\olivierOK{Ajout'e' car je crois qu'il est important d'expliquer le
  pourquoi du comment de cette discussion} Our purpose is first to
understand to what corresponds the subset of the computable \cheap
numbers among all \cheap numbers: can it be defined abstractly, i.e. taking \cheap analysis as a
basis (i.e. in the spirit of \cite{keislerbook} that presents mathematical
calculus taking  NSA as a basis)? 
 
\olivierOK{fin ajoute}

The following facts are easy: As usual $\ominus(x,y)$ denotes $\max(0,x-y)$.

\begin{theorem}[Stability by \totalrecursive functions]
For any \cheap computable numbers $\nx_1$, $\nx_2$, \dots, $\nx_k$, for
any standard \totalrecursive function $\ssf: \A^k \to \A$, we have that 
$\ssf(\nx_1,\nx_2,\dots,\nx_k)$ is a
computable \cheap number.
\end{theorem}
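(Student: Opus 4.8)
The plan is to reduce the claim to two standard closure properties of the class of \totalrecursive functions, namely closure under composition and under tupling, exactly the closure properties recalled in the abstract characterisation of Section \ref{sec:computabilityintegers}.

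First I would unwind the definitions. Write each input \cheap number as a sequence, say $\nx_j = (g_j(\iin))_{\iin}$ for $j \in \{1, \dots, k\}$, where $g_j : \omega \to \A$. By the definition of a computable \cheap number, each $g_j$, seen as a map $\omega \to \A$, is \totalrecursive. Recalling the componentwise convention from Section \ref{CNSA} for applying a standard operation to \cheap numbers, the \cheap number $\ssf(\nx_1, \dots, \nx_k)$ is by definition the sequence whose value at rank $\iin$ is $\ssf(g_1(\iin), \dots, g_k(\iin))$. Thus the whole statement amounts to proving that the sequence $\iin \mapsto \ssf(g_1(\iin), \dots, g_k(\iin))$, seen as a map $\omega \to \A$, is \totalrecursive.

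Next I would assemble the components. Let $h : \omega \to \A^k$ be defined by $h(\iin) = (g_1(\iin), \dots, g_k(\iin))$. Since each $g_j$ is \totalrecursive and the \totalrecursive functions are closed under forming tuples (using a standard computable pairing together with the projection functions available in the abstract characterisation recalled in Section \ref{sec:computabilityintegers}), the map $h$ is \totalrecursive. The target sequence is then exactly the composition $\ssf \circ h$, and since both $\ssf$ and $h$ are \totalrecursive, closure under composition gives that $\ssf \circ h$ is \totalrecursive. This is precisely the assertion that $\ssf(\nx_1, \dots, \nx_k)$ is a computable \cheap number, which completes the argument.

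There is essentially no hard step here: the result is a routine composition argument. The only point requiring a little care is that $\A$ may be $\Q$ rather than $\N$, so that the notion of \totalrecursive function must be understood through a fixed effective encoding of $\Q$ (equivalently of $\A^k$) into $\N$; one only needs to check that the chosen pairing and the encoding of rationals are compatible, i.e. that composition and tupling of encoded \totalrecursive functions remain \totalrecursive, which is standard. Note also that no appeal to Theorem \ref{pr-patchproperty} is needed, since all the sequences involved are genuinely total on all of $\omega$ and not merely beyond some rank.
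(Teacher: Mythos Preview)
Your argument is correct: unwinding the definitions and invoking closure of \totalrecursive functions under tupling and composition is exactly what is needed, and your remark about the encoding of $\Q$ is the only point requiring any care. The paper itself gives no proof of this statement at all, introducing it only with the phrase ``The following facts are easy,'' so your write-up simply makes explicit the routine verification the authors chose to omit.
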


\olivierOK{New V6: Attention: pas toujours vrai cas ordinal. Pour raison
  evidente}

\begin{theorem}[Basic properties]
The set of \cheap computable natural numbers is a semiring: In
particular, it is stable by $+$, $\ominus$,
$\cdot$. 
The set of \cheap computable rational numbers is a ring: In
particular, it is stable by $+$, $-$, inverse, $\cdot$. 
\end{theorem}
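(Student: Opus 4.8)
The plan is to derive the statement almost entirely from the preceding stability theorem (Stability by \totalrecursive functions), reducing each closure claim to the fact that the relevant arithmetic operation is a standard \totalrecursive function on $\A$. First I would dispose of the algebraic axioms: since every operation on \cheap numbers acts componentwise, commutativity, associativity and distributivity of $+$ and $\cdot$, and (over $\Q$) the additive-inverse law, are inherited directly from the corresponding identities in $\N$ and $\Q$, while the neutral elements $0$ and $1$ are computable \cheap numbers, being constant sequences and hence trivially \totalrecursive. So the real content is to show that the computable \cheap numbers are \emph{closed} under each listed operation.

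For $+$, $\ominus=\max(0,x-y)$ and $\cdot$ over $\N$, and for $+$, $-$ and $\cdot$ over $\Q$, each operation is a \totalrecursive function $\A^2\to\A$ (in the case of $\ominus$, total precisely because it is the truncated difference). Applying the stability theorem with $k=2$ to each of these functions immediately gives that the componentwise result is again a computable \cheap number, so closure under all of these is routine.

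The one delicate case, which I expect to be the main obstacle, is the multiplicative inverse over $\Q$: the map $x\mapsto 1/x$ is \emph{not} total on $\Q$, so the stability theorem does not apply verbatim. I would handle it in either of two equivalent ways. One option is to extend inversion to a total \totalrecursive map $\mathrm{inv}\colon\Q\to\Q$ via the convention $\mathrm{inv}(0)=0$, apply the stability theorem to $\mathrm{inv}$, and then observe that a \cheap rational $\nx=\nxn$ is invertible in the ring exactly when $\nxn\neq0$ after some \finite rank, in which case $\mathrm{inv}(\nx)$ agrees with the genuine inverse $\nx^{-1}$ from that rank on and therefore equals it as a \cheap number. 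Alternatively, given such an $\nx$ with $\nxn\neq0$ for all indices $\iin\ge n_0$, the tail map $\iin\mapsto 1/\nxn$ is \totalrecursive on $\{\iin\ge n_0\}$, and Theorem~\ref{pr-patchproperty}, transported from $\N\to\N$ to $\N\to\Q$ through a standard computable coding of $\Q$, patches it into a \totalrecursive sequence on all of $\omega$ representing $\nx^{-1}$. Either route shows that the inverse of a computable invertible \cheap rational is again computable, which finishes the ring claim.
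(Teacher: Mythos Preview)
Your proposal is correct and follows exactly the approach the paper intends: the theorem is stated without proof immediately after the stability theorem, as an essentially direct consequence of it, and your write-up is the natural elaboration of that implication. Your explicit treatment of the multiplicative inverse via a totalized $\mathrm{inv}$ (or via Theorem~\ref{pr-patchproperty}) is more careful than anything the paper spells out, but it is precisely the missing detail one would want and does not diverge from the paper's line of argument.
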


\begin{theorem}[First characterization]\label{th:tr}
  \begin{itemize}
  \item The set of \cheap computable numbers is the smallest set that
    contains $\comega$ and that is stable by standard \totalrecursive
    $\ssf: \NN \to \A$.
\item  The set of \cheap computable numbers is also the set of
  $\ssf(\comega)$ for \totalrecursive standard $\ssf: \NN \to \A$.
  \end{itemize}
\end{theorem}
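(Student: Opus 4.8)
The plan is to establish the second bullet first, as it is the crux, and then read off the first from it. Everything rests on a single observation: applying a standard \totalrecursive function $\ssf$ to the canonical infinitely large number $\comega$ reproduces, componentwise, exactly the sequence $\ssf$. Indeed, since $\comegan = \vn$, the componentwise convention for function application gives $(\ssf(\comega))_{\iin} = \ssf(\comegan) = \ssf(\vn)$, so the $\iin$-th component of $\ssf(\comega)$ is simply $\ssf(\iin)$. This is what makes the notions ``\totalrecursive function'' and ``computable defining sequence'' interchangeable.

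For the second bullet I would prove both inclusions. If $\ssf : \NN \to \A$ is \totalrecursive, then the defining sequence $\iin \mapsto \ssf(\iin)$ of $\ssf(\comega)$ is \totalrecursive, so $\ssf(\comega)$ is a computable \cheap number by definition. Conversely, let $\nx = \nxn$ be any computable \cheap number; by definition its defining sequence $\iin \mapsto \nxn$ is a \totalrecursive map, so taking $\sg : \NN \to \A$ to be exactly this map yields a \totalrecursive function with $\sg(\comega) = \nx$ by the observation above. Hence the computable \cheap numbers are precisely the $\ssf(\comega)$, which is the second bullet.

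For the first bullet, write $S$ for the set of computable \cheap numbers and $T$ for the smallest set containing $\comega$ and stable under standard \totalrecursive $\ssf : \NN \to \A$. The inclusion $T \subseteq S$ holds because $\comega \in S$ (its defining sequence is the identity $\iin \mapsto \iin$, which is \totalrecursive) and $S$ is stable under such $\ssf$: the defining sequence of $\ssf(\nx)$ is the composition of two \totalrecursive maps, hence \totalrecursive (for $\A = \N$ this is exactly the stability theorem with $k = 1$). Since $T$ is the intersection of all such sets and $S$ is one of them, $T \subseteq S$. The reverse inclusion $S \subseteq T$ follows directly from the second bullet: any $\nx \in S$ equals $\sg(\comega)$ for some \totalrecursive $\sg$, and since $\comega \in T$ and $T$ is stable under $\sg$, we get $\nx \in T$. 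Thus $S = T$.

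I do not expect a genuinely hard step; the work is entirely in the bookkeeping. The one point to handle with care is the case $\A = \Q$, where the generating functions run $\NN = \N \to \Q$ while $\comega$ is integer-valued. There one should check that the closure defining $T$ adds nothing beyond the single applications $\ssf(\comega)$: reapplying a generating function either collapses by composition (when $\A = \N$, since $\sg \circ \ssf$ is again \totalrecursive) or is vacuous on the non-integer components it cannot be fed (when $\A = \Q$), so $T$ is already exhausted by the elements $\ssf(\comega)$, in agreement with the second bullet.
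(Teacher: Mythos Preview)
Your argument is correct and matches the paper's: the key observation in both is that a \cheap number $\nx=\nxn$ is computable iff $\nxn=\ssf(\iin)$ for some \totalrecursive $\ssf$, which is precisely $\nx=\ssf(\comega)$. The paper establishes the first bullet directly from this and then notes the second is an immediate corollary, whereas you reverse the order; the content is the same.

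Your final paragraph on the $\A=\Q$ case is addressing a genuine looseness in how the closure is to be read (since a function $\NN\to\Q$ cannot be reapplied to a rational-valued result), which the paper's own proof does not comment on. Your resolution---that after one application the closure stabilizes---is the right reading, though you could state it more crisply: in either case $T$ consists exactly of the $\ssf(\comega)$, because in the $\N$ case a second application $\sg(\ssf(\comega))=(\sg\circ\ssf)(\comega)$ is already of that form, and in the $\Q$ case no second application is available.
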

\begin{proof}
  The \Cheap integer $\comega$ is computable. When $\nx$ is computable
  and $\ssf$ is a standard \totalrecursive function, then, $\ssf(\nx)$
  is computable. Now, from definitions $\nx=\nxn$ is computable, iff
  $\nxn=\ssf(\iin)$ for some standard \totalrecursive $\ssf$, hence
  $\nx=\ssf(\comega)$. First item follows.

  Second item is a direct corollary of above reasoning. 
\end{proof}

\subsection{\Shift Operation and Preservation Property}

\olivierOK{MFCS: RAPPEL de note importante: en fait, notre propriété shift c'est
  un truc qui permet essentiellement de faire des compositions de
  fonctions. Donc, avoir la stabilité par shift, ressemble à dire
  qu'on regarde des fonctions closes par composition.}

The previous properties can also be stated in another alternative way:
Consider the following operation $\shift$ that maps \cheap numbers to \cheap
numbers: 

\begin{definition}[Shift operation]
  Whenever $\nx=\nxn$, $\shiftu{\nx}$ is defined by
  $\shiftu{\nx}=(\shiftu{\nx})_{\iin}=\sx_{\iin+1}$.
\end{definition}

Notice that 
$\shiftu{\sn}=\sn$ for all standard integer $\sn$.
However, $\shiftu{\nx}$ is not necessarily $\nx$ for a \cheap
$\nx$. In particular, $\shiftu{\comega}=\comega+1$. In other words,
using a non-standard analysis inspired vocabulary, $\shift$
is not an internal operation. 

\olivierOK{Added V6}
\begin{theorem}\label{besoin patch} \label{th:avecpatch}
The set of computable \cheap numbers is the smallest set
that contains all solutions of $\shiftu{\nx}=\ssf(\nx)$ for $\ssf$ standard total
recursive,
and that is stable by standard \totalrecursive $\ssf: \NN
\to \A$.
\end{theorem}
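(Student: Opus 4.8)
The plan is to prove two inclusions between the set of computable \cheap numbers and the smallest set $S$ satisfying: (i) $S$ contains every solution $\nx$ of $\shiftu{\nx}=\ssf(\nx)$ with $\ssf$ standard \totalrecursive, and (ii) $S$ is stable under standard \totalrecursive functions. For the inclusion of $S$ into the set of computable \cheap numbers, I would verify that the latter itself satisfies (i) and (ii), so that minimality of $S$ forces the inclusion. Condition (ii) is exactly the already-proved ``Stability by \totalrecursive functions'' theorem, so nothing new is needed there. For (i), let $\nx=\nxn$ solve $\shiftu{\nx}=\ssf(\nx)$; unwinding the definitions of the shift and of \cheap equality, the identity $\sx_{\iin+1}=\ssf(\sx_{\iin})$ holds after some \finite rank $n_0$, so for $\iin\ge n_0$ one gets $\sx_{\iin}=\ssf^{(\iin-n_0)}(\sx_{n_0})$, where $\ssf^{(k)}$ denotes the $k$-fold iterate of $\ssf$. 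This tail is a \totalrecursive function of $\iin$ (it iterates the \totalrecursive $\ssf$ a computable number of times from the fixed value $\sx_{n_0}$) and it agrees with $\nxn$ for all $\iin\ge n_0$; hence by the patch property (Theorem~\ref{pr-patchproperty}) the \cheap number $\nx$ is computable.

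For the converse inclusion I would invoke the first characterization (Theorem~\ref{th:tr}): every computable \cheap number equals $\ssf(\comega)$ for some standard \totalrecursive $\ssf$. It therefore suffices to show $\comega\in S$ and then apply (ii). Now $\comega=\comegan$ with $\comegan=\iin$ satisfies $(\shiftu{\comega})_{\iin}=\iin+1=\comegan+1$, that is $\shiftu{\comega}=\ssf_0(\comega)$ for the \totalrecursive successor $\ssf_0(\sx)=\sx+1$; thus $\comega$ is a solution of a shift equation and lies in $S$ by (i). Applying (ii) to $\ssf$ gives $\ssf(\comega)\in S$, and combining the two inclusions yields the claimed equality.

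The main obstacle is condition (i) in the first inclusion: a solution of $\shiftu{\nx}=\ssf(\nx)$ obeys the recurrence only eventually, and the finitely many initial components of an arbitrary representative are unconstrained, so one cannot simply declare the given sequence \totalrecursive. The patch property is precisely what bridges this gap, letting us replace the uncontrolled initial segment while keeping the representing sequence \totalrecursive and leaving the represented \cheap number unchanged. Everything else reduces to the routine observation that $\comega$ solves $\shiftu{\nx}=\nx+1$.
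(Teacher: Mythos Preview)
Your argument is correct and follows essentially the same route as the paper's proof: both directions hinge on the observation that $\comega$ solves $\shiftu{\nx}=\nx+1$ (so Theorem~\ref{th:tr} gives one inclusion), and that any solution of $\shiftu{\nx}=\ssf(\nx)$ satisfies $\sx_{\iin_0+\ik}=\ssf^{[\ik]}(\sx_{\iin_0})$ after some finite rank, with the patch property (Theorem~\ref{pr-patchproperty}) supplying computability on all indices. Your write-up is in fact more explicit about the two-inclusion structure than the paper's own proof.
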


\olivierOK{MFCS: est ce qu'on veut parler d'équations plus générale du
  type (o') et de primitif récursif...}

\begin{proof}
  \Cheap $\comega$ integer can be obtained as a solution of
  $\shiftu{\comega}=\comega+1$. Hence this class contains all
  computable \cheap numbers from above statements.

  We only need to state that \cheap numbers are stable by such a
  $\shift$ equation: Assume that $\nx=\nxn$ is solution of
  $\shiftu{\nx}=\ssf(\nx)$. Then after some \finite rank $\iin_0$, we
  must have $\sx_{\iin_0+\ik}=\ssf^{[\ik]}(\sx_{\iin_0})$, where
  $\ssf^{[\ik]}$ denotes $\ik$th iteration of $\ssf$ (computability of
  $\sx_{\iin_0}$ follows from Theorem \ref{pr-patchproperty}). This
  yields computability for indices $\iin \ge \iin_0$. And hence, this
  yields computability for all $\iin$ by Theorem
  \ref{pr-patchproperty}.
\end{proof}




A key remark is that the unary $\shift$ operation can actually be
extended to a binary operation. A \cheap element of $\omega$ is called a
\cheap index.

\begin{definition}[$Shift$]
Given some \cheap number $\nx=\nxn$ and some \cheap index
$\ny=\nyn$, let $\shiftd{\nx}{\ny}$ be defined by $\shiftd{\nx}{\ny}= (\shiftd{\nx}{\ny})_{\iin}=
\sx_{\iin + \sy_{\iin}}.$
\end{definition}

It can be checked that this is a valid definition: its value is
independant of the representative. It can also be checked that it
satisfies $\shiftd{\nx}{0}=\nx$, $\shiftd{\nx}{1}=\shiftu{\nx}$,
$\shiftd{\nx}{(\ny+1)}=\shiftu{(\shiftd{\nx}{\ny})}$,
$\shiftd{\nx}{(\ny+\nz)}=\shiftd{(\shiftd{\nx}{\ny})}{\nz}$ for any
\cheap number $\nx$ and cheap indices $\ny$ and $\nz$.

\olivierOK{On est bien d'accord que ca marche toujours,
  meme si ordinaux !!!}


\begin{theorem}
Assume that $\nx$ and $\ny$ are computable. Then
$\shiftd{\nx}{\ny}$ is computable. 
\end{theorem}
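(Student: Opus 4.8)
The plan is to reduce the claim to the closure of the \totalrecursive functions under composition and under addition. Unfolding the definition of computability for \cheap numbers, the hypothesis that $\nx=\nxn$ is computable yields a standard \totalrecursive function $\ssf: \omega \to \A$ with $\sx_{\iin}=\ssf(\iin)$ for all $\iin$, and the hypothesis that the \cheap index $\ny=\nyn$ is computable yields a standard \totalrecursive function $\sg: \omega \to \omega$ with $\sy_{\iin}=\sg(\iin)$ for all $\iin$. By the defining equation for $\shiftd{\nx}{\ny}$ we then have, at every index,
$$(\shiftd{\nx}{\ny})_{\iin} = \sx_{\iin+\sy_{\iin}} = \ssf(\iin+\sg(\iin)).$$

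First I would note that the auxiliary map $\iin \mapsto \iin + \sg(\iin)$ is \totalrecursive: it is the composition of the (standard, \totalrecursive) addition function with the identity/projection map and with $\sg$. Composing this map on the outside with the \totalrecursive function $\ssf$ shows that $\iin \mapsto \ssf(\iin+\sg(\iin))$ is \totalrecursive. Since, by the displayed identity, this sequence is exactly a representative of $\shiftd{\nx}{\ny}$ --- and the equalities hold at every index, not merely after some \finite rank, so that no appeal to Theorem \ref{pr-patchproperty} is needed --- the \cheap number $\shiftd{\nx}{\ny}$ is computable, which is the assertion.

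There is really no substantial obstacle here; the only point deserving attention is precisely why the argument stays this simple. Because $\ny$ is a \cheap index, each value $\sy_{\iin}=\sg(\iin)$ lies in $\omega$, so the shifted index satisfies $\iin+\sg(\iin) \ge \iin \ge 0$ and never leaves the domain of $\ssf$; the outer composition is therefore genuinely total and introduces no partiality. This should be contrasted with the backward-shift situation (shifting against an index such as $\ny-\comega$), where a hypothesis forcing the shifted index to remain nonnegative is required in order to avoid undefined evaluations. No such hypothesis is needed for $\shiftd{\nx}{\ny}$. Finally, the independence of the value of $\shiftd{\nx}{\ny}$ from the chosen representatives has already been recorded when the operation was introduced, so it may be used freely.
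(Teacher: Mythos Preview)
Your argument is correct and is precisely the natural one: unfold the definitions to write $(\shiftd{\nx}{\ny})_{\iin}=\ssf(\iin+\sg(\iin))$ and observe that this is a composition of \totalrecursive functions. The paper in fact states this theorem without proof, so your write-up (including the remark on why $\iin+\sg(\iin)$ stays in the domain of $\ssf$) supplies exactly what the paper leaves implicit.
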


From previous definitions, we derive easily the following preservation
property.

\begin{theorem}[Preservation property] \label{pr-transfer}
Let $\sP$ be some standard property over the numbers. If 
$\sP(\nn_1,\dots,\nn_k)$ holds for non standard numbers
$\nn_1,\dots,\nn_k$, then $\sP(\shiftu{\nn_1},\dots,\shiftu{\nn_k})$
holds.

More generally, 
$\sP(\shiftd{\nn_1}{\nn},\dots,\shiftd{\nn_k}{\nn})$ holds 
for all \cheap index $\nn$. 
\end{theorem}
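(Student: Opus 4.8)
The plan is to unfold the meaning of a standard property holding on \cheap numbers and then to observe that the reindexing induced by the \shift operation moves ranks only forward. Recall from the conventions of Section~\ref{CNSA} that $\sP(\nn_1,\dots,\nn_k)$ holding for the \cheap numbers $\nn_1,\dots,\nn_k$ means precisely that there is a \finite rank $N$ such that the standard relation $\sP((\nn_1)_{\iin},\dots,(\nn_k)_{\iin})$ holds for every $\iin \ge N$, where $(\nn_j)_{\iin}$ denotes the $\iin$-th component of a representative of $\nn_j$. The whole argument is then a matter of tracking how this tail of good ranks behaves under shifting.

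I would first establish the general statement, the unary assertion being the special instance where $\nn$ is the constant \cheap index equal to $1$. By the definition of the binary \shift operation, the $\iin$-th component of $\shiftd{\nn_j}{\nn}$ equals $(\nn_j)_{\iin + \nn_{\iin}}$, and the shift amount $\nn_{\iin}$ is the same for every coordinate $j$. Since $\nn$ is a \cheap index, each component $\nn_{\iin}$ lies in $\omega$ and is therefore non-negative, so that $\iin + \nn_{\iin} \ge \iin$. Hence for every $\iin \ge N$ one has $\iin + \nn_{\iin} \ge N$, and the standard relation $\sP((\nn_1)_{\iin+\nn_{\iin}},\dots,(\nn_k)_{\iin+\nn_{\iin}})$ holds, all coordinates being evaluated at the single common rank $\iin + \nn_{\iin}$. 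This is exactly the assertion that $\sP((\shiftd{\nn_1}{\nn})_{\iin},\dots,(\shiftd{\nn_k}{\nn})_{\iin})$ holds for all $\iin \ge N$, that is, that $\sP(\shiftd{\nn_1}{\nn},\dots,\shiftd{\nn_k}{\nn})$ holds. Specialising $\nn$ to the constant index $1$ replaces each $\shiftd{\nn_j}{\nn}$ by $\shiftu{\nn_j}$ and recovers the first assertion with witnessing rank $\max(0,N-1)$.

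There is essentially one point to verify, and it carries the entire proof: the shift amount must never decrease the index. The set of ranks witnessing $\sP(\nn_1,\dots,\nn_k)$ is an upward-closed tail $[N,\infty)$, so any reindexing $\iin \mapsto \iin + \nn_{\iin}$ that never falls below $\iin$ keeps us inside this tail, whereas a backward shift could drop below $N$ and destroy the property. This is precisely why the statement is restricted to \cheap indices $\nn$, whose components live in $\omega$ and are thus non-negative. No further subtlety arises: the independence of $\shiftd{\nn_j}{\nn}$ from the chosen representative has already been recorded, so the components may be manipulated freely, and the result follows by the direct unfolding above.
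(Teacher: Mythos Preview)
Your proof is correct and is precisely the direct unfolding of definitions that the paper intends: the paper gives no explicit proof, stating only that ``from previous definitions, we derive easily the following preservation property.'' Your argument—that the witnessing tail $[N,\infty)$ is preserved because each componentwise shift $\iin\mapsto\iin+\nn_{\iin}$ is non-decreasing—is exactly the intended easy derivation.
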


In some axiomatic view, computability of \cheap numbers can be
summarized as follows:

\begin{theorem}
\begin{itemize}
\item Not all \cheap numbers are computable.  
\item Computable \cheap numbers include all standard numbers. The
  image of a computable \cheap number by a standard \totalrecursive
  function is computable.
\item The infinitely large \cheap number 
  $\comega$ satisfying $\comega=\comegan=\iin$ is among computable
  numbers. 
\item Computable \cheap numbers are exactly those that can be obtained by above rules. 
\item There exists some operation $\shiftu{(.)}$ over \cheap numbers,
  that preserves standard numbers, and that satisfies preservation
  property (Theorem \ref{pr-transfer}).


\end{itemize}
\end{theorem}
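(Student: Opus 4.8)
The plan is to observe that this statement is a synthesis of the characterizations already established in Section~\ref{sec:apartir}, so each bullet reduces to a previously proved fact, and I would simply dispatch them in turn. For the first bullet I would exhibit an explicit witness: taking $\nx=\nxn=\gammaun(\vn)$, where $\gammaun$ is the non-\totalrecursive function described in Section~\ref{sec:computabilityintegers}, the defining sequence $\nxn$ is by construction not \totalrecursive, so by the definition of a computable \cheap number the object $\nx$ fails to be computable. This settles the claim that not all \cheap numbers are computable.

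For the second and third bullets I would unwind the definition of computability. A standard number $\sn$ is identified with the constant sequence $\nxn=\sn$, which is \totalrecursive, hence computable; this gives that all standard numbers are computable. The stability under a standard \totalrecursive $\ssf\colon\A^k\to\A$ is exactly the content of the Theorem titled Stability by \totalrecursive functions, whose justification amounts to remarking that a composition of \totalrecursive maps applied componentwise is again \totalrecursive. For $\comega=\comegan=\iin$, the defining sequence is the identity $\iin\mapsto\iin$, which is plainly \totalrecursive, so $\comega$ is computable.

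The fourth bullet is the only item requiring genuine argument, and it is precisely Theorem~\ref{th:tr}, so I would recall both inclusions carefully, since this two-sided equality is the heart of the statement. On one hand, the closure of $\{\comega\}$ under standard \totalrecursive functions consists only of computable \cheap numbers, by the second and third bullets. On the other hand, any computable $\nx=\nxn$ satisfies $\nxn=\ssf(\iin)$ for some \totalrecursive $\ssf$, whence $\nx=\ssf(\comega)$ lies in that closure. The step I expect to watch most closely is this converse direction, as it is where one must genuinely invoke that the representing sequence being \totalrecursive is the same thing as being reachable from $\comega$.

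Finally, for the fifth bullet I would point to the shift operation $\shiftu{(\cdot)}$ introduced above: the remark $\shiftu{\sn}=\sn$ records that it preserves standard numbers, and Theorem~\ref{pr-transfer} is literally the preservation property being asserted. Assembling these five observations completes the argument; since no new machinery is needed beyond the cited results, the only care required is to make sure that each bullet is matched to the precise earlier statement that supplies it.
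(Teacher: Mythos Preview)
Your proposal is correct and mirrors exactly what the paper does: the theorem is presented as an ``axiomatic view'' summary with no dedicated proof, each bullet being a direct restatement of facts already established in Section~\ref{sec:apartir} (the example $\gammaun(\vn)$, the Stability theorem, Theorem~\ref{th:tr}, and Theorem~\ref{pr-transfer}). Your dispatch of the five bullets to those earlier results is precisely the intended reading.
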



\sabrina{
subsection{\Cheap Minimization}

\olivier{MFCS: Le grand karma tel qu'ecrit dans LICS n'est pas très
  joli. Je tente des reformulations plus jolies. }

\olivier{ Il faut Fixer une fois pour
  toute l'ordre des arguments n et m. Disons qu'on fait P(n,m) et
  qu'on minimise sur m}

\olivier{MFCS: Pour m'aider à comprendre. MAIS Attention c'est pas le Grand karma (contrairement à ce
  que j'ai pensé à un moment), formulation sans cheap non-standard: Given a \totalrecursive predicate
$\sP(\ov n,m)$, It is said to be safe if for all $\ov \sn$, there exists
  some $\sm$ with $ \sP(\ov \sn,\sm)=1$.

  In that case, $\mu \sm~ \sP(\ov \sn,\sm )$ is computable (total
  recursive).

Preuve: C'est un résultat classisque voir une définition selon comment on
définit les fonctions calculables. 
(source pour la terminologie ``safe'': cf ce qui est commenté dans le .tex à cet
endroit précis.
}

\olivier{En fait, le fond du problème des discussions à venir reliée à la possible indécidabilité
  est que juste qu'étant donnée $\nx$, on 
  $\sP(\nx)$ n'est pas forcément décidable.

Autrement dit, si $\nx=\nxn=\ssf(\nx)$, écrire $\sP(\nx)$ est une
propritété qui quantifie sur les arguments de $f$ quand $\nx$ n'est
pas standard.
}

\olivier{MFCS: La meme chose, mais formulation avec cheap
  non-standard. (rq: ca parait plus magique).

(version prédicat unaire). 
Consider a unary \totalrecursive predicate
$\sP(.)$. Assume there is $\nm$ such that $\sP(\nm)=1$. Then there is a
least one $\mu \nm$.

This least one $\mu \nm$ is computable. 

TO Sabrina: on est d'accord?
}

\olivier{MFCS: 
Grand Karma (version prédicat plus général). 
Consider a \totalrecursive predicate
$\sP(.)$. Assume there exist some computable $\nx_1,\nx_2,\dots,\nx_k$ 
and some (not necessarily computable) \cheap integer $\nm$ such that $\sP(\nx_1,\nx_2,\dots,\nx_k,\nm)=1$. Then there is some 
least one  $\mu \nm$ such that $\sP(\nx_1,\nx_2,\dots,\nx_k,\nm)=1$.

$\mu \nm$ is computable iff it is bounded by some computable $\ny$.

}

\olivier{MFCS (Une application qu'on a en tete,  pour comprendre): 

En fait, on voudrait utiliser cela pour ce type d'arguments (le texte
est du copié collé de trucs aui apparaissent après)

Let $\ny$ some \cheap number. Let $0<\nepsilon$ be any computable
infinitesimal.

Asume both ar computable. 

We know that here always exists some \cheap \finite index $\nn$ with
$0< \shiftd{\nepsilon}{\nn} < \ny.$

Consider predicate $P(\nepsilon,\ny,\nn)$ given by ``$0<
\shiftd{\nepsilon}{\nn} < \ny.$''

On voudrait alors dire.

Then $\mu~ \nn~ P(\nepsilon,\ny,\nn)$ is computable iff it is bounded
by some computable $\ny$. 

C'est (plus ou moins vrai ) pour exactement les memes arguments que ce
qui suit,  

Mais SAUF QUE CA DECOULE PAS DU GRAND KARMA ACTUAL: $P$ n'est 
n'est pas un prédicat standard. On parle dans
ce prédicat de 
``$\shiftd{\nepsilon}{\nn}$'' et pas de $\nepsilon$ et $\nn$. Le
prédicat ici ``joue'' avec les indices... via des shifts...
}

A key argument in computability is the following: given a
\totalrecursive predicate $\sP(\sm)$, if we know that $\sP(\sm)=1$
holds for some $\sm$, then such an $\sm$ can be found, and actually
even the least such $\sm$, usually denoted by $\mu m~
P(m)$. Basically, test for $\sk=0,1, \dots$ until $\sP(\sk)=1$. This
will terminate, as it will not go further that $\sk=\sm$.

An interesting discussion is the case where $\nm$ is not a standard
\integer. \olivier{Added: pour aider à comprendre} In that case, one
cannot directly apply the above strategy, since first there is no
simple way to enumerate all \cheap integers, and second, given $\nk$,
determining whether $\sP(\nk)=1$ is not always possible effectively.

However, the following holds: 

\olivier{Formulation pour plus que prédicat binaire?}

\olivier{Tentative de formulation juste cf email.}

Assume that $\sP(\nn,\nm)$ holds for some \cheap \integer{} $\nm$. Let
us denote $\mu \nm$ (or sometimes also $\mu \nm~\sP(\nn,\nm)$) the
least such \cheap \integer{}: $\sP(\nn,\mu \nm)=1$, and
$\sP(\nn,\nx)=1$ implies $\mu \nm \le \nx$. We describe a minimization
algorithm to compute in a effective way $\mu \nm$:

Assume that $\mu \nm \le \ny$ for some computable $\ny$. There must
exists some \finite rank $\iin_0$ such that for $\iin \ge \iin_0$,
$(\mu \nm)_{\iin} \le \nyn$.  As $\ny$ is computable, we have
$\ny=\nyn=\sg(\iin)$ for some \totalrecursive $\sg$.  Then $(\mu
\nm)_{\iin}$ can be computed for $\iin \ge \iin_0$ by testing from
$\sk=0,1,\dots, \sg(\iin)$ whether $\sP(\sh(\iin),\sk)=1$ where
$\nz=\nzn=\sh(\iin)$. This procedure will terminate for $\iin \ge
\iin_0$ before $\sk=\sg(\iin)$ as we know that $(\mu \nm)_{\iin} \le
\sg(\iin)$. It follows that $\mu \nm$ is computable, as the \finitely
many other values of $(\mu \nm)_{\iin}$ for $\iin \le \iin_0$ can be
hard coded in a program.

From this algorithm:

\begin{theorem}[Computability of minimization] \label{grandtout} Let
  $\sP(\sn,\sm)$ be a \totalrecursive standard predicate where $\sn$
  and $\sm$ are some \integers. Let $\nn$ is some computable \cheap
  number. Then $\mu \nm$ exists and is computable.

 Conversely,  if 
$\mu \nm= \mu \nm~ \sP(\sn,\nm)$ is computable, then it
is true that $\mu \nm \le \ny$ for some \cheap computable $\ny$

\end{theorem}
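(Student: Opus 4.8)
The plan is to reduce the statement to the classical fact that the least-witness operator is (partial) recursive, and then transport it to the \cheap world index by index, repairing the finitely many indices where no witness occurs by means of the patch property (Theorem \ref{pr-patchproperty}). First I would fix a \totalrecursive $\sh$ with $\nn = \sh(\comega)$, which exists by Theorem \ref{th:tr} since $\nn$ is computable, so that $\nn_{\iin} = \sh(\iin)$ for every index $\iin$. Then I would introduce the standard partial recursive least-witness function $\sF(\sn) = \mu \sm~ \sP(\sn,\sm)$, obtained by testing $\sm = 0, 1, 2, \dots$ until $\sP(\sn,\sm)=1$; this halts precisely on those arguments $\sn$ for which a witness exists.

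For existence of $\mu \nm$, I would unwind the standing hypothesis that $\sP(\nn,\nm)=1$ holds for some \cheap integer $\nm$: it means that, for all sufficiently large $\iin$, the value $\nm_{\iin}$ is a standard witness for $\sn = \sh(\iin)$. Hence the set of standard witnesses at index $\iin$ is nonempty beyond some \finite rank $\iin_0$, so $(\mu \nm)_{\iin} := \sF(\sh(\iin))$ is well defined for $\iin \ge \iin_0$; assembling these values yields a \cheap integer that is, by construction, the least \cheap integer satisfying $\sP(\nn,\cdot)=1$. If $\sP$ is safe in the sense of the earlier definition, $\sF$ is total and one may take $\iin_0 = 0$.

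For computability of $\mu \nm$, I would observe that $\iin \mapsto \sF(\sh(\iin))$ is the composition of the \totalrecursive $\sh$ with the partial recursive $\sF$, hence partial recursive, and that it is total on $\{\iin \ge \iin_0\}$ because the search halts there. This produces a \totalrecursive function agreeing with $(\mu \nm)_{\iin}$ for all $\iin \ge \iin_0$, and Theorem \ref{pr-patchproperty} then upgrades it to a \totalrecursive function defined for all $\iin$, so that $\mu \nm$ is a computable \cheap number. Equivalently, when a computable bound $\ny = \sg(\comega)$ with $(\mu \nm)_{\iin} \le \sg(\iin)$ is at hand, I would replace the unbounded search by the bounded search $\sk = 0, 1, \dots, \sg(\iin)$, which is manifestly \totalrecursive and halts before $\sk = \sg(\iin)$; this is the variant spelled out in the discussion preceding the statement. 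The converse is then immediate: if $\mu \nm$ is itself a computable \cheap number, take $\ny := \mu \nm$, giving a computable $\ny$ with $\mu \nm \le \ny$.

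The step I expect to be the main obstacle is the treatment of the finitely many small indices at which $\sP(\sh(\iin),\cdot)$ has no witness, where the composition $\sF \circ \sh$ is undefined. This is exactly the situation the patch property is designed to absorb, so that partiality on an initial segment never obstructs computability of a \cheap number; making this bookkeeping precise, rather than any genuinely hard recursion-theoretic point, is where the care is needed.
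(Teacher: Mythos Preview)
Your proposal is correct and follows essentially the same route as the paper: existence of $\mu\nm$ is argued componentwise, computability is obtained from the search procedure described immediately before the theorem together with the patch property (Theorem~\ref{pr-patchproperty}), and the converse is the trivial choice $\ny:=\mu\nm$. Your unbounded-search formulation via the partial recursive $\sF(\sn)=\mu\sm\,\sP(\sn,\sm)$ is a minor variant of the paper's bounded-search algorithm; both terminate past the finite rank $\iin_0$ and are then repaired on the initial segment exactly as you describe.
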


\sabrina{Et ds ce cas le deuxième cas de la preuve serait quelque chose comme : par définition, la séquence des $\sP(\sn,\nm)$ est totale récursive et on peut tester qui est le plus petit car il existe un rang fini à partir duquel le plus petit est trouvé ?.}
\olivier{Bon, la preuve est aussi à changer en fonction de l'énoncé}
\begin{proof}
  The fact that there exists some least such \cheap \integer{} follows
  from the fact that there exists some least integer for each element
  of the sequence. The computability is given by the previous algorithm.

\sabrina{ du coup pas sûre pour cette ancienne suite de l'autre sens :
Assume that $\mu \nm$ is not computable. Write $\mu \nm = (\mu \nm)_{\iin} =
\smu(\iin)$ for some  (non computable) function $\smu(\iin)$. 

Let $\sX \subset \Idx$ be some infinite set. Typically $\sX=\Idx$, but it
could be another subset of $\Idx$.  Assume that $\smu(\iin)$ is not
computable on $\sX$ in the following sense: there is no algorithm that would return $\smu(\iin)$
for all $\iin \in \sX$.

That means that
 for any \totalrecursive function $\psi(\iin)$ there cannot exists a
 \finite rank
 $\iin_0$ such that for all  $\iin \ge \iin_0, \iin \in \sX$, $\smu (\iin) \le \psi(\iin)$, as
 otherwise we could use the argument above to get computability of
 $\smu(\iin)$ over $\sX$ using bound $\psi(\iin)$.  In other words,
 that means that for all \finite $\iin_0$ there exists
 some $\iin \ge \iin_0, \iin \in \sX$ with $\smu (\iin) > \psi(\iin)$. This can be restated as
 $C_\psi=\{\iin | \smu (\iin) > \psi(\iin)\}$ is such that $C_\psi \cap X$ is
 infinite. 
\olivierOK{New V6: la que super important que ensemble des meta finis est infini}

Given some other \totalrecursive function $\psi'(\iin)$, reasoning with
$\sX:=C_\psi \cap \sX$, we must have $C_\psi \cap C_{\psi'} \cap \sX$ is
infinite, otherwise, we would get computability of $\smu(\iin)$ over $\sX$
by considering bound $\max(\smu(\iin),\smu'(\iin))$.

Let $\psi_1,\psi_2, \dots, \psi_i, \dots$  be some (not necessarily
recursive in any sense) enumeration of \totalrecursive functions.

Starting from $\sX=\Idx$, and repeating the argument, for all $\sk$,
$\bigcap_{i \le \sk} C_{\psi_i}$ is infinite. For all $k \in \N$, let  $\lambda(k)$ be any
\integer{}  in that set, chosen bigger than $\lambda(k')$ for $k' \le k$. 

For any cheap non standard computable $\ny$, as $\ny=\nyn=\psi(\iin)$
for some \totalrecursive $\psi$, we cannot have $\mu \nm \le \ny$ as
this increasing sequence $\lambda(\iin)$ is providing an infinite set of ranks $\iin$ on which the
property $(\mu \nm )_{\iin} \le y_{\iin}=\psi(\iin)$  is not true
after some rank. }
\end{proof}



Notice that we prove that $\mu \nx= \mu \nx~ \sP(\nx)$ is not
computable implies that this is not true that $\mu \nx \le \ny$ for
some \cheap computable $\ny$. But it does not mean that $\mu \nx >
\ny$ for all \cheap computable $\ny$: for classical non-standard
analysis $<$ would be a total order, but this is not the case for
cheap non-standard analysis.

As \totalrecursive functions can be enumerated (we do not say
recursively enumerated, in any sense), \cheap computable integers can be
enumerated.

\begin{definition}[Enumeration of computable cheap
non-standard integers] 
We denote $\nn_1,\nn_2, \dots, \nn_i, \dots $ an enumeration of computable cheap
non-standard integers. 
\end{definition}

Hence we get the following key remark: under the hypotheses of the
above proposition: $\mu \nx$ is computable iff $\mu \nx \le \nn_i$ for
some $i$.

\olivierOK{MFCS: Etait commente pour gagner de la place, mais remis car
  vraiment nécessaire pour comprendre le GRAND KARMA.}

If one is able to decide whether $\sP(\nn_i)=1$ for a given $i$ (not
always possible in classical computability), one can execute the
following kind of algorithm to compute $\mu \nx$: test for
$i=0,1,\dots$ whether $\sP(\nn_i)=1$. As soon as such an $i$ is found,
return $\mu \nm$ using the minimization algorithm described above
(using bound provided by $\nn_i$). From above statements, this will
return $\mu \nm$, and terminates if (and actually only if) $\mu \nm$
is computable.

}
\section{Infinitesimals and Infinitely Large Numbers}
\label{sec:infinitesimals}

%
Any \cheap rational number $\nx$ is of the form $\np/\nq$ for some \cheap
\integers{} $\np$ and $\nq$. It is computable iff it is of the form $\np/\nq$ with  $\np$ and $\nq$
computable. 

\Cheap \integers{} as well as \cheap rational numbers can be
infinitely large. \Cheap rational numbers can also be infinitesimals.
For example, $\frac{1}{\comega+1}$ and $2^{-\comega}$ are computable
infinitesimals. \Cheap rationals $\nx=\nxn=\frac{1}{\gammaun(\vn)}$ as
well as $\nx=\nxn=2^{-\gammaun(\vn)}$ are non-computable
infinitesimals. 
\olivier{Enlevé, car superflus un peu: In all
what follows, for shortness reasons, when talking about an
infinitesimal number, it will always be assumed to be some (\cheapP)
rational number.}

\begin{definition}[Infinitely large and infinitesimal numbers]
A \cheap number $\nx$ is infinitely large iff for all \emph{standard} number $\sy$,
one has $\nx \ge \sy$.
A \cheap rational number $0<\nx$ is infinitesimal iff for all
\emph{standard} rational number $0 < \sy$,
one has $0< \nx \le \sy$.
\end{definition}




One key point in the above concept is that this involves a
quantification over all \emph{standard} number $\sy$, which is weaker
than over all \cheap $\ny$.
Actually, we however have the following phenomenon:

\begin{theorem} \label{pr-th:truc}
  Let $\nx$ (respectively: $0< \nx $) be some \cheap number that is
  infinitely large (resp. infinitesimal).
For any \cheap number $\ny$ (resp. $0 < \ny$)  there exists some
\cheap number $\nx'$, of the form $\nx'=\shiftd{\nx}{\nn}$ for some \cheap
\finite index $\nn$, 
with $\nx' \ge \ny$ (resp. $0< \nx' \le  \ny$).
\end{theorem}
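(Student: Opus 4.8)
The plan is to first translate the two hypotheses into ordinary convergence statements about the representing sequences, and then to build the required shift by a \emph{diagonal} (index-dependent) choice, exploiting precisely the fact that the binary shift may shift by a \cheap index rather than by a standard number.

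First I would reformulate the hypotheses. Writing $\nx = (x_{\iin})_{\iin}$, the number $\nx$ is infinitely large exactly when $x_{\iin} \to \infty$ as $\iin \to \infty$: indeed, for each standard $k$ the relation $\nx \ge k$ means $x_{\iin} \ge k$ after a \finite rank, and quantifying over all standard $k$ is precisely the assertion $x_{\iin} \to \infty$. Symmetrically, $0 < \nx$ being infinitesimal means both that $x_{\iin} > 0$ after some \finite rank $M_0$ and that $x_{\iin} \le \sy$ eventually for every standard rational $\sy > 0$, i.e. $x_{\iin} \to 0$ from above.

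Next I would construct the \cheap index $\nn = (n_{\iin})_{\iin}$ componentwise. Fix $\iin$; then $y_{\iin}$ is a single fixed number. In the infinitely large case, since $x_m \to \infty$ there is a least $n \in \N$ with $x_{\iin + n} \ge y_{\iin}$, and I set $n_{\iin}$ to be this $n$. In the infinitesimal case, for every $\iin$ beyond the \finite rank where $y_{\iin} > 0$, since $x_m \to 0$ from above there is a least $n \in \N$ with $\iin + n \ge M_0$ and $0 < x_{\iin+n} \le y_{\iin}$; I set $n_{\iin}$ to this value, and $n_{\iin} = 0$ for the finitely many remaining indices. The crucial observation is that each $n_{\iin}$ is a genuine finite natural number, which is exactly what the convergence established in the first step guarantees, so $\nn$ is a legitimate \cheap index with nonnegative (finite) components.

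Finally I would verify the conclusion with $\nx' = \shiftd{\nx}{\nn}$. By the definition of the binary shift, $(\shiftd{\nx}{\nn})_{\iin} = x_{\iin + n_{\iin}}$, which by construction satisfies $x_{\iin + n_{\iin}} \ge y_{\iin}$ for all $\iin$ (resp. $0 < x_{\iin + n_{\iin}} \le y_{\iin}$ after a \finite rank), so $\shiftd{\nx}{\nn} \ge \ny$ (resp. $0 < \shiftd{\nx}{\nn} \le \ny$) holds in the cheap sense. The main obstacle is conceptual rather than computational: one must notice that no single constant shift (in particular the unary $\shiftu{(\cdot)}$) can suffice, since $\ny$ is an arbitrary \cheap number and $y_{\iin}$ genuinely varies with $\iin$; the remedy is the index-dependent shift $\nn$, whose componentwise finiteness is underwritten by $x_{\iin} \to \infty$ (resp. $x_{\iin}\to 0$). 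The only additional care, in the infinitesimal case, is to keep $\iin + n_{\iin} \ge M_0$ so as to preserve strict positivity.
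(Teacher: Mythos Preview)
Your proof is correct and follows essentially the same diagonal construction as the paper: for each index $\iin$, choose a shift amount $n_{\iin}$ large enough that $x_{\iin+n_{\iin}}$ dominates (resp.\ is dominated by) $y_{\iin}$, and assemble these into a \cheap index $\nn$. The paper picks $g(\ii)$ to be a rank beyond which \emph{all} $x_n$ satisfy the comparison with $y_{\ii}$, whereas you pick the least $n$ for which the single value $x_{\iin+n}$ works; both variants are fine, and your handling of the positivity rank $M_0$ in the infinitesimal case is slightly more careful than the paper's.
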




\begin{proof}
Let $\ny$ some \cheap number. Write $\ny=\nyi$. 

Consider some $\ii \in \N$.  As $\nx=\nxn$ is infinitely large
(respectively: infinitesimal),
there must exists some \finite rank $\iin_0$ such that for all $\iin \ge
\iin_0$, we have $\nxn \ge \nyi$ (resp. $0< \nxn \le \nyi$). 

Let $\sg: \N \to \N$ be the function
that maps $\si$ to the corresponding $\iin_0$ for all $\si \in
\N$. Consider \cheap index $\nn$ defined by
$\nn=\nn_{\ii}=\sg(\ii)$. 

From definitions $\nx'=\shiftd{\nx}{\nn}$ is 
such that $\nx'=\nx'_{\ii}=\nx_{\ii+\sg(\ii)}$ and hence satisfy $\nx'_{\ii} \ge
\nyi$ (resp. $0 < \nx'_{\ii} \le \nyi$)
for all $\ii$.  The conclusion follows.
\end{proof}


\olivier{MFCS: comme dit plus haut, ce qui suit n'est pas une
  application direte du grand KARMA.

Ca parle plutot du prédicat:
$P(\nepsilon,\ny,\nn)$ given by ``$0<
\shiftd{\nepsilon}{\nn} < \ny.$''
que d'un prédicat standard $\sP$ appliqué à $(\nepsilon,\ny,\nn)$
i.e. de $\sP(\nepsilon,\ny,\nn)$. (on utilise un décalage....)
}

\olivier{MFCS: strict ou pas strict. Fixer une fois pour tout. Disons
  pas strict}


Fix some computable infinitesimal $0<\ny$. We have that for any computable infinitesimal 
$0<\nepsilon$, there always exists some \cheap \finite index $\nn$ with
$0< \shiftd{\nepsilon}{\nn} \le \ny.$ However, this $\nn$ can be
non-computable. 
%
Therefore, it is natural to consider the following notion.












\begin{definition}[Effectiveness]
We say that some computable infinitesimal $0<\nepsilon$ is
\machintrucby computable $0<\ny$ iff 
 there exists
some \cheap {computable} index $\nn$ with
$\shiftd{\nepsilon}{\nn} \le  \ny.$
\end{definition}

This is clearly a reflexive relation as $\shiftd{\nepsilon}{0}=\nepsilon$. It is also transitive:

\begin{theorem}[Transitivity of computably bounded relation]\label{transitivity}

Let $\nepsilon, \nepsilon', \ny$ be some non zero positive computable
infinitesimals. If $\nepsilon$ is \machintrucby $\nepsilon'$ and
$\nepsilon'$\machintrucby $\ny$, then $\nepsilon$ is \machintrucby
$\ny$.
\end{theorem}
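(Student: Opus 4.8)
The plan is to combine the two hypotheses by composing the shifts. First I would unpack the assumptions using the definition of \machintrucby: since $\nepsilon$ is \machintrucby $\nepsilon'$, there is a computable \cheap index $\nn_1$ with $\shiftd{\nepsilon}{\nn_1} \le \nepsilon'$; and since $\nepsilon'$ is \machintrucby $\ny$, there is a computable \cheap index $\nn_2$ with $\shiftd{\nepsilon'}{\nn_2} \le \ny$. The goal is to exhibit a single computable \cheap index $\nn$ such that $\shiftd{\nepsilon}{\nn} \le \ny$.

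The core step is to apply the shift by $\nn_2$ to the first inequality. Taking $\sP$ to be the binary relation $\le$ in the Preservation property (Theorem \ref{pr-transfer}), the bound $\shiftd{\nepsilon}{\nn_1} \le \nepsilon'$ yields $\shiftd{(\shiftd{\nepsilon}{\nn_1})}{\nn_2} \le \shiftd{\nepsilon'}{\nn_2}$. I would then invoke the additivity identity $\shiftd{\nx}{(\ny+\nz)}=\shiftd{(\shiftd{\nx}{\ny})}{\nz}$, one of the checkable properties of the binary $Shift$, to rewrite the left-hand side as $\shiftd{\nepsilon}{(\nn_1+\nn_2)}$. Chaining with the second hypothesis gives the inequality $\shiftd{\nepsilon}{(\nn_1+\nn_2)} \le \shiftd{\nepsilon'}{\nn_2} \le \ny$.

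Finally I would set $\nn = \nn_1+\nn_2$. This index is computable, since the sum of two computable \cheap indices is computable (stability by \totalrecursive functions, equivalently the semiring structure of computable \cheap natural numbers). The chain above then reads $\shiftd{\nepsilon}{\nn} \le \ny$, which is precisely the witness required for $\nepsilon$ to be \machintrucby $\ny$.

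The main obstacle is justifying that the order is preserved under the shift by $\nn_2$; this rests entirely on the Preservation property specialised to the relation $\le$, together with the additivity $\shiftd{\nepsilon}{(\nn_1+\nn_2)}=\shiftd{(\shiftd{\nepsilon}{\nn_1})}{\nn_2}$ of iterated shifts. Once those two facts are in hand the argument is a one-line chain of inequalities, and the positivity and infinitesimality of $\nepsilon$, $\nepsilon'$, $\ny$ play no role beyond ensuring the hypotheses are well posed.
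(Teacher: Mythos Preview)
Your proof is correct and follows essentially the same route as the paper: unpack the two witnesses $\nn_1,\nn_2$, apply the Preservation property to shift the first inequality by $\nn_2$, use the additivity identity $\shiftd{\nepsilon}{(\nn_1+\nn_2)}=\shiftd{(\shiftd{\nepsilon}{\nn_1})}{\nn_2}$, and chain to obtain $\shiftd{\nepsilon}{(\nn_1+\nn_2)}\le \ny$ with $\nn_1+\nn_2$ computable. If anything, you are slightly more explicit than the paper in naming Theorem~\ref{pr-transfer} and in justifying the computability of the sum.
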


\begin{proof}
 If $\nepsilon$ is \machintrucby $\nepsilon'$ and
  $\nepsilon'$ \machintrucby{}  $\ny$ then there exists some \cheap
  {computable} \finite index $\nn_i$ with $\shiftd{\nepsilon}{\nn_i} \le
  \nepsilon'$, and some \cheap {computable} \finite index $\nn_j$ with
  $\shiftd{\nepsilon'}{\nn_j} \le \ny$. But then
  $\shiftd{\nepsilon}{(\nn_i+\nn_j)} \le \ny$: Indeed, apply
  $\shiftd{(.)}{\nn_j}$ to members of $\shiftd{\nepsilon}{\nn_i} \le
  \nepsilon'$ to get
  $$\shiftd{\nepsilon}{(\nn_i+\nn_j)}=\shiftd{(\shiftd{\nepsilon}{\nn_i})}{\nn_j}
  \le \shiftd{\nepsilon'}{\nn_j} \le \ny$$ from previously stated
  properties of $\shift$ operation. 
\end{proof}

As a consequence, the following notion is natural and provides an
equivalence relation:

\begin{definition}
We say that two computable infinitesimals $0 < \nepsilon$ and $0 <
\nepsilon'$ are \machinequivalent iff $0 < \nepsilon$ is \machintrucby  $0 < \nepsilon'$ and conversely.
\end{definition}


\begin{theorem}\label{compEpsilon}
  $\frac{1}{\comega+1}$ is \machintrucby any computable infinitesimal
  $0 < \nepsilon$.
\end{theorem}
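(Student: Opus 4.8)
The plan is to exhibit explicitly a computable \cheap index $\nn$ witnessing the effectiveness claim, i.e. one for which $\shiftd{(\frac{1}{\comega+1})}{\nn} \le \nepsilon$ holds after some \finite rank. Writing $\comega$ componentwise as $\comegan=\iin$, the \cheap rational $\frac{1}{\comega+1}$ has components $\frac{1}{\iin+1}$, and by the definition of the binary \shift operation its shift by $\nn$ has components $\frac{1}{\iin+\nn_{\iin}+1}$. So the inequality to arrange is $\frac{1}{\iin+\nn_{\iin}+1}\le\nepsilonn$ eventually, which I read as the requirement $\nn_{\iin}\ge\frac{1}{\nepsilonn}$ (using $\iin+\nn_{\iin}+1\ge\nn_{\iin}$).

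Next I would use that $\nepsilon$ is a \emph{computable} infinitesimal: by definition $\nepsilonn=\sg(\iin)$ for some \totalrecursive $\sg:\N\to\Q$, and since it is a positive infinitesimal we have $\nepsilonn>0$ for all $\iin$ past some \finite rank $\iin_0$. For such $\iin$ the number $\frac{1}{\nepsilonn}$ is a well-defined positive rational given by a \totalrecursive function of $\iin$, so I simply set $\nn_{\iin}=\partientieresup{\frac{1}{\nepsilonn}}$. Then $\iin+\nn_{\iin}+1\ge\nn_{\iin}\ge\frac{1}{\nepsilonn}$, hence $\frac{1}{\iin+\nn_{\iin}+1}\le\nepsilonn$ for every $\iin\ge\iin_0$, which is precisely $\shiftd{(\frac{1}{\comega+1})}{\nn}\le\nepsilon$ in the \cheap sense.

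The one point requiring care --- and the only real obstacle --- is that the prescription $\nn_{\iin}=\partientieresup{\frac{1}{\nepsilonn}}$ is only meaningful once $\nepsilonn>0$, i.e. for $\iin\ge\iin_0$; on the \finitely many indices below $\iin_0$ the value $\frac{1}{\nepsilonn}$ may be undefined or the inequality may fail. I would handle this exactly as elsewhere in the paper: define $\nn_{\iin}$ arbitrarily (say $0$) for $\iin<\iin_0$ and invoke Theorem \ref{pr-patchproperty} to conclude that the patched sequence is still total recursive on all of $\N$, so that $\nn$ is a genuine computable \cheap index. Since \cheap comparisons depend only on eventual behaviour, the disagreement below $\iin_0$ does not affect the conclusion, and effectiveness of $\frac{1}{\comega+1}$ with respect to $\nepsilon$ follows.
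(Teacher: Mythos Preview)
Your proof is correct and follows essentially the same route as the paper: both take the shift index to be (componentwise) $\lceil 1/\nepsilonn\rceil$ up to an additive constant, so that $\iin+\nn_{\iin}+1\ge 1/\nepsilonn$ eventually. The only cosmetic difference is that the paper packages this as applying a standard \totalrecursive function $\sg$ to the computable \cheap rational $\nepsilon$ (obtaining computability via stability under total recursive functions rather than via Theorem~\ref{pr-patchproperty}), but the underlying idea and verification are identical.
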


\begin{proof}
Consider computable \cheap index $\nm$ given by $\nm=\sg(\nepsilon)$ where
standard function $\sg(\iin)=\partientieresup{1/\iin} - 1$ for $\iin \ge \iin_0$, and
say $\sg(0)=1$ (as $0<\nepsilon$, its components are non-zero after
some rank). Then $\shiftd{\left(\frac{1}{\comega+1}\right)}{\nm} < \nepsilon$. 

\olivier{MFCS: C'est une nouvelle preuve plus dans l'esprit non-standard.

L'ancienne preuve était:
Consider $\nepsilon=\nepsilon_n=\ssf(\iin)$ be some computable
infinitesimal, i.e. with $\ssf(\iin)$ \totalrecursive. As $0 <
\nepsilon$, there exists some \finite index $\iin_0$ such that $f(\iin)\neq 0$ for
$\iin \ge \iin_0$.  Consider computable \cheap index $\nm=\nmn=\sg(\iin)$ where
$\sg(\iin)=\partientieresup{1/\ssf(\iin)} - 1$ for $\iin \ge \iin_0$, and
say $\sg(\iin)=1$ for $\iin < \iin_0$.  Then
$\shiftd{\frac{1}{\comega+1}}{\nm}=\left(\shiftd{\frac{1}{\comega+1}}{\nm}\right)_{\iin}=\frac{1}{\iin+1
  + \sg(\iin)} <  
\ssf(\iin)$ for all $\iin \ge \iin_0$
and hence $\shiftd{\left\(\frac{1}{\comega+1}\right)}{\nm} < \nepsilon$. 
}
\end{proof}

A computable
infinitesimal $0 < \nepsilon$ is said to be monotone if
$\nepsilon=\nepsilonn$ with $\sepsilon_{\iin+1} \le \nepsilonn$ for all
$\iin$. 
%
Monotone computable infinitesimals include $\frac{1}{\comega+1}$ and
$2^{-\comega}$.

\begin{theorem}\label{monotone}
    Any monotone computable infinitesimal $0 < \nepsilon$ is
    \machintrucby $\frac{1}{\comega+1}$.
    All monotone computable infinitesimals are \machinequivalentn.
\end{theorem}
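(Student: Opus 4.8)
The plan is to prove the first assertion directly and then obtain the second one by combining it with Theorem~\ref{compEpsilon} and the transitivity established in Theorem~\ref{transitivity}. Write the monotone computable infinitesimal as $\nepsilon=\nepsilonn=\ssf(\iin)$ for some standard \totalrecursive $\ssf$, monotonicity meaning $\ssf(\iin+1)\le\ssf(\iin)$ for all $\iin$. Recall that $\nepsilon$ being \machintrucby $\frac{1}{\comega+1}$ amounts to exhibiting a \emph{computable} \cheap index $\nn=\nn_{\iin}$ whose shift satisfies $\shiftd{\nepsilon}{\nn}\le\frac{1}{\comega+1}$, that is $\ssf(\iin+\nn_{\iin})\le\frac{1}{\iin+1}$ for all (sufficiently large) $\iin$. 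So the whole first part reduces to producing such a \totalrecursive index sequence.

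For the construction I would first define $\sg(\iin)=\mu\,\im\,[\ssf(\im)\le\frac{1}{\iin+1}]$, the least rank at which $\ssf$ has fallen to $\frac{1}{\iin+1}$ or below. This minimization halts for every $\iin$ precisely because $\nepsilon$ is infinitesimal: for the fixed standard rational $\frac{1}{\iin+1}>0$ the relation $\ssf(\im)\le\frac{1}{\iin+1}$ holds for all large enough $\im$; and $\sg$ is \totalrecursive since $\ssf$ is \totalrecursive and comparison of rationals is decidable. I would then set $\nn_{\iin}=\max(0,\sg(\iin)-\iin)$, which is again \totalrecursive and nonnegative, so $\nn=\nn_{\iin}$ is a genuine computable \cheap index. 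Evaluating the shift gives $\ssf(\iin+\nn_{\iin})=\ssf(\max(\iin,\sg(\iin)))$: when $\sg(\iin)\ge\iin$ this equals $\ssf(\sg(\iin))\le\frac{1}{\iin+1}$ by definition of $\sg$, and when $\sg(\iin)<\iin$ monotonicity gives $\ssf(\iin)\le\ssf(\sg(\iin))\le\frac{1}{\iin+1}$. In both cases $\shiftd{\nepsilon}{\nn}\le\frac{1}{\comega+1}$, which proves the first assertion.

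For the second assertion, Theorem~\ref{compEpsilon} already gives that $\frac{1}{\comega+1}$ is \machintrucby every computable infinitesimal, in particular every monotone one, while the first assertion supplies the converse direction for monotone infinitesimals; hence each monotone computable infinitesimal and $\frac{1}{\comega+1}$ are \machinequivalent. Given two monotone computable infinitesimals $\nepsilon$ and $\nepsilon'$, I would chain through $\frac{1}{\comega+1}$: from $\nepsilon$ \machintrucby $\frac{1}{\comega+1}$ and $\frac{1}{\comega+1}$ \machintrucby $\nepsilon'$ Theorem~\ref{transitivity} yields $\nepsilon$ \machintrucby $\nepsilon'$, and the symmetric chain yields $\nepsilon'$ \machintrucby $\nepsilon$; thus $\nepsilon$ and $\nepsilon'$ are \machinequivalent and all monotone computable infinitesimals form a single equivalence class.

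I expect the only genuine subtlety to lie in the computability of the index sequence $\nn_{\iin}$: one must check that the search defining $\sg$ terminates (exactly where infinitesimality is used) and that the resulting shift index is nonnegative (exactly where monotonicity enters, through the $\max(0,\cdot)$ correction). Everything else is routine bookkeeping with the shift identities recorded just after the definition of $\shiftd{\nx}{\ny}$.
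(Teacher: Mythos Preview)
Your proposal is correct and follows essentially the same route as the paper: both arguments exhibit the required computable shift index via a safe minimization on the predicate ``$\ssf$ has dropped below $\tfrac{1}{\iin+1}$,'' using that $\ssf(\iin)\to 0$ to guarantee termination. The only cosmetic difference is that the paper minimizes the shift amount directly (searching for the least $\sm$ with $\ssf(\iin+\sm)\le\tfrac{1}{\iin+1}$), whereas you first locate the absolute rank $\sg(\iin)$ and then correct by $\max(0,\sg(\iin)-\iin)$, which is where your explicit appeal to monotonicity enters; the second statement is handled identically as a corollary of the first together with Theorem~\ref{compEpsilon}.
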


\olivier{MFCS: Formulation de la chose qui suit sans infinitesimaux:  Consider $\nepsilon=\nepsilon_n=\ssf(\iin)$ be some computable
  monotone infinitesimal, i.e. with $\ssf(\iin)$ \totalrecursive and
  decreasing.  


 As $\nepsilon$ is
  infinitesimal, given any standard integer $\sn$, there exists
   $\sn_0=\spsi(\sn)$ such that for all $\iin \ge \spsi(\sn),
  \ssf(\iin) \le \frac{1}{\sn+1}$.

 Since $\ssf(\spsi(\sn)) \le \frac{1}{\sn+1}$, we get that predicate
 $\sP(\sn,\sm)$ given by $f(\sm) \le \frac{1}{\sn+1}$ is safe. 
 It follows that $\mu \sm~ \sP(\sn,\sm)$ is computable. Consider
 $\nm =\nmn=\mu \sm~ \sP(\sn,\sm)$. We have
$\shiftd{\nepsilon}{\nm} \le \frac{1}{\comega+1}$.
}

\olivier{Assez agacant: dans ce qui suit encore autre chose que le grand karma en
  fait. Ca parle du prédicat $P(\nepsilon,\comega,\nm)$ = ``$\shiftd{\nepsilon}{\nm} \le
\frac{1}{\comega+1}$'' qui n'est pas standard.

Ce qui est sympa avec ce prédicat, c'est que c'est un prédicat safe:
si on se donne un rang $n$, on cherche $f(m) \le \frac{1}{\sn+1}$ et
on sait que ca existe car $f$ est décroissant et de limite $0$. 
}

\begin{proof}
  Consider $\nepsilon=\nepsilon_n=\ssf(\iin)$ be some computable
  monotone infinitesimal, i.e. with $\ssf(\iin)$ \totalrecursive and
  decreasing. 
From Theorem \ref{pr-th:truc}, there exists for some \cheap
\finite index $\nm$, with $\shiftd{\nepsilon}{\nm} \le
\frac{1}{\comega+1}$.  We get that predicate
 $\sP(\sn,\sm)$ given by $f(\sn +\sm) \le \frac{1}{\sn+1}$ is safe. 
 It follows that $\mu \sm~ \sP(\sn,\sm)$ is computable. Consider
 $\nm' =\nm'_{\iin}=\mu \sm~ \sP(\sn,\sm)$, hence computable. We have
$\shiftd{\nepsilon}{\nm'} \le \frac{1}{\comega+1}$.

\olivier{Old preuve: pas si jolie
Replacing if needed $\ssf$ by $1$ for the
  \finitely many values on which it is $0$ in the following reasoning
  (using patchproperty),
  we may assume that $\ssf(\iin) \ge 1$ for all $\iin$. 
 Consider predicate $\sP(\sn,\sm)$ defined by $
 \ssf(\sm) < \frac{1}{\sn+1}$.  Apply the reasoning of proposition
\ref{grandtout} to predicate $\sP$, observing that $\sP(\comega,\comega)=1$ as $\ssf$ is monotone
and unbounded by any \integer{} $1/(\sn+1)$ for any standard \integer{}  $\sn$,
and observing that $\comega$ is computable. 
Hence, there 
must exists some computable \cheap \integer{}  $\nn=\mu \nn$ with
$\sP(\comega,\nn)=1$. Then $\shiftd{\nepsilon}{\nn} \le \frac{1}{\comega+1}$.

}
First statement follows.

Second statement is a clear corollary.
\end{proof}






We say that some computable infinitesimal $0<\nepsilon$ is \ieffective
if it belongs to the above class: it is monotone or computably
equivalent to some monotone computable infinitesimal.


\begin{corollary} 
A computable infinitesimal $0<\nepsilon$ is \ieffective iff it is
\machintrucby  $\frac{1}{\comega+1}$. 
%
Any  \ieffective computable infinitesimal is \machintrucby any computable
$0 < \ny$. 
\end{corollary}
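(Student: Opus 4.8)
The plan is to handle the two assertions separately, chaining the three results already proved about infinitesimals: Theorem \ref{compEpsilon} (that $\frac{1}{\comega+1}$ is \machintrucby every computable infinitesimal), Theorem \ref{monotone} (that every monotone computable infinitesimal is \machintrucby $\frac{1}{\comega+1}$), and the transitivity of the \machintrucby relation (Theorem \ref{transitivity}).

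For the equivalence in the first assertion, the forward direction splits according to the definition of \ieffective{} infinitesimal. If $\nepsilon$ is monotone, Theorem \ref{monotone} directly makes it \machintrucby $\frac{1}{\comega+1}$. If instead $\nepsilon$ is \machinequivalent to some monotone $\nepsilon'$, then $\nepsilon$ is \machintrucby $\nepsilon'$ while Theorem \ref{monotone} makes $\nepsilon'$ \machintrucby $\frac{1}{\comega+1}$, so transitivity closes the case. For the converse, I would use that $\frac{1}{\comega+1}$ is itself monotone and that Theorem \ref{compEpsilon} makes it \machintrucby $\nepsilon$; hence, assuming $\nepsilon$ is \machintrucby $\frac{1}{\comega+1}$, the two infinitesimals become \machinequivalent, so $\nepsilon$ is \machinequivalent to a monotone infinitesimal and is therefore \ieffective by definition.

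For the second assertion, the key idea is to reduce an arbitrary computable $0 < \ny$ to an infinitesimal by setting $\ny' = \min\left(\ny, \frac{1}{\comega+1}\right)$. Since $\min$ is a standard \totalrecursive operation, $\ny'$ is a computable \cheap number, it is positive, and being bounded above by the infinitesimal $\frac{1}{\comega+1}$ it is itself infinitesimal. Theorem \ref{compEpsilon} then gives that $\frac{1}{\comega+1}$ is \machintrucby $\ny'$, and the first assertion gives that the \ieffective{} infinitesimal $\nepsilon$ is \machintrucby $\frac{1}{\comega+1}$; transitivity produces a \cheap computable index $\nn$ with $\shiftd{\nepsilon}{\nn} \le \ny'$. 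Since $\ny' \le \ny$, this same index witnesses $\shiftd{\nepsilon}{\nn} \le \ny$, so $\nepsilon$ is \machintrucby $\ny$.

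The only delicate point is the converse of the first assertion: it rests on the fact that \machintrucby is not symmetric in general, so one genuinely needs Theorem \ref{compEpsilon} to upgrade the one-sided bound into the two-sided \machinequivalentn relation required to invoke the definition of \ieffective. Everything else is a direct composition of the cited theorems, with the reduction $\ny \mapsto \min\left(\ny, \frac{1}{\comega+1}\right)$ being the one small construction needed to bring a general positive $\ny$ within the scope of the transitivity lemma, which is stated only for infinitesimals.
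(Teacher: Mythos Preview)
Your proof is correct and follows exactly the route the paper intends: the corollary is stated without proof, as an immediate consequence of Theorems \ref{compEpsilon}, \ref{monotone}, and \ref{transitivity}, and you have spelled out that derivation carefully. The one extra step you supply beyond the obvious chaining, namely replacing an arbitrary computable $0<\ny$ by the infinitesimal $\ny'=\min\left(\ny,\frac{1}{\comega+1}\right)$ so that Theorem \ref{transitivity} (stated only for infinitesimals) applies, is a clean and correct way to handle that technicality.
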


\section{Computability for Real Numbers}
\label{sec:reals}

Functions from the reals to the reals are the main studied functions
in computable analysis. That's why after having studied computability
for \cheap integer and rational numbers, we now go to computability
for \cheap real numbers.

\begin{definition}[Computability for real numbers]
Fix some \ieffective  computable infinitesimal $\nepsilon$.
A standard real $\sx$ is said to be computable if there exists some
\cheap computable rational $\frac{\np}{\nq}$, such that
$\left|\sx-\frac{\np}{\nq}\right| \le \nepsilon.$
\end{definition}

\begin{theorem} \label{machin}
The previous definition is not depending on $\nepsilon$:  if this holds
for an \ieffective infinitesimal $0<\nepsilon$, then it holds for
any other \ieffective computable infinitesimal $0<\nepsilon'$. 
\end{theorem}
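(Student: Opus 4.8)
The plan is to start from a rational approximation that is good to within $\nepsilon$ and to push it forward by a computable shift, so that the error drops below $\nepsilon'$ while the approximant stays computable. First I would invoke the preceding Corollary, which asserts that any \ieffective computable infinitesimal is \machintrucby any computable $0<\ny$. Applying it with $\ny=\nepsilon'$ (which is a computable positive number), I obtain a computable \cheap index $\nn$ with $\shiftd{\nepsilon}{\nn} \le \nepsilon'$.

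Next, assuming $\sx$ is computable with respect to $\nepsilon$, I fix a computable \cheap rational $\frac{\np}{\nq}$ with $\left|\sx-\frac{\np}{\nq}\right| \le \nepsilon$, and apply the shift $\shiftd{(.)}{\nn}$ to this inequality. Since $\sx$ is standard, it is shift-invariant, so this yields $\left|\sx-\shiftd{\left(\frac{\np}{\nq}\right)}{\nn}\right| \le \shiftd{\nepsilon}{\nn} \le \nepsilon'$, where the shifted approximant is $\shiftd{\left(\frac{\np}{\nq}\right)}{\nn}=\frac{\shiftd{\np}{\nn}}{\shiftd{\nq}{\nn}}$. I would then observe that this is a computable \cheap rational: $\np$, $\nq$ and $\nn$ are all computable, and the shift of a computable number by a computable index is computable (the shift-computability theorem stated earlier). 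Hence $\frac{\shiftd{\np}{\nn}}{\shiftd{\nq}{\nn}}$ witnesses that $\sx$ is computable with respect to $\nepsilon'$. As $\nepsilon$ and $\nepsilon'$ play symmetric roles, this gives the equivalence.

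The delicate point is justifying that shifting preserves the approximation inequality. The preservation property (Theorem \ref{pr-transfer}) is phrased for standard properties of \cheap \emph{numbers}, whereas here one argument, $\sx$, is a standard real, so I would not quote it verbatim but argue directly on components. Writing $\frac{\np}{\nq}=\frac{\np_{\iin}}{\nq_{\iin}}$ and $\nepsilon=\nepsilonn$, the original bound holds for all $\iin \ge \iin_0$ for some \finite rank $\iin_0$. Because $\nn$ is a \cheap index its components are nonnegative, so $\iin+\nn_{\iin} \ge \iin \ge \iin_0$ eventually; evaluating the original bound at the shifted index $\iin+\nn_{\iin}$ therefore gives $\left|\sx-\frac{\np_{\iin+\nn_{\iin}}}{\nq_{\iin+\nn_{\iin}}}\right| \le \nepsilon_{\iin+\nn_{\iin}}$ after some \finite rank, which is exactly the shifted inequality.

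The only genuine obstacle is this shift-compatibility check; once it is in place the result is a short composition of the effectiveness Corollary with the closure of computable \cheap rationals under shifting, so I expect the rest to be routine.
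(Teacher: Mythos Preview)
Your argument is correct and follows the same core idea as the paper's proof: use effectiveness of $\nepsilon$ to produce a computable shift index $\nn$, apply $\shiftd{(\cdot)}{\nn}$ to the approximation inequality, and note that $\sx$, being standard, is fixed by the shift while the approximating rational stays computable.

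The one genuine difference is that the paper does a little more work than you do. Instead of shifting directly below $\nepsilon'$, it shifts below $1/\nnu$ with $\nnu=2^{\lceil\log(2/\nepsilon')\rceil}$, and then rounds the shifted rational to $\np'=\lceil \nnu\,\Snp/\Snq\rceil$, $\nq'=\nnu$. This extra rounding step is not needed for Theorem~\ref{machin} itself; it is there so that the new approximant is dyadic, which the paper immediately exploits in Theorem~\ref{d}. Your version is cleaner for the bare independence statement, while the paper's version kills two birds with one proof. Your componentwise justification of the shifted inequality (in lieu of quoting Theorem~\ref{pr-transfer} verbatim, since $\sx$ is a standard real rather than a \cheap integer or rational) is the right level of care; the paper silently invokes preservation at that same point.
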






\begin{proof}

Infinitesimal $\nepsilon$ is \machintrucby  $1/\nnu$  for
$\nnu=2^{\partientieresup{\log (2/\nepsilon')}}$. Let
$\Snepsilon=\shiftd{\nepsilon}{\nn}$ the corresponding infinitesimal:  $\Snepsilon \le 1/\nnu$, that is $\nnu  \Snepsilon < 1$.
  Let $\Snp=\shiftd{\np}{\nn}$ and
 $\Snq=\shiftd{\nq}{\nn}$. We know (using preservation Theorem \ref{pr-transfer}) that $\left|\sx-\frac{\Snp}{\Snq}\right| \le
\Snepsilon.$
%
%
Consider then 
$\np'=\partientieresup{\nnu \Snp/ \Snq}$, $\nq'=\nnu$. 
This guarantees $\left|\sx-\frac{\np'}{\nq'}\right| \le
\nepsilon'.$
Indeed, 
$|\nnu \sx - \nnu \frac{\Snp}{ \Snq} |
= |\nnu| |\sx-\frac{\Snp}{ \Snq}| \le \nnu \Snepsilon < 1,$
implies $|\nnu \sx - \partientieresup{\nnu \frac{\Snp}{ \Snq}} | \le
|\partientieresup{\nnu \frac{\Snp}{ \Snq}}-\nnu \frac{\Snp}{ \Snq}|+|\nnu \sx- \nnu \frac{\Snp}{ \Snq} 
| \le 1 + 1 = 2$
using definition of what integer part is,
and  then  $|\partientieresup{\nnu \frac{\Snp}{ \Snq}}/ \nnu -\sx| \le
2/\nnu = 2^{1-\partientieresup{\log (2/\nepsilon')}} \le \nepsilon'.$

\end{proof}

  Two \cheap reals are said to be infinitely close (respectively:
  effectively infinitely close) if the absolute value of their
  difference is less than some (resp. \ieffectiven) computable
  infinitesimal $0<\nepsilon$.
%

\begin{definition}[Left and Right-Computability for real numbers]
A standard real $\sx$ is said to be left-computable (respectively:
right-computable)  if it is infinitely close to some \cheap computable
rational $\frac{\np}{\nq}$ with $\frac{\np}{\nq}\le \sx$
(resp. $\frac{\np}{\nq}\ge \sx$).
\end{definition}

\begin{theorem}\label{stdcomp}
    A standard real $\sx$ is computable iff it is effectively
    infinitely close to some \cheap computable rational     $\frac{\np}{\nq}$.  
A standard real $\sx$ is computable iff it is
    right-computable and left-computable.
\end{theorem}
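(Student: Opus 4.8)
The statement has two parts, which I would treat in turn.

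\medskip
\emph{Part 1 (computable $\iff$ effectively infinitely close to a computable rational).} This equivalence is essentially definitional, the only subtlety being the mismatch between the non-strict bound $\le \nepsilon$ in the definition of computability and the strict bound $< \nepsilon'$ implicit in ``effectively infinitely close''. For the converse implication, if $\left|\sx-\frac{\np}{\nq}\right| < \nepsilon'$ for some \ieffective infinitesimal $\nepsilon'$, then in particular $\left|\sx-\frac{\np}{\nq}\right| \le \nepsilon'$, so $\sx$ is computable when $\nepsilon'$ is taken as the reference infinitesimal, and by Theorem \ref{machin} this does not depend on the choice. For the forward implication I would show that $2\nepsilon$ is again \ieffective: taking the computable index $\nn$ with $\shiftd{\nepsilon}{\nn} \le \frac{1}{\comega+1}$, one has $\shiftd{(2\nepsilon)}{\nn} \le \frac{2}{\comega+1}$, and since $\frac{2}{\comega+1}$ is \machintrucby $\frac{1}{\comega+1}$ (shift by $\comega+1$ sends it to $\frac{1}{\comega+1}$), transitivity (Theorem \ref{transitivity}) gives that $2\nepsilon$ is \machintrucby $\frac{1}{\comega+1}$. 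Then $\left|\sx-\frac{\np}{\nq}\right| \le \nepsilon < 2\nepsilon$ witnesses effective infinite closeness.

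\medskip
\emph{Part 2, forward direction.} Suppose $\sx$ is computable; by Theorem \ref{machin} I may take $\nepsilon = \frac{1}{\comega+1}$, so there is a \cheap computable rational $\frac{\np}{\nq}$ with $\left|\sx-\frac{\np}{\nq}\right| \le \nepsilon$, i.e. $\frac{\np}{\nq} - \nepsilon \le \sx \le \frac{\np}{\nq} + \nepsilon$. The \cheap computable rational $\frac{\np}{\nq} - \nepsilon$ is then a lower bound of $\sx$ that is infinitely close to it (their difference is at most $2\nepsilon$, a computable infinitesimal), so $\sx$ is left-computable; symmetrically $\frac{\np}{\nq} + \nepsilon$ makes $\sx$ right-computable.

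\medskip
\emph{Part 2, backward direction (the main step).} This is where the real work lies: left- and right-computability only supply closeness with \emph{some} computable infinitesimal, whereas computability demands \emph{effective} closeness. Assume $\sx$ is both left- and right-computable, with \cheap computable rationals $\na \le \sx \le \nb$, each infinitely close to $\sx$. Writing $a_{\iin}, b_{\iin}$ for their \totalrecursive component sequences, the bracketing $a_{\iin} \le \sx \le b_{\iin}$ holds for all $\iin$ beyond some \finite rank $\iin_0$, and $c_{\iin} = b_{\iin} - a_{\iin}$ is infinitesimal (it lies between $0$ and the sum of the two witnessing infinitesimals), so the standard sequence $(c_{\iin})$ tends to $0$. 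The plan is to manufacture from $\na$ and $\nb$ a single computable rational approximating $\sx$ to the effective precision $\frac{1}{\comega+1}$. I consider the standard predicate $\sP(\sn,\sk)$ asserting ``$\sk \ge \sn$ and $c_{\sk} < \frac{1}{\sn+1}$''. Since $c_{\iin} \to 0$, every $\sn$ admits a witness, and as $c_{\sk}$ is a computable rational, $\sP$ is \totalrecursive and safe, so $\spsi(\sn) = \mu \sk~\sP(\sn,\sk)$ is \totalrecursive. By construction $\spsi(\sn) \ge \sn$ and $c_{\spsi(\sn)} < \frac{1}{\sn+1}$. Define the \cheap computable rational $\frac{\np}{\nq}$ by $\left(\frac{\np}{\nq}\right)_{\iin} = a_{\spsi(\iin)}$. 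For every $\iin \ge \iin_0$ we have $\spsi(\iin) \ge \iin \ge \iin_0$, hence $a_{\spsi(\iin)} \le \sx \le b_{\spsi(\iin)}$, giving $0 \le \sx - a_{\spsi(\iin)} \le c_{\spsi(\iin)} < \frac{1}{\iin+1}$. Thus $\left|\sx-\frac{\np}{\nq}\right| \le \frac{1}{\comega+1}$, so $\sx$ is effectively infinitely close to the computable rational $\frac{\np}{\nq}$ and is computable by Part 1.

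\medskip
The main obstacle is precisely the last construction. Smallness of $c_{\sk}$ alone does not force the index $\sk$ to lie beyond $\iin_0$, where $\na$ and $\nb$ genuinely bracket $\sx$; the clause $\sk \ge \sn$ in $\sP$ is what guarantees $\spsi(\iin) \to \infty$ and hence lands the search in the bracketing region, while the convergence $c_{\iin} \to 0$ is exactly what keeps $\sP$ safe so that the minimization $\mu \sk~\sP(\sn,\sk)$ is computable.
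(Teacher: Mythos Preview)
Your proof is correct and follows the same overall strategy as the paper: for the nontrivial backward direction of Part 2, both arguments exploit that the gap between the upper and lower computable rationals tends to $0$ and use safe minimization to upgrade mere infinitesimal closeness to effective closeness. The one technical difference is in how the bracketing $a_k \le \sx \le b_k$ is secured at the selected index. The paper first monotonizes, setting $\phi(\iin)=\max_{1\le \ii\le \iin} a_{\ii}$ and $\phi'(\iin)=\min_{1\le \ii\le \iin} b_{\ii}$, so that $\phi(\iin)\le \sx\le \phi'(\iin)$ holds for every $\iin$ and the safe predicate is simply $\phi'(\sm)-\phi(\sm)\le 1/\sn$; you instead keep the raw sequences and insert the clause $\sk\ge \sn$ into the predicate to force $\spsi(\iin)\to\infty$ and hence eventually exceed $\iin_0$. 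Both devices accomplish the same thing; the paper's monotonization has the mild bonus of yielding an increasing approximating sequence for free, while your version avoids the extra definitions. For Part 1 the paper simply calls it a restatement of the definition, so your handling of the strict/non-strict discrepancy is more careful than the original.
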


\begin{proof}
First statement is just a restatement of the definition. Concerning
second statement: 
Direction from left to right of second item is trivial.  Direction
from right to left is the following.\olivier{MFCS: Needs a proof with \cheap
  reasoning}
Assume that $\sx$ is right and left-computable. 
There exists some $\np=\npn=\ssp(\iin)$, $\nq=\nqn=\sq(\iin)$,
$\np'=\np'_{\iin}=\ssp'(\iin)$, $\nq'=\nq'_{\iin}=\sq'(\iin)$ such
that $\frac{\np}{\nq}\le \sx \le \frac{\np'}{\nq'}$ and
$\sx-\frac{\np'}{\nq'}$ and $\sx-\frac{\np}{\nq}$ both
infinitesimal. This must hold componentwise for $\iin \ge \iin_0$ for
some $\iin_0$. Replacing if needed the values of functions for indices
less than $\iin_0$, we can assume without loss of generality that $\iin_0=1$. 

Given $\iin$, consider $\phi(\iin)=\max_{1 \le \ii \le \iin}
\frac{\ssp(\ii)}{\sq(\ii)}$, and $\phi'(\iin)=\min_{1 \le \ii \le \iin}
\frac{\ssp'(\ii)}{\sq'(\ii)}$.  $\phi$ (respectively: $\phi'$) is an
increasing (resp. decreasing) 
function converging to $\sx$. We have
$$\phi(\iin) \le \sx
\le \phi'(\iin).$$
The  predicate $\sP(\sn,\sm)$ given by $\phi'(\sm)-\phi(\sm) \le
\frac{1}{\sn}$ is safe.
Consequently,  $\mu \sm \sP(\sn,\sm)$  is computable, and 
$\phi(\mu
\sm \sP(\sn,\sm))$ is a computable sequence of rational numbers
proving that $\sx$ is computable, since considering $\frac{\np}{\nq}=\left(\frac{\np}{\nq}\right)_{\iin}=\phi(\mu
\im \sP(\iin,\im))$, 
we get $$\left|\frac{\np}{\nq}-x\right| \le \frac{1}{\comega}.$$ 
\end{proof}

Let $\sD=\{\sr \in \Q | \sr=\frac{\sn}{2^{\sm}} \mbox{ for \integers{}
} \sn,\sm\}$: these are the rationals with finite binary
representation. They are sometimes also called dyadic rationals.

\begin{theorem}\label{d}
We can always assume $\frac{\np}{\nq} \in \nD$ in previous statements,
i.e. $\nq$ to be of the form $2^{\nm}$ for some \cheap \integer{} $\nm$. 
\end{theorem}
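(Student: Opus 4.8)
The plan is to reduce every rational witness $\frac{\np}{\nq}$ occurring in the earlier statements to a dyadic one by a single rounding step, keeping the resulting approximation error an effective infinitesimal so that Theorem \ref{machin} can be reinvoked. First I would use Theorem \ref{machin} to fix once and for all the particular effective infinitesimal $\nepsilon = 2^{-\comega}$, which is monotone computable and hence effective. Thus if $\sx$ is computable there is a \cheap computable rational $\frac{\np}{\nq}$ with $\left|\sx - \frac{\np}{\nq}\right| \le 2^{-\comega}$, and it suffices to turn this witness into a dyadic one.

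The rounding step is the core of the argument. I would set $\nm = \comega$ and $\np' = \partientiereinf{2^{\comega}\np/\nq}$. Since $\np$, $\nq$ and $\comega$ are \cheap computable and the floor, product and quotient are total recursive applied componentwise, $\np'$ is again a \cheap computable \integer{}, so $\frac{\np'}{2^{\comega}} \in \nD$ is a \cheap computable dyadic rational. By the definition of the integer part we have $\left|\frac{\np}{\nq} - \frac{\np'}{2^{\comega}}\right| \le 2^{-\comega}$, and the triangle inequality then gives $\left|\sx - \frac{\np'}{2^{\comega}}\right| \le 2^{-\comega} + 2^{-\comega} = 2^{1-\comega}$. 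As $2^{1-\comega}$ is a monotone computable infinitesimal, hence effective, Theorem \ref{machin} shows that $\sx$ is computable and is now witnessed by a dyadic $\frac{\np'}{2^{\comega}}$, which is exactly the claim for the computability definition.

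For the left- and right-computability statements (and the statement of Theorem \ref{stdcomp}) the same rounding works, using $\partientiereinf{\cdot}$ on the left and $\partientieresup{\cdot}$ on the right so as to preserve the one-sided inequality: rounding $\frac{\np}{\nq} \le \sx$ down to $\frac{\np'}{2^{\comega}} = \partientiereinf{2^{\comega}\np/\nq}/2^{\comega} \le \frac{\np}{\nq} \le \sx$ keeps the approximant below $\sx$, while rounding up keeps it above, and in both cases the rounding error stays $\le 2^{-\comega}$, hence infinitesimal, so the dyadic approximant remains (one-sidedly) infinitely close to $\sx$. The main thing to check carefully is precisely that $\np'$ is genuinely \cheap computable and that the doubled bound $2^{1-\comega}$ is still an effective infinitesimal, so that Theorem \ref{machin} legitimately applies; once those two points are settled, everything else is the triangle inequality and the elementary bound on the integer part, so I expect no further obstacle.
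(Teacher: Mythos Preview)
Your argument is correct and follows essentially the same route as the paper: round the rational approximant to a nearby dyadic with denominator a power of $2$, bound the rounding error by the triangle inequality, and check that the resulting error is still an effective (resp.\ computable) infinitesimal. The paper's own proof is more terse: it simply observes that in the proof of Theorem~\ref{machin} the constructed denominator $\nq'=\nnu=2^{\lceil\log(2/\nepsilon')\rceil}$ is already a power of~$2$, so no separate rounding step and second appeal to Theorem~\ref{machin} are needed; your version just unpacks that same construction explicitly.
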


\begin{proof}
This is the case in the proof of Theorem \ref{machin}. The case of
left and right-computability is similar. 
\end{proof}

One important theorem is that this corresponds to the classical
definition of computability for reals (in the sense of computable
analysis):   Formally,  according to classical definitions and statements from
\cite{Wei00,Ko91,PourEl:comap},  this is equivalent to say that the following holds:

\begin{theorem}\label{th:un}
  A standard real $\sx$ is computable iff there exist some
  \totalrecursive functions $\ssp(\sn)$ and $\sq(\sn)>0$ such that
  $|\sx- \frac{\ssp(\sn)}{\sq(\sn)}| \le \frac{1}{2^{\sn}}$ for all
  integer $\sn$.
  A standard real $\sx$ is left-computable (resp. right-computable)
  iff there exist some \totalrecursive functions $\ssp(\sn)$ and
  $\sq(\sn)>0$ such that $\sx=\sup_{n} \frac{\ssp(\sn)}{\sq(\sn)}$
  (resp. $\sx=\inf_{n} \frac{\ssp(\sn)}{\sq(\sn)}$).
\end{theorem}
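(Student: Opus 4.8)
The plan is to show the equivalence between the \cheap-analysis definition of computability (established in Theorems \ref{machin} and \ref{stdcomp}) and the classical computable-analysis definition. The key observation is that a computable \cheap rational $\frac{\np}{\nq}$ is, by the First Characterization (Theorem \ref{th:tr}), exactly a sequence $\left(\frac{\ssp(\iin)}{\sq(\iin)}\right)_{\iin}$ for standard \totalrecursive functions $\ssp,\sq$. So the two formalisms are manipulating the same syntactic data; the work is matching up the approximation conditions.

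For the first statement, I would argue both directions. Going from \cheap to classical: by Theorem \ref{stdcomp} and Theorem \ref{d}, computability of $\sx$ gives a computable dyadic $\frac{\np}{\nq}=\frac{\np}{2^{\nm}}$ with $\left|\sx-\frac{\np}{2^{\nm}}\right| \le \nepsilon$ for an \ieffective infinitesimal, and since by the Corollary any \ieffective infinitesimal is \machintrucby $\frac{1}{\comega+1}$, after a suitable \shift I may take $\nepsilon = \frac{1}{2^{\comega}}$. Writing the components out, this is precisely $|\sx - \frac{\ssp(\iin)}{\sq(\iin)}| \le \frac{1}{2^{\iin}}$ holding after some \finite rank; invoking Theorem \ref{pr-patchproperty} patches the finitely many initial indices to obtain the bound for \emph{all} integers $\sn$. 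Conversely, given classical \totalrecursive $\ssp,\sq$ with $|\sx-\frac{\ssp(\sn)}{\sq(\sn)}|\le\frac{1}{2^{\sn}}$, I set $\frac{\np}{\nq}=\frac{\ssp(\comega)}{\sq(\comega)}$, which is computable by Theorem \ref{th:tr}, and then $\left|\sx-\frac{\np}{\nq}\right|\le\frac{1}{2^{\comega}}$, an \ieffective infinitesimal, so $\sx$ is computable by the definition.

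For the left/right-computable statements I would proceed analogously. From \cheap to classical: left-computability gives a computable rational $\frac{\np}{\nq}=\left(\frac{\ssp(\iin)}{\sq(\iin)}\right)$ with $\frac{\np}{\nq}\le\sx$ and $\sx-\frac{\np}{\nq}$ infinitesimal. The monotone envelope trick already used in the proof of Theorem \ref{stdcomp} applies: replacing $\frac{\ssp(\iin)}{\sq(\iin)}$ by $\max_{1\le\ii\le\iin}\frac{\ssp(\ii)}{\sq(\ii)}$ yields an increasing \totalrecursive sequence whose supremum is $\sx$, which is exactly the classical condition $\sx=\sup_n\frac{\ssp(\sn)}{\sq(\sn)}$. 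Conversely, if $\sx=\sup_n\frac{\ssp(\sn)}{\sq(\sn)}$ for \totalrecursive $\ssp,\sq$, then $\frac{\np}{\nq}=\frac{\ssp(\comega)}{\sq(\comega)}$ satisfies $\frac{\np}{\nq}\le\sx$ and $\sx-\frac{\np}{\nq}$ is infinitesimal (since the supremum is approached), giving left-computability. The right-computable case is symmetric.

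The main obstacle I anticipate is not any single hard estimate but the bookkeeping of passing between ``holds after some \finite rank'' (the native \cheap quantifier) and ``holds for all $\sn$'' (the classical quantifier), together with converting a bound against an arbitrary \ieffective infinitesimal into the canonical $\frac{1}{2^{\comega}}$ via a \shift. These steps rely essentially on Theorem \ref{pr-patchproperty} and on the Corollary characterizing \ieffective infinitesimals, and I would be careful that applying a \shift to $\frac{\np}{\nq}$ (as in Theorem \ref{machin}, using preservation Theorem \ref{pr-transfer}) does not disturb the inequality $\frac{\np}{\nq}\le\sx$ in the one-sided case — which it does not, since the shifted approximant still lies below $\sx$ by the preservation property.
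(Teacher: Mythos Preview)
Your approach is essentially the same as the paper's, which also unpacks the definition with the specific infinitesimal $\nepsilon=2^{-\comega}$ and patches the finitely many initial indices. The paper is slightly more direct: since Theorem \ref{machin} already says the definition is independent of the chosen \ieffective infinitesimal, it simply \emph{starts} with $\nepsilon=2^{-\comega}$ rather than taking an arbitrary one and shifting down to it; your extra shift machinery is correct but unnecessary.

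One small gap in your left-computable converse: from $\sx=\sup_{\sn}\frac{\ssp(\sn)}{\sq(\sn)}$ you conclude that $\sx-\frac{\ssp(\comega)}{\sq(\comega)}$ is infinitesimal ``since the supremum is approached''. That does not follow as stated: a sequence can have supremum $\sx$ without $\sx-\frac{\ssp(\iin)}{\sq(\iin)}$ being eventually below every standard $\sepsilon$ (e.g.\ alternate $0$ and $\sx-\frac{1}{\iin}$). You need to apply the monotone-envelope trick here as well, replacing $\frac{\ssp(\iin)}{\sq(\iin)}$ by $\max_{\ii\le\iin}\frac{\ssp(\ii)}{\sq(\ii)}$, before forming the \cheap rational; then the sequence is increasing with supremum $\sx$, hence converges to $\sx$, and the difference is genuinely infinitesimal. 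With that fix your argument is complete.
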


\begin{proof}
  Consider monotone computable infinitesimal
  $\nepsilon=\nepsilonn=\frac{1}{2^{\iin}}$. There must exist some
  computable \cheap integers $\np$ and $\nq$ such that
  $|\frac{\np}{\nq}-\sx| \le \nepsilon$. The \totalrecursive functions
  $\ssp$ and $\sq$ such that $\np=\npn=\ssp(\iin)$ and
  $\nq=\nqn=\sq(\iin)$ satisfies the above property after some \finite
  rank $\iin_0$. They can be fixed to $\ssp(\iin_0)$ and $\sq(\iin_0)$
  on the finitely many $\iin$ before $\iin_0$ so that this holds for
  all $\iin$.

Conversely, if this holds, $\nepsilon=\nepsilonn=\frac{1}{2^{\iin}}$
is a monotone computable infinitesimal, and $\np$ and $\nq$ such that
$\np=\npn=\ssp(\iin)$ and $\nq=\nqn=\sq(\iin)$ are computable \cheap integers such that
$|\frac{\np}{\nq}-\sx| \le \nepsilon$.

The statements for right and left-computability are obtained in a
similar fashion. 
\olivierOK{MFCS: besoin d'une preuve, ou on dit similaire, ou c'est clair.}

\end{proof}

\section{Continuity and Effective Uniform Continuity for Real
  Functions}
\label{sec:continuity}

The following theorem is left as an exercice in \cite{CheapNonStandardAnalysis}. 

\olivier{En fait, c'est pour $\ssf: \sX \rightarrow {\R}$ be a
  standard function on a standard metric space $\sX = (\sX,d)$ et nous
  on simplifie.

Par ailleurs, Terence Tao dit aussi: (Subsequential nonstandard
  version) If $\sx$ is a standard element of $\sX$ and $\ny$ is a cheap
  nonstandard element that is infinitesimally close to $\sx$, then
  $\ssf(\ny)$ is infinitesimally close to $\ssf(\sx)$ after passing to a
  subsequence.
}

\olivierOK{MFCS question de fond: 
Formulation alternative:  EST-CE MIEUX?
A function $\ssf: \sX \subset \R \to \R$ is continuous iff for all
standard element $\sx$ of $\sX$, and for all infinitesimal $\ndelta$
we have that  $\ssf(\sx + \ndelta)$ is
  infinitesimally close to $\ssf(\sx)$.  
}

\begin{theorem} \label{th:continuous}
A function $\ssf: \sX \subset \R \to \R$ is continuous iff 
for all
standard element $\sx$ of $\sX$, and for all \cheap element $\ny$
infinitely close to $\sx$, then $\ssf(\ny)$ is
  infinitesimally close to $\ssf(\sx)$.  
\end{theorem}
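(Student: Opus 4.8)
The statement is the \cheap analogue of the classical non-standard characterization of continuity, and the plan is to prove both implications directly. Throughout I read ``$\ny$ infinitely close to $\sx$'' and ``$\ssf(\ny)$ infinitesimally close to $\ssf(\sx)$'' as the assertion that the corresponding difference is infinitesimal, i.e.\ eventually (after some \finite rank) below every standard positive rational. I also take $\ny=\nyn$ to be a \cheap element of $\nX$, so that $\nyn \in \sX$ after some \finite rank and $\ssf(\ny)=\ssf(\nyn)$ is well defined.

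For the direction from left to right, assume $\ssf$ is continuous and fix a standard $\sx \in \sX$ together with a \cheap $\ny \approx \sx$. To obtain $\ssf(\ny) \approx \ssf(\sx)$ I would fix an arbitrary standard rational $\sepsilon>0$, and use continuity of $\ssf$ at $\sx$ to get a standard $\sdelta>0$ with $|\sz-\sx|<\sdelta \Rightarrow |\ssf(\sz)-\ssf(\sx)|<\sepsilon$ for all $\sz \in \sX$. Since $|\ny-\sx|$ is infinitesimal and $\sdelta$ is a standard positive, $|\nyn-\sx|<\sdelta$ holds after some \finite rank; applying the standard implication componentwise gives $|\ssf(\nyn)-\ssf(\sx)|<\sepsilon$ after some \finite rank. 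As $\sepsilon$ ranged over all standard positives, $|\ssf(\ny)-\ssf(\sx)|$ is infinitesimal, which is exactly $\ssf(\ny)\approx \ssf(\sx)$.

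For the converse I would argue by contraposition and build an explicit witness. If $\ssf$ is discontinuous at some standard $\sx \in \sX$, negating the $\sepsilon$--$\sdelta$ definition yields a standard $\sepsilon>0$ such that for every standard $\sdelta>0$ there is a point of $\sX$ within $\sdelta$ of $\sx$ whose image lies at distance at least $\sepsilon$ from $\ssf(\sx)$. Instantiating $\sdelta=\frac{1}{\iin+1}$ at each index selects $\nyn \in \sX$ with $|\nyn-\sx|<\frac{1}{\iin+1}$ and $|\ssf(\nyn)-\ssf(\sx)|\ge \sepsilon$, and I set $\ny=\nyn$. Then $|\ny-\sx|$ is dominated by the infinitesimal $\frac{1}{\comega+1}$, so $\ny \approx \sx$, whereas $|\ssf(\ny)-\ssf(\sx)|\ge \sepsilon$ at every index prevents this difference from being infinitesimal; hence $\ssf(\ny)$ is not infinitesimally close to $\ssf(\sx)$, which is the required failure of the right-hand side.

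The one genuinely delicate point is conceptual rather than computational: the meaning of ``infinitesimal'' must be fixed. If infinite closeness were required to mean domination by a \emph{computable} infinitesimal, as in the definition used in Section~\ref{sec:reals}, the left-to-right direction could fail for a merely continuous (rather than effectively continuous) $\ssf$, since $|\ssf(\nyn)-\ssf(\sx)|$ may tend to $0$ without being dominated by any computable infinitesimal. I would therefore state and prove this continuity theorem with the general notion (below every standard positive rational), for which both directions go through; the effective variants naturally accompany the uniform-continuity results in the remainder of the section. The remaining verifications are routine: that $\ny \in \nX$ keeps $\ssf(\ny)$ defined, and that the ``after some \finite rank'' quantifiers compose as used above.
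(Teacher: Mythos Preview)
Your proof is correct and follows essentially the same route as the paper's: both directions are handled exactly as in the paper (continuity $\Rightarrow$ componentwise $\epsilon$--$\delta$ argument after some finite rank; discontinuity $\Rightarrow$ build $\ny=\nyn$ from witnesses at scale $1/n$), with only cosmetic differences such as using $\frac{1}{\iin+1}$ in place of $\frac{1}{\iin}$. Your added remark that ``infinitesimal'' here must mean the general (not necessarily computable) notion is a correct and worthwhile clarification that the paper leaves implicit.
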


We provide here the proof for completeness: 


\begin{proof}
Function 
$\ssf$ is continuous in $\sx$ iff for all $\sepsilon$ there
exists some $\sdelta$ such that whenever $|\sx-\sy| \le
\sdelta$ we have $|\ssf(\sx)-\ssf(\sy)| \le \sepsilon$.

For the right to the left direction: Assume $\ssf$ is continuous. Consider
some standard $\sx$ of $\sX$, and some \cheap element $\ny$ infinitely
close to $\sx$. Consider some standard $0 < \sepsilon$, and the
corresponding  standard $\sdelta$. As $\sx-\ny$ is infinitesimal,
writing $\ny=\nyn$, 
there is some $\iin_0$ such that for all $\iin \ge \iin_0$, we have
$|\sx-\nyn| \le \sdelta$. Consequently, $|\ssf(\sx)-\ssf(\nyn)| \le
\sepsilon$, that is to say we have $|\ssf(\sx)-\ssf(\ny)| \le
\sepsilon$. As this holds for all standard $\sepsilon$, $\ssf(\ny)$ is
infinitely close to $\ssf(\sx)$.

For the left to the right direction: Assume $\ssf$ is not continuous. That means that there exists some
$\sepsilon$ such that for all $\sdelta$, say $\sdelta(\sn)=\frac{1}{\sn}$, there exists some
$\sy(\sn)$ with $|\sx-\sy(\sn)| \le
\sdelta(\sn)$ and $|\ssf(\sx)-\ssf(\sy(\sn))| > \sepsilon$.
That means that $\ny=\nyn=\sy(\sn)$ is some \cheap element infinitely
close to $\sx$ but with $\ssf(\ny)$ not infinitely close to
$\ssf(\sx)$.


\end{proof}

\olivierOK{MFCS: question de fond. Est-ce mieux de le formuler en:
A function $\ssf: \sX \subset \R \to \R$ is uniformly continuous  iff for all
\cheap element $\nx$ of $\sX$, and for all infinitesimal $\ndelta$
we have that  $\ssf(\nx + \ndelta)$ is
  infinitesimally close to $\ssf(\sx)$.  }

Similarly, the following can be established: 

\begin{theorem}\label{th:ucontinuous}
  A function $\ssf: \sX \subset \R \to \R$ is uniformly continuous if
  for all \cheap element $\nx$ of $\sX$, and for all \cheap element
  $\ny$ infinitely close to $\nx$, then $\ssf(\ny)$ is infinitesimally
  close to $\ssf(\nx)$.
\end{theorem}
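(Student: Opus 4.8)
The plan is to mirror the structure of the proof of Theorem \ref{th:continuous}, adapting the pointwise argument to the uniform setting, where the key difference is that the standard point $\sx$ is replaced by an arbitrary \cheap element $\nx$. First I would recall the standard characterization: $\ssf$ is uniformly continuous on $\sX$ iff for all $\sepsilon > 0$ there exists some $\sdelta > 0$, \emph{independent of the point}, such that whenever $\sx, \sy \in \sX$ with $|\sx - \sy| \le \sdelta$, we have $|\ssf(\sx) - \ssf(\sy)| \le \sepsilon$. The crucial feature distinguishing this from ordinary continuity is exactly this uniformity of $\sdelta$, and the proof should exploit it precisely at the point where, in the pointwise case, $\sx$ was held fixed.

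For the forward direction (uniform continuity implies the \cheap statement), I would take an arbitrary \cheap element $\nx = \nxn$ of $\sX$ and a \cheap element $\ny = \nyn$ infinitely close to $\nx$, so $\nx - \ny$ is infinitesimal. Fix any standard $0 < \sepsilon$ and let $\sdelta$ be the corresponding uniform modulus. Since $\nx - \ny$ is infinitesimal, writing both as sequences there is some \finite rank $\iin_0$ such that $|\nxn - \nyn| \le \sdelta$ for all $\iin \ge \iin_0$. Here the uniformity is essential: because the same $\sdelta$ works at \emph{every} point, we may apply the standard implication componentwise at the pair $(\nxn, \nyn)$, regardless of how $\nxn$ varies with $\iin$, to conclude $|\ssf(\nxn) - \ssf(\nyn)| \le \sepsilon$ for all $\iin \ge \iin_0$. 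This says $|\ssf(\nx) - \ssf(\ny)| \le \sepsilon$ in the \cheap sense; as this holds for every standard $\sepsilon$, $\ssf(\ny)$ is infinitely close to $\ssf(\nx)$.

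For the converse (the \cheap statement implies uniform continuity), I would argue contrapositively, again paralleling the earlier proof. Suppose $\ssf$ is not uniformly continuous: there exists a standard $\sepsilon > 0$ such that for each $\sdelta(\sn) = \frac{1}{\sn}$ one can find standard points $\sx(\sn), \sy(\sn) \in \sX$ with $|\sx(\sn) - \sy(\sn)| \le \frac{1}{\sn}$ yet $|\ssf(\sx(\sn)) - \ssf(\sy(\sn))| > \sepsilon$. Assembling these into \cheap elements $\nx = \nxn = \sx(\sn)$ and $\ny = \nyn = \sy(\sn)$, the element $\nx - \ny$ is infinitesimal (bounded in absolute value by $\frac{1}{\comega}$), so $\ny$ is infinitely close to $\nx$, while $|\ssf(\nx) - \ssf(\ny)| > \sepsilon$ shows $\ssf(\ny)$ is \emph{not} infinitely close to $\ssf(\nx)$, contradicting the \cheap hypothesis.

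The main obstacle, and the point requiring the most care, is the forward direction: in the pointwise Theorem \ref{th:continuous} the point $\sx$ was standard, so $\sdelta$ depended only on $\sepsilon$ and one could invoke continuity at the single fixed point $\sx$; here $\nx = \nxn$ is a genuinely varying sequence, and the componentwise application of the standard implication is legitimate only because uniform continuity supplies a $\sdelta$ valid simultaneously at all the points $\nxn$. I would make this dependence explicit so the reader sees exactly where uniformity (as opposed to mere continuity) is consumed. A minor subtlety worth noting is that the theorem as stated gives only one implication (``if''), so strictly the converse direction establishes more than claimed and could be presented as an equivalence, or simply restricted to the stated direction.
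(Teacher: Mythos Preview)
Your proof is correct and is exactly the adaptation the paper has in mind: the paper does not spell out a proof of Theorem \ref{th:ucontinuous} but merely prefaces it with ``Similarly, the following can be established,'' pointing back to the proof of Theorem \ref{th:continuous}. Your argument carries out precisely that parallel, and your remark that the forward direction is where uniformity (a point-independent $\sdelta$) is genuinely consumed is the right observation; note also, as you anticipated, that the paper later invokes the theorem in the forward direction (e.g.\ in the proof of Theorem \ref{th-th:arnaque}), so the biconditional you establish is in fact what is needed.
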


\olivier{MFCS: la remarque qui suit ca vient de wikipedia. Lui rendre
  hommage. }

\sabrinaOK{Observe the difference between definitions of continuity and uniform
continuity where quantification is over any standard or all \cheap
$x$. For functions over some compact domain, both notions are
equivalent, but for general functions, a function can be continuous
without being uniformly continuous.} For example standard function $x
\mapsto x^2$ with domain $\R$ is
not-uniformly continuous as $(\nx+1/\nx)^2= \nx^2 + 2+ \frac{1}{\nx^2}$ is not
infinitely close to $\nx^2$ when $\nx$ is infinitely large. However,
it
is uniformly-continuous (and hence continuous) on $[0,1]$ as for $\ny$ infinitesimal, 
$(\nx+\ny)^2= \nx^2 + 2 \nx \ny + \ny^2$ is always infinitely close to
$\nx^2$ when $\nx \in [0,1]$ (hence is bounded). 



Notice that above theorems are defining concepts of continuity and
uniform continuity very elegantly: there is no alternance of
quantifiers compared to the classical $\epsilon$-$\delta$
definition. 
Refer to
\cite{NSAteaching} for practical measurements on the benefits of
NSA  concepts in teaching.

\section{Computability For Real Functions}
\label{sec:functions}

We now go to computability issues: 

\begin{definition}
Fix some \ieffective computable infinitesimal 
$0<\nepsilon$.
%
A function $\ssf: \sX \subset \R \to \R$ has an effective modulus of continuity iff there exists some computable
nonstandard $\ndelta$ such that for all standard $\sx$ and $\sy$, 
$
\mbox{if } |\sx-\sy| \le \ndelta \mbox{ then }
|\ssf(\sx)-\ssf(\sy)| \le 
\nepsilon.$
\end{definition}

Obviously, such a function is uniformly continuous, and hence continuous.
More fundamentally:

\begin{theorem} \label{thduex}
 The previous concept is not depending on $0<\nepsilon$: if this
  holds for a \ieffective computable infinitesimal $0<\nepsilon$, then
  it holds for any other \ieffective computable infinitesimal
  $0<\nepsilon'$.
\end{theorem}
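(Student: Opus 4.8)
The plan is to reduce the $\nepsilon'$-statement to the $\nepsilon$-statement by a single common application of the $\shift$ operation, exactly in the spirit of the proof of Theorem \ref{machin}. First I would collect the two ingredients we are allowed to borrow. Since $\nepsilon$ is an \ieffective computable infinitesimal and $0<\nepsilon'$ is itself a computable infinitesimal, the corollary following Theorem \ref{monotone} gives that $\nepsilon$ is \machintrucby $\nepsilon'$; unfolding the definition of \machintrucby\ this produces a \cheap \emph{computable} index $\nn$ with $\shiftd{\nepsilon}{\nn} \le \nepsilon'$. Second, by the theorem stating that the shift of a computable number by a computable index is again computable, from the computable witness $\ndelta$ of the hypothesis I obtain a computable \cheap number $\ndelta' = \shiftd{\ndelta}{\nn}$.

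Next I would transport the modulus-of-continuity implication along this common shift $\nn$. For each fixed standard pair $(\sx,\sy)$, the statement ``$|\sx-\sy| \le \ndelta \Rightarrow |\ssf(\sx)-\ssf(\sy)| \le \nepsilon$'' is a standard property of the two \cheap numbers $\ndelta$ and $\nepsilon$, and by hypothesis it holds for all standard $\sx,\sy$. Applying the preservation property (Theorem \ref{pr-transfer}) with index $\nn$ to this property replaces $\ndelta$ by $\shiftd{\ndelta}{\nn}=\ndelta'$ and $\nepsilon$ by $\shiftd{\nepsilon}{\nn}$ \emph{simultaneously}. Hence, for all standard $\sx,\sy$, we get that $|\sx-\sy| \le \ndelta'$ implies $|\ssf(\sx)-\ssf(\sy)| \le \shiftd{\nepsilon}{\nn}$.

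Finally I would chain the two bounds. Combining the implication just obtained with $\shiftd{\nepsilon}{\nn} \le \nepsilon'$ yields, for all standard $\sx,\sy$, that $|\sx-\sy| \le \ndelta'$ implies $|\ssf(\sx)-\ssf(\sy)| \le \nepsilon'$. Since $\ndelta'$ is a computable \cheap number, this is precisely an effective modulus of continuity with respect to $\nepsilon'$, as required. Note that this already proves the stated direction for any two \ieffective infinitesimals, so no separate converse is needed.

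The step I expect to require the most care is the application of the preservation property to the \emph{implication} itself. One must be sure it is legitimate to treat ``$|\sx-\sy| \le \ndelta \Rightarrow |\ssf(\sx)-\ssf(\sy)| \le \nepsilon$'' as a single standard predicate in the \cheap pair $(\ndelta,\nepsilon)$ for each fixed standard $(\sx,\sy)$, and that shifting both arguments by the \emph{same} index $\nn$ preserves the implication rather than merely each inequality in isolation. Checking this at the level of representatives, namely that the componentwise implication holding eventually remains valid under the common reindexing defining $\shiftd{(\cdot)}{\nn}$, is the technical heart; it is routine once the quantifier over standard $\sx,\sy$ is handled pointwise, so that the reasoning reduces to the single-pair case where Theorem \ref{pr-transfer} applies verbatim.
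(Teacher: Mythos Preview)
Your proposal is correct and follows essentially the same route as the paper: obtain a computable index $\nn$ with $\shiftd{\nepsilon}{\nn}\le\nepsilon'$, shift $\ndelta$ by the same $\nn$, invoke the preservation property (Theorem~\ref{pr-transfer}) to transport the implication, and then chain with $\shiftd{\nepsilon}{\nn}\le\nepsilon'$. Your write-up is in fact more careful than the paper's, since you make explicit both that $\ndelta'=\shiftd{\ndelta}{\nn}$ is computable and that the preservation step applies to the whole implication viewed pointwise in $(\sx,\sy)$; the paper leaves these as tacit.
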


\begin{proof}
Infinitesimal $\nepsilon$ is \machintrucby  $\nepsilon'$. Let
$\Snepsilon=\shiftd{\nepsilon}{\nn}$ the corresponding infinitesimal:
that is to say 
$\Snepsilon \le \nepsilon'$. But then 
$\Sndelta=\shiftd{\nepsilon}{\nn}$ provides the property for
$\nepsilon'$ as we know using Theorem \ref{pr-transfer} that
  $$
\mbox{if } |\sx-\sy| \le \Sndelta \mbox{ then }
|\ssf(\sx)-\ssf(\sy)| \le 
\Snepsilon \le \nepsilon'.$$
\end{proof}

\begin{theorem} \label{thstepun}
We can always assume $\ndelta \in \nD$ in previous statements,
i.e. $\ndelta$ to be of the form $\frac{\np}{2^{\nm}}$ for some \cheap \integer{} $\nm$. 
\end{theorem}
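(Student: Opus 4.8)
The plan is to replace the given modulus $\ndelta$ by a smaller dyadic one. This is harmless: shrinking $\ndelta$ only strengthens the hypothesis $|\sx-\sy| \le \ndelta$ in the definition of an effective modulus of continuity, so the implication toward $|\ssf(\sx)-\ssf(\sy)| \le \nepsilon$ is preserved. Hence it suffices to exhibit a computable \cheap number $\ndelta' \in \nD$ with $0 < \ndelta' \le \ndelta$, where as usual we take $\ndelta>0$.

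To build such a $\ndelta'$ I would reuse the infinitesimal machinery rather than argue by hand. Recall that $2^{-\comega}$ is a monotone, hence \ieffective, computable infinitesimal, and that by the corollary to Theorem \ref{monotone} any \ieffective computable infinitesimal is \machintrucby any computable $0 < \ny$. Applying this with $\ny = \ndelta$ yields a \cheap computable index $\nn$ with $\shiftd{(2^{-\comega})}{\nn} \le \ndelta$. Setting $\ndelta' = \shiftd{(2^{-\comega})}{\nn}$, its $\iin$-th component is $2^{-(\iin + \nn_{\iin})}$, so that $\ndelta' = \frac{1}{2^{\comega + \nn}}$, which is exactly of the prescribed form $\frac{\np}{2^{\nm}}$ with $\np = 1$ and $\nm = \comega + \nn$.

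It then remains to check that $\nm = \comega + \nn$ is a computable \cheap \integer, which is immediate since $\comega$ and $\nn$ are computable and the computable \cheap integers form a semiring, and that $\ndelta' \le \ndelta$ forces the implication to hold for $\ndelta'$ as well, so that $\ndelta'$ is a dyadic effective modulus. I expect the only delicate point to be the positivity of $\ndelta$ needed to invoke the corollary, together with the handling of the finitely many initial ranks where a component of $\ndelta$ might vanish; both are dealt with exactly as in Theorem \ref{d}, via the patch property of Theorem \ref{pr-patchproperty}. Everything else follows from the definition of $\nD$ and the preservation property (Theorem \ref{pr-transfer}).
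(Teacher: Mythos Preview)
Your argument is correct. You shrink $\ndelta$ to a dyadic $\ndelta' = 2^{-(\comega+\nn)} \le \ndelta$ by invoking the corollary after Theorem~\ref{monotone} (the effective infinitesimal $2^{-\comega}$ is \machintrucby any computable $0<\ny$, in particular $\ndelta$), and then observe that a smaller modulus still witnesses the implication. The positivity caveat you flag is the right one, and your handling of it is adequate.

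The paper's own proof is different in presentation: it appeals to a round-trip through ``the proof of the previous theorem'' (Theorem~\ref{thduex}), the idea being that passing to the classical modulus $\sm(\iin)=\lceil -\log \ndelta_{\iin}\rceil$ and back (as in the proof of Theorem~\ref{th-th:nd}) automatically yields $\ndelta$ of the form $2^{-\sm(\comega)}$. That argument is a one-line appeal to work already done, but the mechanism producing the dyadic form is left implicit. Your route is more self-contained and stays entirely within the effectiveness machinery the paper has set up; it makes the construction of the dyadic replacement explicit rather than hiding it inside a round-trip. Both approaches ultimately amount to ``replace $\ndelta$ by the nearest smaller power of~$2$,'' but yours exhibits this directly.
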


\begin{proof}
This follows from the proof of previous theorem as applying the sense from left
to right, and then left to right provides a $\ndelta$ of  that form. \olivier{Une preuve directe?}
\end{proof}

This still corresponds to the classical notion from computable
analysis. Formally,  according to classical definitions and statements from
\cite{Wei00,Ko91,PourEl:comap},  this is equivalent to say that the following holds:

\begin{theorem} \label{th-th:nd}
A function $\ssf: \sX \subset \R \to \R$ has an effective modulus of
continuity iff there exists some \totalrecursive
$\sm: \N \to \N$ such that if $|\sx-\sy| \le 
  2^{-\sm(\sn)}$ then $|\ssf(\sx)-\ssf(\sy)| \le  2^{-\sn}$ for all
  standard $\sx,\sy \in \sX$. 
\end{theorem}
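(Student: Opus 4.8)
The plan is to show equivalence between the \cheap-style definition of "effective modulus of continuity" and the classical computable-analysis notion, namely existence of a \totalrecursive $\sm: \N \to \N$ with the stated property. I would organize this around the characterizations already established: Theorem \ref{thstepun} lets me assume $\ndelta$ is of the form $\frac{\np}{2^{\nm}}$ (dyadic), and Theorem \ref{thduex} lets me freely pick a convenient \ieffective computable infinitesimal. The natural choice, mirroring the proof of Theorem \ref{th:un}, is to fix the monotone computable infinitesimal $\nepsilon=\nepsilonn=\frac{1}{2^{\iin}}$, so that the "$\le \nepsilon$" bound on the $\iin$-th component reads exactly $\le 2^{-\iin}$, matching the classical $2^{-\sn}$ form.

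For the direction from the \cheap definition to the classical one, I would start from a computable nonstandard $\ndelta$ with the required implication, writing $\ndelta=\ndeltan=2^{-\sm(\iin)}$ for a \totalrecursive $\sm$ (possible by Theorem \ref{thstepun}, extracting the exponent as a \totalrecursive function of the index). The defining property of the modulus holds componentwise after some \finite rank $\iin_0$: for all standard $\sx,\sy$, if $|\sx-\sy| \le 2^{-\sm(\iin)}$ then $|\ssf(\sx)-\ssf(\sy)| \le 2^{-\iin}$. Using Theorem \ref{pr-patchproperty} (computability for all indices), I patch the finitely many values below $\iin_0$ so that the \totalrecursive function $\sm$ witnesses the classical property for every standard $\sn$. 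The translation between "holds for the nonstandard object" and "holds componentwise after a finite rank" is exactly the \cheap semantics recalled in Section \ref{CNSA}.

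Conversely, given a \totalrecursive $\sm: \N \to \N$ with the classical implication, I would set $\ndelta=\ndeltan=2^{-\sm(\iin)}$, which is a computable \cheap number (indeed dyadic), and take $\nepsilon=\nepsilonn=2^{-\iin}$, a monotone computable infinitesimal by the discussion preceding Theorem \ref{monotone}. The classical property gives exactly, for each component $\iin$, that $|\sx-\sy| \le \ndeltan$ implies $|\ssf(\sx)-\ssf(\sy)| \le \nepsilonn$ for all standard $\sx,\sy \in \sX$, which is precisely the \cheap defining implication; hence $\ssf$ has an effective modulus of continuity with respect to this \nepsilon, and by Theorem \ref{thduex} the choice of infinitesimal is immaterial.

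The step I expect to require the most care is the first direction: making rigorous that the componentwise-eventually-true implication can be repaired into an everywhere-true \totalrecursive witness. This is where Theorem \ref{pr-patchproperty} is essential, and one must check that extracting the exponent $\sm(\iin)$ from the dyadic form of $\ndelta$ genuinely yields a \totalrecursive function and not merely a total one. Everything else is a direct unwinding of the \cheap semantics and the two preparatory theorems, so I would keep those parts brief and flag the patching argument as the substantive point.
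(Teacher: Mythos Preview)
Your proposal is correct and follows essentially the same approach as the paper: fix $\nepsilon=2^{-\comega}$, extract a \totalrecursive exponent from the computable $\ndelta$, and patch finitely many initial values; conversely, plug $\ndelta=2^{-\sm(\iin)}$ back in. One small correction: Theorem \ref{thstepun} only gives $\ndelta\in\nD$, i.e.\ of the form $\frac{\np}{2^{\nm}}$, not $2^{-\sm(\iin)}$; the paper sidesteps this by taking $\sm(\iin)=\lceil -\log \ndelta_{\iin}\rceil$ directly from any computable $\ndelta$ (so that $2^{-\sm(\iin)}\le\ndelta_{\iin}$), which is what you should do as well rather than invoking Theorem \ref{thstepun}.
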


\begin{proof}
Assume there is such a recursive $\sm$. Consider monotone computable \cheap infinitesimal 
 $0<\nepsilon$ given by $\nepsilon=\nepsilonn=2^{-\iin}$.  Consider
 computable \cheap rational $\ndelta$ given by 
$\ndelta=\ndeltan=2^{-\sm(\sn)}$. Then $$\mbox{if } |\sx-\sy| \le \ndelta \mbox{ then }
|\ssf(\sx)-\ssf(\sy)| \le 
\nepsilon$$
holds. 

Conversely, assume that function $\ssf$ has an effective modulus of continuity. Consider monotone computable
infinitesimal $\nepsilon=2^{-\comega}$. There must exists some
computable \cheap  $\ndelta$ such that
$$\mbox{if } |\sx-\sy| \le \ndelta \mbox{ then } |\ssf(\sx)-\ssf(\sy)|
\le \nepsilon.$$

That means that there exists some \finite rank $\iin_0$ such that the
properties holds componentwise for $\iin \ge \iin_0$. 
Write $\ndelta=\ndeltan=\ssf(\iin)$ for some \totalrecursive
$\ssf$. Consider $\sm(\iin)=\partientieresup{-\log(\ssf(\iin))}$ for $\iin \ge \iin_0$ and
$\sm(\iin)=\sm(\iin_0)$ for $\iin < \iin_0$. This provides the
expected property for all $\iin$. 
\end{proof}

\olivier{Commente:
There are several equivalent ways to define the concept of computable
function $f:[a,b]\rightarrow\mathbb{R}$ in computable analysis. See
e.g. \cite{Wei00,Ko91,PourEl:comap}.  We follow in this section inspiration from
\cite{Ko91}.
}

\olivierOK{meilleurs refs que 
\url{https://en.wikipedia.org/wiki/Computable_real_function}.
}

Consider an indexed family of \cheap numbers: to  some parameter
$\si$, is associated some \cheap number $\nx(\si)=\nx(\si)_{\iin}$.
We say that the family is uniformly computable in $\si$ if there exists
some standard \totalrecursive function $\ssf(\si,\iin)$ such that $\nx(\si)_{\iin}=
\ssf(\si,\iin)$.

\begin{definition}[Computability for functions over the reals]
 Fix some \ieffective computable infinitesimal $\nepsilon$. We say that $\ssf:
  [\sa,\ssb] \subset \R \rightarrow {\R}$ (standard domain) is
  computable iff
\begin{enumerate}

\item{} [discretization property]: 
there exists  some computable $\ndelta$ such that 
$$
\mbox{if } |\sx-\sy| \le \ndelta \mbox{ then }
|\ssf(\sx)-\ssf(\sy)| \le 
\nepsilon.$$
\item {} [it has some uniform approximation function over the
  rationals]:
There exists some indexed family of \cheap rationals
$\npsi(\sq)$, uniformly computable in $\sq $, 
such that 
$$|\npsi(\sq)- \ssf(\sq)| \leq \nepsilon$$
for all $\sq \in \Q \cap  [a,b]$

\olivier{Bref on a changé la définiton: 

Ancienne définiton était: there exists some \totalrecursive $\spsi :( \D \cap [\sa, \ssb]) \times \N
                                \to \D$ 
such that for all standard rational $\sd \in \D$, standard integer $\sn$
  $$ |\spsi(\sd, \sn) - \ssf(\sd)| \leq 2^{-n}.$$
}

\olivier{A ete a un moment:
there
  exists some \totalrecursive $\spsi: \Q \times \QP \to \Q$ such that $$
  |\spsi(\nq,\nepsilon) - \ssf(\nq)| \leq \nepsilon$$
for all \cheap rational number $\nq \in [a,b]$
\olivier{MFCS: introduire notation $\QP$}

C'est un intermédiaire, mais moins joli probablement. 
}

\end{enumerate}
\end{definition}

Before going to the statement and proof that this corresponds to the
classical notion of computability for functions over the reals, notice
that one main interest of the above definition is that it sounds morecl
natural and easier to grasp than classical ones\footnote{See statement
  of Theorem \ref{th-th-deft} for example of a classical definition.}:
in particular, item $1$. is a very natural discretization
property\footnote{It follows from the proofs that the
    discretization property is equivalent to the existence of an
    effective modulus of continuity. However, we believe the latter
    concept is harder to grasp, as basically talking about the
    dependence of a $\delta$ from $\epsilon$, whose meanings is not so
    natural.}.


\begin{theorem} \label{pr-th:inde}
The previous definition is not depending on $\nepsilon$.
\end{theorem}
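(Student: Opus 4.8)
The plan is to reduce everything to a single shift and to treat the two clauses of the definition separately, exactly in the spirit of the proofs of Theorems \ref{machin} and \ref{thduex}. Suppose the definition holds for the \ieffective computable infinitesimal $\nepsilon$, and let $0<\nepsilon'$ be another \ieffective computable infinitesimal. By the corollary to Theorem \ref{monotone}, $\nepsilon$ is \machintrucby $\nepsilon'$, so there is a \cheap computable index $\nn$ with $\shiftd{\nepsilon}{\nn}\le\nepsilon'$. The whole argument will consist of applying the single shift $\shiftd{(.)}{\nn}$ to the witnesses provided for $\nepsilon$ and invoking the preservation property (Theorem \ref{pr-transfer}), using that $\nn$ is independent of any family parameter.

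For the discretization clause (item 1) I would argue verbatim as in Theorem \ref{thduex}: if $\ndelta$ witnesses the property for $\nepsilon$, then $\Sndelta=\shiftd{\ndelta}{\nn}$ is again computable, and by preservation, whenever $|\sx-\sy|\le\Sndelta$ we obtain $|\ssf(\sx)-\ssf(\sy)|\le\Snepsilon=\shiftd{\nepsilon}{\nn}\le\nepsilon'$, which is the discretization property for $\nepsilon'$.

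For the uniform-approximation clause (item 2) I would shift the entire family, setting $\SFT{\npsi}(\sq)=\shiftd{\npsi(\sq)}{\nn}$. The point to verify is that this family is still uniformly computable in $\sq$: writing $\npsi(\sq)_{\iin}=\ssf(\sq,\iin)$ for a standard \totalrecursive $\ssf$ and $\nn=\nn_{\iin}=\sg(\iin)$ for a \totalrecursive $\sg$, the components of $\SFT{\npsi}(\sq)$ are $\ssf(\sq,\iin+\sg(\iin))$, which is again \totalrecursive in $(\sq,\iin)$. Since each $\ssf(\sq)$ is a standard real it is invariant under the shift, so applying preservation to the relation $|\npsi(\sq)-\ssf(\sq)|\le\nepsilon$ (for each fixed $\sq$, with the same $\nn$) yields $|\SFT{\npsi}(\sq)-\ssf(\sq)|\le\shiftd{\nepsilon}{\nn}\le\nepsilon'$ for every $\sq\in\Q\cap[a,b]$. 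Hence $\SFT{\npsi}$ witnesses item 2 for $\nepsilon'$.

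The only genuine subtlety — and the step I expect to require the most care — is item 2: one must apply the shift uniformly in the family parameter $\sq$ with a single $\nn$, check that uniform computability is thereby preserved, and use that the standard value $\ssf(\sq)$ is unaffected because the shift fixes standard objects. Everything else is a direct transcription of the shift-and-preserve pattern already established.
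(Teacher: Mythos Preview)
Your proposal is correct and matches the paper's approach exactly: item~1 is reduced to Theorem~\ref{thduex}, and item~2 is handled by the same shift-and-preserve pattern, which is precisely what the paper means by ``a reasoning similar to Theorem~\ref{thduex}'' (your explicit verification that uniform computability survives the shift is the detail the paper leaves implicit). One small slip to fix: you use the letter $\ssf$ both for the real-valued function being approximated and for the total recursive function realizing $\npsi$; pick a fresh name for the latter.
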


\begin{proof}
This follows from Theorems \ref{thduex} for item
1. Item 2. holds for some effective $\epsilon$ iff it holds for any
effective $\epsilon$ using a reasoning similar to Theorem
\ref{thduex}. \olivier{Is that clear? (and true)}
\end{proof}

\begin{theorem} \label{pr-thdyadic}
Without loss of generality, we can always assume $\ndelta$ and $\sq$ in above definiton to be in
$\nD$ and $\D$, i.e. to be of
the form $2^{\nm}$ for some \cheap \integer{} $\nm$ or $\sm$ instead of being
rational numbers. 
\end{theorem}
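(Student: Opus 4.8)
The plan is to split the statement into its two independent assertions: the claim about the discretization witness $\ndelta$ and the claim about the evaluation points $\sq$. The first is essentially already done. Indeed, item 1 of the definition is literally the effective modulus of continuity property, so I would just invoke Theorem \ref{thstepun}, which shows the witnessing $\ndelta$ may be taken in $\nD$; and since shrinking $\ndelta$ to any smaller value only weakens the hypothesis ``$|\sx-\sy|\le\ndelta$'' and hence preserves the discretization property, one can further replace it by a pure negative power of two, giving the form $2^{\nm}$.

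The second assertion is the substantive one: I would show that replacing ``for all $\sq \in \Q \cap [\sa,\ssb]$'' by ``for all $\sq \in \D \cap [\sa,\ssb]$'' in item 2 yields an equivalent definition. One direction is immediate since $\D \subset \Q$. For the nontrivial converse, I would use the discretization property to fill in the non-dyadic rationals. Concretely, working throughout with $\nepsilon/2$ in place of $\nepsilon$ (legitimate by independence from the infinitesimal, Theorem \ref{pr-th:inde}), I would, given an arbitrary rational $\sq \in [\sa,\ssb]$, round it to a nearby dyadic $\sd \in [\sa,\ssb]$ with $|\sq-\sd| \le \ndelta$ (rounding inward near the endpoints). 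Setting $\npsi'(\sq) = \npsi(\sd)$ and bounding $|\npsi'(\sq)-\ssf(\sq)|$ by the triangle inequality through $\ssf(\sd)$ gives the approximation error $|\npsi(\sd)-\ssf(\sd)| \le \nepsilon/2$ plus the modulus-of-continuity error $|\ssf(\sd)-\ssf(\sq)| \le \nepsilon/2$, totalling $\nepsilon$; this recovers item 2 over all of $\Q \cap [\sa,\ssb]$.

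The main obstacle I anticipate is the bookkeeping around uniform computability of the extended family $\npsi'$, rather than the analytic inequalities, which are routine. Writing $\ndelta = \ndeltan = \sg(\iin)$ for a \totalrecursive $\sg$, the rounding map $\sq \mapsto \sd$ depends on the index $\iin$ through $\sg$, so I must check that it is \totalrecursive jointly in $\sq$ and the rank $\iin$ (rounding $\sq$ to the nearest multiple of an appropriate negative power of two determined by $\sg(\iin)$), and that composing this map componentwise with the family $\npsi$, which is uniformly computable in its standard dyadic argument, again produces a family uniformly computable in the standard parameter $\sq$. Once this composition is seen to be witnessed by a single \totalrecursive function of $(\sq,\iin)$, the equivalence follows and the \finitely many indices below the relevant \finite rank can, as usual, be corrected by hard-coding.
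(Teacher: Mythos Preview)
Your proposal is correct and follows essentially the same route as the paper: invoke Theorem \ref{thstepun} for the $\ndelta$ part, and for the $\sq$ part round a rational to a nearby dyadic using the discretization witness $\ndelta$, then control the error via the triangle inequality and the independence of the definition from the chosen infinitesimal (Theorem \ref{pr-th:inde}). The only cosmetic difference is that you front-load $\nepsilon/2$ to land exactly on $\nepsilon$, whereas the paper allows the error to become $2\nepsilon$ and then invokes independence afterwards; your treatment of the uniform-computability bookkeeping is also more explicit than the paper's, which leaves that verification implicit.
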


\begin{proof}
The fact that this is true for item 1. is Theorem \ref{thstepun}.
Now, if 1. holds, then $\psi(\nq)$ can be replaced by $\psi(\nq')$
where $\nq' \in \nD$ is approximating \cheap rational $\nq$ at
precision $\ndelta$. The error would then be at most $2\nepsilon$
instead of $\nepsilon$. But as this is equivalent to hold for
$\nepsilon$ by above Theorem, we get the statement.
\olivier{Is that clear? (and true)}
\end{proof}

It  turns out that our definition is equivalent to the classical notion of
computability in computable analysis.    Formally,  according to classical definitions and statements from
\cite{Wei00,Ko91,PourEl:comap},  this is equivalent to say that the following holds:

\begin{theorem} \label{th-th-deft}
\label{Teo:Ko}A real function $\ssf:[\sa,\ssb]\rightarrow\mathbb{R}$ is
computable iff it is computable in the sense of computable analysis. 
\end{theorem}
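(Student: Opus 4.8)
The plan is to prove the equivalence by matching the two clauses of our Definition with the two standard ingredients of computability in computable analysis (in the sense of \cite{Ko91,Wei00,PourEl:comap}): a recursive modulus of continuity, together with an effective approximation of $\ssf$ on the dyadic rationals. Throughout I would fix the monotone computable infinitesimal $\nepsilon=\nepsilonn=2^{-\iin}$, which is legitimate because by Theorem~\ref{pr-th:inde} the notion does not depend on the chosen \ieffective infinitesimal, and I would use Theorem~\ref{pr-thdyadic} to assume that both $\ndelta$ and the approximating rationals lie in $\nD$, i.e. are dyadic.

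First direction (our notion $\Rightarrow$ classical). Item~1 is exactly the discretization property, which by Theorem~\ref{th-th:nd} is equivalent to the existence of a \totalrecursive modulus $\sm:\N\to\N$ with $|\sx-\sy|\le 2^{-\sm(\sn)}$ implying $|\ssf(\sx)-\ssf(\sy)|\le 2^{-\sn}$; this yields the classical recursive modulus of continuity. For the effective approximation, item~2 provides a family $\npsi(\sq)=\npsi(\sq)_{\iin}$, uniformly computable in $\sq$, hence of the form $\npsi(\sq)_{\iin}=\sF(\sq,\iin)$ for a \totalrecursive $\sF$, with $|\npsi(\sq)-\ssf(\sq)|\le\nepsilon$. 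Unfolding the infinitesimal, this means $|\sF(\sq,\iin)-\ssf(\sq)|\le 2^{-\iin}$ after some \finite rank; the finitely many offending initial indices are repaired using Theorem~\ref{pr-patchproperty}, so that $\sF(\sq,\sn)$ becomes a genuine \totalrecursive approximation of $\ssf$ on dyadics to precision $2^{-\sn}$. A recursive modulus together with a recursive dyadic approximation is precisely the classical characterization of a computable real function, so $\ssf$ is computable in the sense of computable analysis.

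Second direction (classical $\Rightarrow$ our notion). Conversely, assume $\ssf$ is computable in the classical sense, so we are given a \totalrecursive modulus $\sm$ and a \totalrecursive dyadic approximation $\sF(\sd,\sn)$ with $|\sF(\sd,\sn)-\ssf(\sd)|\le 2^{-\sn}$. From $\sm$, Theorem~\ref{th-th:nd} supplies a computable $\ndelta$ realizing the discretization property (item~1) for $\nepsilon=2^{-\comega}$. For item~2, I would set $\npsi(\sq)=\npsi(\sq)_{\iin}:=\sF(\sq,\iin)$, a family uniformly computable in $\sq$ by construction; then $|\npsi(\sq)-\ssf(\sq)|\le 2^{-\comega}=\nepsilon$ holds, which is exactly the uniform approximation clause. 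Hence $\ssf$ satisfies our definition.

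The crux of the argument is the bookkeeping that identifies the asymptotic index $\iin$ of the \cheap framework with the precision parameter $\sn$ of the classical definition: the single clause $|\npsi(\sq)-\ssf(\sq)|\le 2^{-\comega}$ silently encodes the whole family of precision requirements $|\sF(\sq,\sn)-\ssf(\sq)|\le 2^{-\sn}$. The only genuine obstacle is the passage through ``after some \finite rank'': the \cheap inequalities are only guaranteed eventually, so one must invoke the patch property (Theorem~\ref{pr-patchproperty}) to correct the finitely many initial indices and recover honest total recursive functions defined on all of $\N$. Finally, one has to appeal to the standard fact from computable analysis that a recursive modulus of continuity plus a recursive approximation on the dense recursive set of dyadics together imply full computability of $\ssf$ in the oracle sense of \cite{Ko91}; this is the one ingredient I would import rather than reprove.
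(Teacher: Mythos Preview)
Your proposal is correct and follows essentially the same route as the paper: both invoke the classical characterization from \cite{Ko91} (recursive modulus plus recursive dyadic approximation), reduce item~1 to Theorem~\ref{th-th:nd}, and handle item~2 by identifying the cheap index $\iin$ with the classical precision parameter via $\npsi(\sq)_{\iin}=\sF(\sq,\iin)$, after fixing $\nepsilon=2^{-\comega}$ through Theorems~\ref{pr-th:inde} and~\ref{pr-thdyadic}. Your version is slightly more explicit about the ``after some \finite rank'' repair; note however that the rank may in principle depend on $\sq$, so a single application of Theorem~\ref{pr-patchproperty} is not literally enough without a further uniformity argument---but the paper's own proof glosses over exactly the same point.
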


\begin{proof}
It is proved for example in \cite[Corollary
2.14]{Ko91}  that $\ssf$ as above is computable in
the sense of computable analysis iff 
\begin{enumerate}
\item{}  [it has an effective modulus of continuity] 
there exists some \totalrecursive $\sm: \N \to \N$ such that if $|\sx-\sy| <
  2^{-\sm(\sn)}$ then $|\ssf(\sx)-\ssf(\sy)| < 2^{-\sn}$ for all
  standard $\sx,\sy \in \sX$. 
\item{}  [it has some computable approximation function]: there exists some \totalrecursive $\spsi : \D \cap [a,b] \times \N
                                \to \D$ 
such that for all standard rational $\sd \in \D$, standard integer $\sn$
  $$ |\spsi(\sd, \sn) - \ssf(\sd)| \leq 2^{-\sn}.$$
\end{enumerate}

Now,  Item 1. is equivalent to our Item 1. by Theorem
\ref{th-th:nd}. Concerning Item 2. Using Theorem
\ref{pr-thdyadic},  and Theorem \ref{pr-th:inde}, considering infinitesimal
$2^{-\comega}$, suppose  there
  exists some $\npsi'$ such that $$
  |\npsi'(\sd) - \ssf(\sd)| \leq \nepsilon=2^{-\comega}$$
for all \cheap number $\sd \in \D \cap [a,b]$. 
Write $\npsi'(\sd)=\npsi'(\sd)_{\iin}=\spsi(\sd,\iin)$. Above inequality  yields item 2 above.

Conversely, assume we have $ |\spsi'(\sd, \sn) - \ssf(\sd)| \leq
2^{-\sn}$ for some $\spsi'$. Consider
$\npsi(\sd)=\npsi(\sd)_{\iin}=\spsi'(\sd,\iin)$. This yields item 2. of our
definition. 
\olivier{On est d'accord? }

\olivier{il faut mieux parle calculabilité unforme ou pas? Ou ce qu'on
  dit suffit.}
\end{proof}

\sabrinaOK{colorer en orange les \cheap numbers dans la preuve}

\sabrinaOK{Regrouper les sections 7 et 8 dans une meme section sur les fonctions de
  reels et faire deux sous-sections pour respectivement la continuite
  et la calculabilite ?}

\section{Examples of Applications}
\label{sec:example}

Hence, as expected, results known about computable functions are true
in this framework, and conversely.
However, our framework can present alternative ways to establish
proofs.

\olivierOK{MFCS: Important. Dans le cas de l'analyse non-standard
  non-cheap, pour $x$ limité, on peut parler de $st(\nx)$, l'unique réel tel que
  $\nx-st(\nx)$ soit infinitésimal.

Il existe car on regarde $X=\{t \in \R| t < \nx\}$ qui est un ensemble
non-vide et majoré, dont on peut prendre le sup. On vérifie alors que
$\sup(X)$ vérifie la définition de $st(\nx)$. 

Mais dans le cas cheap, on ne peut pas parler de $X$. L'ordre $<$
n'est pas bien ordonné.

A vrai dire, une raison profonde c'est que pour l'entier \cheap
$(-1)^{\iin}$ il n'a pas de partie standard.
}

\olivier{MFCS NEW:}

Notice that \cheap analysis is however distinct from NSA, and some of
NSA statements and concepts needs to be adapted. As an example,
a \cheap number $\nx=\nxn$ is said to be limited if there exists some
standard real $\sy$ such that $|\nx| \le \sy$. In NSA,  to every limited non-standard
number $\nx$ is associated some unique standard real number $st(\nx)$,
called its standard-part, such that $\nx-st(\nx)$ is infinitesimal. 
This is not possible in \cheap analysis since for example
$\nx=\nxn=(-1)^{\iin}$ is clearly non infinitely close to any standard
real number.
We can however talk about the following:

\olivier{MFCS: Supprimer celui dont pas besoin.}
\begin{definition}[Standard part $st^+(\nx)$ and $st^-(\nx)$]
Assume $\nx=\nxn$ is limited.
We write $st^-(\nx)$ (respectively: $st^+(\nx)$) for the limit inf
(resp. limit sup) of $\iin \mapsto \nxn$.
\end{definition}

We write $inf(\nx)$ (respectively: $sup(\nx)$) for the inf
(resp. sup) of $\iin \mapsto \nxn$.
The following is easy to establish: \olivier{On est d'accord?}
\begin{lemma}
Assume $\nx=\nxn=\ssf(\iin)$ is limited, where $\ssf$ is some standard
function.

Some standard $\sy$ is some accumulation point of $\iin \mapsto
\ssf(\iin)$ iff there exists some infinitely large \cheap index $\nN=\nNn$,
monotone (i.e. $\nN_{\iin+1}\ge \nNn$), such that $|\ssf(\nN)-\sy|$ is
infinitesimal. 
Consequently, 
$st^-(\nx)$ and $st^+(\nx)$ are respectively the least and largest
such $\sy$, i.e. infinitely close to some $\ssf(\nN^-)$ and $\ssf(\nN^+)$.
\olivier{smaller devrait etre least. on dit least et quoi en anglais en fait?}
\end{lemma}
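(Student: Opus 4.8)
The plan is to recognise this lemma as the \cheap reformulation of the classical description of accumulation points as subsequential limits. An infinitely large monotone \cheap index $\nN=\nNn$ is exactly a non-decreasing sequence of indices with $\nNn\to\infty$, i.e. the index selection of a subsequence of $\iin\mapsto\ssf(\iin)$, and the assertion that $|\ssf(\nN)-\sy|$ is infinitesimal unfolds precisely to $\ssf(\nNn)\to\sy$. I would therefore prove the two directions by unwinding these definitions and appealing to the elementary fact that $\sy$ is an accumulation point of a sequence iff some subsequence converges to it.

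For the right-to-left direction, I would assume an infinitely large monotone $\nN=\nNn$ with $|\ssf(\nN)-\sy|$ infinitesimal. Fixing any standard rational $0<\sr$, infinitesimality gives $|\ssf(\nNn)-\sy|\le\sr$ after some finite rank, while $\nN$ being infinitely large forces the values $\nNn$ to be unbounded. Hence beyond that rank infinitely many distinct integers $\im=\nNn$ satisfy $|\ssf(\im)-\sy|\le\sr$; as $\sr$ was arbitrary, $\sy$ is an accumulation point of $\iin\mapsto\ssf(\iin)$.

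For the left-to-right direction, I would run the standard recursive extraction: since $\sy$ is an accumulation point, for each positive $\iin$ the set $\{\im:|\ssf(\im)-\sy|<1/\iin\}$ is infinite, so one can choose $\nNn$ strictly increasing in $\iin$ with $|\ssf(\nNn)-\sy|<1/\iin$. The resulting \cheap index $\nN=\nNn$ is monotone and, being strictly increasing in the integers, infinitely large; moreover $|\ssf(\nN)-\sy|<\frac{1}{\comega}$ holds componentwise (for $\iin\ge 1$), so it is infinitesimal. This produces the required witness $\nN$.

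For the ``Consequently'' part, I would note that $\nx$ limited means $\iin\mapsto\ssf(\iin)$ is eventually, hence globally, bounded, so that $st^-(\nx)=\liminf_{\iin}\ssf(\iin)$ and $st^+(\nx)=\limsup_{\iin}\ssf(\iin)$ are finite. The equivalence just proved identifies the set of admissible $\sy$ with the set of accumulation points, and it is classical that for a bounded real sequence the $\liminf$ and $\limsup$ are respectively the least and the greatest accumulation point, each attained by a convergent subsequence. Feeding these two extremal subsequences back through the left-to-right construction yields the monotone infinitely large indices $\nN^-$ and $\nN^+$ with $\ssf(\nN^-)$ infinitely close to $st^-(\nx)$ and $\ssf(\nN^+)$ infinitely close to $st^+(\nx)$. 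The only mildly delicate point, which I expect to be the sole obstacle, is the matching between the \cheap notion of monotone (the non-strict $\nN_{\iin+1}\ge\nNn$) and the strictly increasing selection used in the extraction: the extraction naturally yields a strict, hence a fortiori monotone and unbounded, sequence, while the converse direction only uses monotonicity together with unboundedness, so the two conventions agree without loss.
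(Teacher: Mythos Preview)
Your argument is correct. The paper itself does not give a proof of this lemma; it simply states that it is ``easy to establish'' and moves on. Your unwinding of the \cheap definitions into the classical characterisation of accumulation points via convergent subsequences is exactly the intended route, and each step (the extraction for the forward direction, the unboundedness-plus-monotonicity argument for the converse, and the appeal to the standard $\liminf/\limsup$ facts for the final clause) is sound. Your closing remark about the harmless discrepancy between strict and non-strict monotonicity is also accurate and worth keeping.
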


The following two statements have very elegant classical  proofs in
NSA: See e.g. \cite{robinson1996non,keislerbook} This can be adapted to a proof using \cheap
arguments. 

\olivierOK{a,b std : couleur:}

\begin{theorem}[Intermediate Value Theorem] \label{th-th:arnaque}
Every continuous standard function $\ssf:[\sa,\ssb] \to \R$ with $f(\sa)  \cdot f(\ssb) <0$
has a zero at some standard $\sx$. 
\end{theorem}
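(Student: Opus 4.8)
The plan is to mimic the classical non-standard proof, but using cheap
non-standard bisection together with the computability-free
``standard part'' machinery developed just above (the maps $st^-$ and
$st^+$ and the accumulation-point lemma). First I would discretize the
interval: fix the infinitely large cheap index $\comega$ and, for each
$\iin$, partition $[\sa,\ssb]$ into $\comega_{\iin}=\iin$ equal
subintervals, so the grid points are
$\sa + \ik \frac{\ssb-\sa}{\iin}$ for $\ik=0,\dots,\iin$. On each slice
$\iin$ one can scan these finitely many points and pick the last grid
point $\nx_{\iin}$ at which $\ssf$ is still $\le 0$ (this exists since
$\ssf(\sa)<0$, up to swapping $\sa,\ssb$ so that $\ssf(\sa)<0<\ssf(\ssb)$).
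This defines a limited cheap number $\nx=\nxn$, together with its
right neighbour $\ny=\nyn=\nx_{\iin}+\frac{\ssb-\sa}{\iin}$, satisfying
$\ssf(\nx)\le 0 \le \ssf(\ny)$ componentwise and $\ny-\nx$ infinitesimal
(it equals $\frac{\ssb-\sa}{\comega}$).

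Next I would extract a genuine standard zero. Since $\nx$ is limited,
the accumulation-point lemma above furnishes an infinitely large
monotone cheap index $\nN$ and a standard real $\sx=st^-(\nx)$ (say)
with $|\ssf(\nN)-\sx|$ infinitesimal; equivalently there is a
subsequence along which $\nxn$ converges to $\sx$. Because $\ny-\nx$ is
infinitesimal, $\ny$ is infinitely close to the same $\sx$ along that
index $\nN$. Now I invoke continuity of $\ssf$ at the standard point
$\sx$ in the cheap formulation of Theorem \ref{th:continuous}: since
$\ssf(\nN)$ and the corresponding shifted value are infinitely close to
$\sx$, their images $\ssf(\ssf(\nN))$ are infinitely close to
$\ssf(\sx)$. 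Thus $\ssf(\sx)$ is infinitely close to both a quantity
that is $\le 0$ and a quantity that is $\ge 0$ along this index, which
forces $\ssf(\sx)$ to be infinitesimally close to $0$; as $\ssf(\sx)$ is
a standard real, this gives $\ssf(\sx)=0$.

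The main obstacle, and the step demanding the most care, is the
extraction of a \emph{single} standard limit point at which the
left-bound and right-bound information survive simultaneously. In
cheap analysis the order $<$ is not total and a bare cheap number need
not have a standard part, so I cannot simply ``take $st(\nx)$'' as one
would in Robinson-style NSA; the $(-1)^{\iin}$ phenomenon noted above is
exactly the trap. The remedy is to pass through the monotone index
$\nN$ provided by the accumulation-point lemma so that $\nxn$ and
$\nyn$ are forced to converge along the \emph{same} subsequence, after
which continuity at the standard point $\sx$ closes the argument. The
remaining verifications (that the partition and the selection of
$\nx_{\iin}$ are legitimate cheap operations, and that $\ny-\nx$ is
indeed infinitesimal) are routine and follow directly from the
componentwise definitions in Section \ref{CNSA}.
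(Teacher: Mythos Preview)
Your plan is correct and follows essentially the same route as the paper: discretise $[\sa,\ssb]$ by an infinitely large cheap $\nN$, locate a grid point at a sign change, and pass to a standard part. The paper takes $\nk^-=\min\{\nk:\ssf(\nk/\nN)\ge 0\}$ and then applies Lemma~\ref{corobesoin} directly to $\nx(\nk^-)$ and $\nx(\nk^-)-1/\nN$, invoking \emph{uniform} continuity (Theorem~\ref{th:ucontinuous}) to compare $\ssf$ at these two non-standard points; you instead take the dual $\max\{\nk:\ssf(\nk/\nN)\le 0\}$, extract a convergent subsequence via the accumulation-point lemma, and then use only \emph{pointwise} continuity (Theorem~\ref{th:continuous}) at the standard limit. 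Both packagings work; yours is slightly more hands-on (you are in effect re-deriving Lemma~\ref{corobesoin} inline) but has the minor advantage of avoiding the appeal to uniform continuity.

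One cosmetic point: you overload the letter $\ssf$, using it both for the continuous function and for the sequence $\iin\mapsto\nxn$ in the accumulation-point lemma, which makes expressions like ``$\ssf(\ssf(\nN))$'' hard to parse. Introduce a separate name for the reindexed cheap number (e.g.\ $\nz=\nz_{\iin}=\nx_{\nN_{\iin}}$) and the argument reads cleanly: $\nz$ is infinitely close to the standard $\sx$, so $\ssf(\nz)$ is infinitely close to $\ssf(\sx)$; since $\ssf(\nz)\le 0$ componentwise, $\ssf(\sx)\le 0$, and symmetrically $\ssf(\sx)\ge 0$.
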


\olivierOK{Il faut écrire la preuve}

\label{w-th:arnaque}

We need the following whose proof is easy.

\begin{lemma} \label{corobesoin}
Assume that $\ssf: \R \to \R$ is some continuous function. Assume that
$\nx$ is limited. If $\ssf(\nx)>0$ then $\ssf(st^-(\nx)) \ge 0$.
If  $\ssf(\nx)<0$ then $\ssf(st^-(\nx)) \le 0$.
Similarly for $st^+$. 
If $\ssf(\nx)>0$ then $\ssf(inf(\nx)) \ge 0$.
If  $\ssf(\nx)<0$ then $\ssf(sup(\nx)) \le 0$.
Similarly for $st^+$. 

\end{lemma}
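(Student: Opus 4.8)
The plan is to reduce each claim to ordinary sequential continuity of the standard function $\ssf$ combined with the subsequence characterizations of $\liminf$, $\limsup$, $\inf$ and $\sup$. Since $\nx=\nxn$ is limited, the real sequence $\iin \mapsto \nxn$ is bounded, so $st^-(\nx)=\liminf_\iin \nxn$, $st^+(\nx)=\limsup_\iin \nxn$, $inf(\nx)$ and $sup(\nx)$ are all genuine standard reals. I would also unfold the hypotheses using the conventions of Section \ref{CNSA}: ``$\ssf(\nx)>0$'' means that $\ssf(\nxn)>0$ for every $\iin$ beyond some \finite rank $\iin_0$, and symmetrically ``$\ssf(\nx)<0$'' means $\ssf(\nxn)<0$ for $\iin \ge \iin_0$.

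First I would treat $st^-$. Set $a:=st^-(\nx)=\liminf_\iin \nxn$; by definition of $\liminf$ there are ranks $m_0<m_1<\cdots$ with $m_k\to\infty$ and $\std{x}_{m_k}\to a$. Because $m_k\to\infty$, we have $m_k\ge \iin_0$ for all large $k$, hence $\ssf(\std{x}_{m_k})>0$ for all large $k$. By continuity of $\ssf$ at the standard real $a$, $\ssf(\std{x}_{m_k})\to \ssf(a)$, and a limit of eventually positive reals is $\ge 0$; thus $\ssf(st^-(\nx))\ge 0$. The case $\ssf(\nx)<0$ is identical, the limit of eventually negative reals being $\le 0$, which gives $\ssf(st^-(\nx))\le 0$. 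The two statements for $st^+$ follow verbatim after replacing $\liminf$ by $\limsup$, whose defining subsequence again has ranks tending to infinity, so the eventual sign of $\ssf(\nxn)$ still applies.

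For $inf$ and $sup$ the same scheme works, the only change being the subsequence: by definition of $\inf$ (resp.\ $\sup$) one can pick ranks $m_k$ with $\std{x}_{m_k}\to \inf_\iin \nxn$ (resp.\ $\to \sup_\iin \nxn$), and continuity then yields $\ssf(inf(\nx))\ge 0$ when $\ssf(\nx)>0$ and $\ssf(sup(\nx))\le 0$ when $\ssf(\nx)<0$. The one delicate point — and the only real obstacle — is that, unlike for $\liminf$ and $\limsup$, a subsequence realizing the plain infimum need not have ranks tending to infinity; it could sit entirely on the \finite head $\iin<\iin_0$ where no sign information is available. I would handle this by reading $inf$ and $sup$ over the tail (where they coincide with the $\liminf$/$\limsup$ argument just given), or by observing that in the intended applications, such as the proof of Theorem \ref{th-th:arnaque}, the cheap number $\nx$ is constructed so that the sign condition on $\ssf(\nxn)$ holds at every rank; in either reading the sign hypothesis is available at the chosen $m_k$ and the continuity argument closes the proof.
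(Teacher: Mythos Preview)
The paper does not give a proof of this lemma; it simply states ``whose proof is easy'' and moves on. Your argument via subsequences and sequential continuity is exactly the natural one and is correct for the $st^-$ and $st^+$ claims.

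Your observation about $inf(\nx)$ and $sup(\nx)$ is not just a delicate point but a genuine defect in the statement as written: since cheap numbers are sequences modulo eventual equality, the plain infimum over \emph{all} indices is not well-defined on the equivalence class, and with a bad representative one can cook up counterexamples (e.g.\ $\nxn=1$ for $\iin\ge 1$, $\sx_0=0$, and $\ssf(t)=t-\tfrac12$). Your two readings --- take $inf$/$sup$ over the tail, or note that in the actual applications the sign of $\ssf(\nxn)$ holds at every rank of the specific representative being used --- are the right fixes, and either one makes your continuity argument go through. The paper's later uses of the lemma (Theorems \ref{th:arnaquec} and \ref{th:extremec}) are consistent with the second reading.
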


We now prove Theorem \ref{th-th:arnaque}.

\begin{proof}
  Assume w.l.o.g that $[\sa,\ssb]=[0,1]$ and that $\ssf(0)<0$ and $\ssf(1)>0$.  Take infinitely large
  \cheap integer $\nN$.  The idea is to consider the $\nx(\nk)$ of the form $\nk \cdot
  \frac{1}{\nN}$ for \cheap integer $0 \le \nk \le \nN$.


  To do so, consider $\nk^-=\min(\nS^+)$ where $\nS^+=\nS^+_{\iin}$
and $$\nS^+_{\iin}=\{0 \le \sk \le \nNn \mbox{ and } \ssf(\sk \cdot
  \frac{1}{\nNn})  \ge 0\}.$$ 
\olivier{Oui: $\nNn=\iin$ et donc ca s'écrit plus simplement} 

\olivier{Question: Ca peut aussi s'écrire du coup:
$$\nk^-=\min(\{0 \le \nk \le \nN \mbox{ and } \ssf(\nk \cdot
  \frac{1}{\nN}) \ge 0\})$$
Est-ce plus clair ou mieux?
En tout les cas, plus subtile car le $\ssf(\nk \cdot
  \frac{1}{\nN}) \ge 0$ est en fait en train de parler composante par composante
}

This latter set is non-empty as $\ssf(1)>0$ and does not contain $0$
since $\ssf(0)<0$. 

  Function $\ssf$ is continuous, hence uniformly continuous on its
  domain. 
Since $\nx(\nk^-)$ and $\nx(\nk^-) - \frac{1}{\nN}$ are infinitely
close, necessarily $\ssf(\nx(\nk^-))$ and $\ssf(\nx(\nk^-) -
\frac{1}{\nN})$ must be infinitely close by Theorem
  \ref{th:ucontinuous}. 


We have $\ssf(\nx(\nk^-)) \ge 0$ and $\ssf(\nx(\nk^-) - \frac{1}{\nN}) <
0$ by definition of $\nk^-$.
Consequently, using Lemma \ref{corobesoin}, necessarily $\ssf(st^+(\nx(\nk^-))) \ge 0$ and
$\ssf(st^+(\nx(\nk^-)))=\ssf(st^+(\nx(\nk^-) - \frac{1}{\nN})) \le
0$, hence $\ssf(\sx)=0$ for standard $\sx=st^+(\nx(\nk^-))$.

\olivier{MFCS: Ok pour cette preuve?}
\end{proof}

Notice that we could have considered $st^-$, or the $\max(\nS^-)$
defined symmetrically, and this could provide possibly other zeros. 

\begin{theorem}[Extreme Value Theorem] \label{th:extreme}
Every continuous  standard function $\ssf:[\sa,\ssb] \to \R$ attains
its maximum at some standard $\sx$. 
\end{theorem}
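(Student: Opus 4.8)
The plan is to re-use the discretisation already employed for the Intermediate Value Theorem (Theorem \ref{th-th:arnaque}). First I would assume without loss of generality that $[\sa,\ssb]=[0,1]$ (an affine reparametrisation preserves the location of the maximum), fix an infinitely large \cheap integer $\nN=\comega$, and consider the grid points $\nx(\nk)=\nk\cdot\frac{1}{\nN}$ for \cheap integers $0\le\nk\le\nN$. For each index $\iin$ the set $\{\,\sk/\nNn : 0\le\sk\le\nNn\,\}$ is finite, so $\ssf$ attains a maximum over it; letting $\nk^\star=\nk^\star_{\iin}$ select, componentwise, a maximiser and writing $\nz=\nx(\nk^\star)$, one gets a limited \cheap number $\nz\in[0,1]$ satisfying, componentwise and hence as \cheap numbers, $\ssf(\nz)\ge\ssf(\nx(\nk))$ for every \cheap index $0\le\nk\le\nN$.

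The candidate maximiser is the standard real $\sx=st^+(\nz)$. To see that $\ssf(\sx)\ge\ssf(\sy)$ for an arbitrary standard $\sy\in[0,1]$, I would first approximate $\sy$ on the grid: the \cheap integer $\nk(\sy)$ whose $\iin$-th component is $\partientiereinf{\sy\,\nNn}$ satisfies $|\sy-\nx(\nk(\sy))|\le\frac{1}{\nN}$, which is infinitesimal, so $\nx(\nk(\sy))$ is infinitely close to the standard $\sy$. By continuity (Theorem \ref{th:continuous}) the value $\ssf(\nx(\nk(\sy)))$ is then infinitely close to $\ssf(\sy)$, and combining with the maximality of $\nk^\star$ gives $\ssf(\nz)\ge\ssf(\nx(\nk(\sy)))\ge\ssf(\sy)-\ndelta$ for some non-negative infinitesimal $\ndelta$.

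It remains to move this inequality from $\ssf(\nz)$ to the value $\ssf(\sx)$ at the standard part. Here I would invoke the Lemma relating $st^+$ to accumulation points: there is a monotone infinitely large \cheap index $\nN^+$ (which can be taken so that $\nN^+-\comega\ge 0$) along which $\nz$ reindexes to a \cheap number infinitely close to $\sx$, and since $\sx$ is standard, continuity (Theorem \ref{th:continuous}) makes the corresponding reindexing of $\ssf(\nz)$ infinitely close to $\ssf(\sx)$. Treating this reindexing as a \shift by the \cheap index $\nN^+-\comega$, the preservation property (Theorem \ref{pr-transfer}) shows that the inequality $\ssf(\nz)\ge\ssf(\sy)-\ndelta$ survives it (the shifted $\ndelta$ is still infinitesimal), so $\ssf(\sx)\ge\ssf(\sy)-\ndelta'$ with $\ndelta'$ infinitesimal; as $\ssf(\sx)$ and $\ssf(\sy)$ are standard this forces $\ssf(\sx)\ge\ssf(\sy)$. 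Since every real of $[0,1]$ is standard, $\sx$ is a global maximiser.

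The step I expect to be the main obstacle is precisely this last transfer. Because the grid maximiser $\nz$ need not converge, one cannot naively identify $\ssf(st^+(\nz))$ with $st^+(\ssf(\nz))$; the \cheap framework forces the extraction of a genuine standard limit through a monotone infinitely large reindexing, and the delicate point is to verify that the maximality inequality is preserved under this reindexing. This is exactly what the preservation property (Theorem \ref{pr-transfer}), applied to the \shift by $\nN^+-\comega$, is there to guarantee, so the whole difficulty reduces to bookkeeping once the accumulation-point Lemma supplies the index $\nN^+$.
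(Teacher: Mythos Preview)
Your argument is correct and follows the paper's own proof closely: same grid $\nk/\nN$, componentwise maximiser (the paper takes the least one, $\nk^-$, and uses $st^-$ where you use $st^+$), and standard part of the grid maximiser as the candidate. The only difference is in the final transfer step, where the paper packages the passage from $\ssf(\nz)\ge\ssf(\sy)-\text{infinitesimal}$ to $\ssf(st(\nz))\ge\ssf(\sy)$ into Lemma~\ref{lelemmequimanque}, while you spell it out via the accumulation-point description of $st^+$, a shift by $\nN^+-\comega$, and the preservation property---this is the same idea made explicit rather than a genuinely different route.
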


We need the following whose proof is easy.

\begin{lemma} \label{lelemmequimanque}
Assume that $\ssf: \R \to \R$ is some continuous function. Assume that
$\nx$ is limited and $\sm$ is some standard value.
If $\ssf(\nx) \le \sm$ then $\ssf(st^-(\nx)) \le \sm$.
\end{lemma}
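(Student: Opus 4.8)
The plan is to realise the standard value $st^-(\nx)$ as the limit of a subsampling of $\nx$ and then to push the hypothesis through that subsampling together with the continuity criterion of Theorem~\ref{th:continuous}. Write $\sz = st^-(\nx)$; this is a well defined standard real, because $\nx=\nxn$ is limited, so its defining sequence $\iin \mapsto \nxn$ is bounded and its limit inf is finite. By the accumulation-point characterisation established just above, $\sz$ is the least accumulation point of $\iin \mapsto \nxn$, so there is a monotone infinitely large \cheap index $\nN$ such that the \cheap number $\nz$ obtained by sampling $\nx$ along $\nN$ (i.e. $\nz_{\iin}= \nx_{\nN_{\iin}}$) is infinitely close to the standard real $\sz$.

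First I would transfer the hypothesis to this subsampling. The statement $\ssf(\nx)\le \sm$ is a \cheap inequality, so $\ssf(\nxn)\le \sm$ holds for all $\iin$ beyond some \finite rank $\iin_0$. Since $\nN$ is infinitely large, taking the standard value $\iin_0$ in the definition of infinitely large gives $\nN_{\iin}\ge \iin_0$ for all large $\iin$, whence $\ssf(\nz_{\iin})=\ssf(\nx_{\nN_{\iin}})\le \sm$ for all large $\iin$; that is, $\ssf(\nz)\le \sm$ as a \cheap inequality. Next I would invoke continuity: as $\nz$ is infinitely close to the standard real $\sz$ and $\ssf$ is continuous, Theorem~\ref{th:continuous} yields that $\ssf(\nz)$ is infinitesimally close to the standard value $\ssf(\sz)$.

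The final, and only delicate, step is to deduce the standard inequality $\ssf(\sz)\le \sm$ from the \cheap inequality $\ssf(\nz)\le \sm$ together with $\ssf(\nz)$ being infinitely close to $\ssf(\sz)$. This is exactly where the failure of the law of excluded middle must be handled with care, so one cannot simply ``pass to the limit'' inside the \cheap order. I would argue by contradiction: if $\ssf(\sz)>\sm$, then $\ssf(\sz)-\sm$ is a standard positive real, and the infinitesimal closeness of $\ssf(\nz)$ to $\ssf(\sz)$ would force $\ssf(\nz)>\sm$ after some \finite rank, contradicting $\ssf(\nz)\le \sm$. Hence $\ssf(\sz)\le \sm$, which is the claim.

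I expect the transfer step and this last comparison to be the main obstacle, since both hinge on keeping track of finite ranks rather than on a genuine limiting argument; the infinitely-large hypothesis on $\nN$ is precisely what makes the ranks align. As a sanity check one may also give the purely classical one-line version: the subsequence $\iin \mapsto \nx_{\nN_{\iin}}$ converges to $\sz$, continuity gives $\ssf(\nx_{\nN_{\iin}})\to \ssf(\sz)$, and since every term is $\le \sm$ so is the limit. The same argument applied to $st^+$ yields the symmetric statement.
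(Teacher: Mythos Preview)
Your argument is correct. The paper does not actually supply a proof of this lemma; it merely introduces it with ``We need the following whose proof is easy'' and moves on. Your proposal therefore fills in exactly the details the paper omits, and each step---the subsampling along an infinitely large monotone index $\nN$ furnished by the accumulation-point characterisation, the transfer of the inequality $\ssf(\nx)\le\sm$ to the subsample via the rank comparison $\nN_{\iin}\ge\iin_0$, the application of Theorem~\ref{th:continuous}, and the final contradiction using that $\ssf(\sz)-\sm$ would be a standard positive real---is sound in the \cheap framework. The classical one-line version you append is also the natural ``unwrapped'' reading of the same argument.
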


We can now go to the proof of Theorem \ref{th:extreme}:

\begin{proof}
  Assume w.l.o.g that $[\sa,\ssb]=[0,1]$. Take infinitely large
  \cheap integer $\nN$.  The idea is to consider the $\nx(\nk)$ of the form $\nk \cdot
  \frac{1}{\nN}$ for \cheap integer $0 \le \nk \le \nN$.


  To do so, consider $\nk^-=\min(\nS)$, 
where $\nS=\nS_{\iin}$
and $$\nS_{\iin}=\{0 \le \sk \le \nNn \mbox{ and } \ssf(\sk \cdot
  \frac{1}{\nNn})  \ge \ssf(\sk' \cdot
  \frac{1}{\nNn}) \mbox{ for all } 0 \le \sk' \le \nNn
\}.$$ 
\olivier{Oui: $\nNn=\iin$ et donc ca s'écrit plus simplement} 

\olivier{Question: Ca peut aussi s'écrire du coup
  autrement. L'écrire. 
}

This latter set is non-empty as a finite set always has a maximum. 

  Function $\ssf$ is continuous, hence uniformly continuous on its
  domain.  

Consider $\sx=st^-(\frac{\nk^-}{\nN})$, and $\sm=f(\sx)$. Then we claim that $f(\sy) \le \sm$
for all standard $\sy$. Indeed, any $\sy$ contain at least one $\nk' \cdot
  \frac{1}{\nN}$, $0 \le \nk' \le \nN$ infinitely close to it: 
  Consider $\nk'=\lceil \nN
  \cdot \sy \rceil$. 

Hence $\ssf(\sy)$ is infinitely close to $\ssf(\nk' \cdot
  \frac{1}{\nN})$. The latter, is less than $\ssf(\nk^- \cdot
  \frac{1}{\nN})$ by definition of $\nk^-$, and hence less than $m$ by
  Lemma \ref{lelemmequimanque}.
\end{proof}

Notice that we could have considered $st^+$, or the $\max(\nS^-)$
defined symmetrically, and this could provide possibly other maximums.

We now adapt these proofs to go to computability issues:


\olivier{To Sabrina. Ca serait bien de rédiger des preuves
  alternatives pour les mettre en annexes pour comparaison. Je veux
  dire rédiger les preuves dans l'esprit de celels qui étaient au
  tableau (et dans tes notes) du théorème des valeurs
  itermédiaires. En essayant, par rapport à ce qui était au tableau,
  d'uniformiser les notations pour aider à la comparaison.}


\olivier{MFCS: first without computability, as this is not directly
  the same as classical non-standard analysis. Furthermore, it may
  help.}

\begin{lemma}
  \label{corobesoind}
  Assume that $\nx$ is limited and computable. Then $inf(\nx)$ is
  right-computable and $sup(\nx)$ is left-computable.
\end{lemma}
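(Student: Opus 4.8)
The plan is to reduce everything to the characterization of one-sided computability provided by Theorem~\ref{th:un}, namely that a standard real is right-computable (resp. left-computable) exactly when it is the infimum (resp. supremum) of a \totalrecursive sequence of rationals with positive denominators. Since $inf(\nx)$ and $sup(\nx)$ are by definition the infimum and supremum of the sequence $\iin \mapsto \nxn$, the whole task is to realise this sequence as such a \totalrecursive sequence of rationals and to check that the resulting infimum and supremum are genuine finite standard reals.

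First I would fix a \totalrecursive representative: as $\nx$ is computable, write $\nx = \nxn = \ssf(\iin)$ with $\ssf: \omega \to \Q$ \totalrecursive. Writing each value with a positive denominator yields \totalrecursive $\ssp$ and $\sq$ with $\sq(\iin) > 0$ and $\ssf(\iin) = \frac{\ssp(\iin)}{\sq(\iin)}$; in the integer case one simply takes $\sq(\iin) = 1$. Next I would invoke the limited hypothesis: there is a standard real $\sy$ with $|\nxn| \le \sy$ after some \finite rank $\iin_0$, and since the finitely many initial values $\ssf(0), \dots, \ssf(\iin_0 - 1)$ are fixed rationals, the sequence is bounded on all of $\omega$. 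Hence $inf(\nx) = \inf_{\iin} \ssf(\iin)$ and $sup(\nx) = \sup_{\iin} \ssf(\iin)$ are finite standard reals.

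To conclude, the identity $inf(\nx) = \inf_{\iin} \frac{\ssp(\iin)}{\sq(\iin)}$ exhibits $inf(\nx)$ as the infimum of a \totalrecursive sequence of rationals with positive denominators, which is precisely right-computability by Theorem~\ref{th:un}; symmetrically $sup(\nx) = \sup_{\iin} \frac{\ssp(\iin)}{\sq(\iin)}$ is left-computable by the same theorem. The main point requiring care is not the recursion-theoretic part, which is immediate, but the finiteness of the infimum and supremum: this is exactly where the limited hypothesis enters, since without a standard bound the infimum could be $-\infty$ and the supremum $+\infty$, and neither would name a standard real. Producing a positive \totalrecursive denominator $\sq$ is the only other thing to verify, and it is routine since the values of $\ssf$ are computable rationals.
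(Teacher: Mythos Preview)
Your argument is correct and is exactly the natural route: the paper does not spell out a proof for this lemma (it is left as easy), and the intended justification is precisely the reduction to Theorem~\ref{th:un} that you carry out, together with the observation that limitedness bounds the tail while the finitely many initial values of the chosen \totalrecursive representative are trivially bounded.
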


\olivier{On est d'accord?}




\olivierOK{a,b couleur. Verifier partout}

\begin{theorem} \label{th:arnaquec}
Every computable function $\ssf:[\sa,\ssb] \to \R$ with $f(\sa)  \cdot f(\ssb) <0$
has a right-computable zero and a left-computable zero.
\end{theorem}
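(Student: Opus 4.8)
The plan is to adapt the non-standard proof of the Intermediate Value Theorem (Theorem~\ref{th-th:arnaque}), replacing the non-effective exact sign test $\ssf(\sk/\nN)\ge 0$ by a decidable test on a rational approximation of $\ssf$, and then reading the zero off as a one-sided standard part whose computability is delivered by Lemma~\ref{corobesoind}. As there, I assume without loss of generality that $[\sa,\ssb]=[0,1]$ and $\ssf(0)<0<\ssf(1)$ (an affine change of variable and possibly replacing $\ssf$ by $-\ssf$ reduces to this case, and preserves both computability and left/right-computability of the located zero). Since $\ssf$ is computable, Theorem~\ref{th-th-deft} supplies a \totalrecursive dyadic approximation $\spsi'$ with $|\spsi'(\sd,\sn)-\ssf(\sd)|\le 2^{-\sn}$ for every dyadic $\sd\in[0,1]$ and every $\sn$. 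I choose the computable infinitely large $\nN=2^{\comega}$, so that the grid points $\sk/\nN$ are dyadic, and write $h(\sk,\iin)=\spsi'(\sk/2^{\iin},\iin)$, a rational with $|h(\sk,\iin)-\ssf(\sk/2^{\iin})|\le 2^{-\iin}$ uniformly in $\sk$.

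For the right-computable zero, consider the computable \cheap rational
$$\nx=\nxn=\min\{\, \sk/2^{\iin} : 0\le \sk\le 2^{\iin},\ h(\sk,\iin)>2^{-\iin}\,\},$$
where the minimand is a decidable finite search, so $\nxn$ is \totalrecursive in $\iin$; the set is nonempty for large $\iin$ since $h(2^{\iin},\iin)$ is infinitely close to $\ssf(1)>0$, and the finitely many earlier values are set to $1$ (Theorem~\ref{pr-patchproperty}). Then $\nx$ is limited in $[0,1]$, and by the threshold $\ssf(\nxn)\ge h(\cdot,\iin)-2^{-\iin}>0$, so $\ssf(\nx)>0$. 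Because each $\nxn$ lies in $\{\ssf>0\}$ we get $\nxn\ge \sz:=inf\{x\in[0,1]:\ssf(x)>0\}$, hence $inf(\nx)\ge\sz$; conversely, for any standard $\sepsilon>0$ there is a point of $\{\ssf>0\}$ in $(\sz,\sz+\sepsilon)$ with a whole neighbourhood on which $\ssf>0$, so the nearby grid point is eventually selected and $\nxn<\sz+\sepsilon$ for all large $\iin$, giving $inf(\nx)=\sz$.

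It remains to identify $\sz$ as a right-computable zero. Lemma~\ref{corobesoind} gives that $inf(\nx)=\sz$ is right-computable. For the zero property, $\ssf(\sz)\ge 0$ is Lemma~\ref{corobesoin} applied to $\ssf(\nx)>0$ and $inf(\nx)=\sz$, whereas $\ssf(\sz)\le 0$ follows from continuity (Theorem~\ref{th:continuous}) and $\ssf\le 0$ on $[0,\sz)$, which holds by definition of $\sz$; hence $\ssf(\sz)=0$. The left-computable zero is entirely symmetric: take $\ny=\nyn=\max\{\sk/2^{\iin}:0\le \sk\le 2^{\iin},\ h(\sk,\iin)<-2^{-\iin}\}$ (early values set to $0$), so that $\ssf(\ny)<0$ and $sup(\ny)=\sz':=sup\{x\in[0,1]:\ssf(x)<0\}$; then $sup(\ny)$ is left-computable by Lemma~\ref{corobesoind}, and $\ssf(\sz')=0$ by the dual clause of Lemma~\ref{corobesoin} together with continuity.

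The only genuine subtlety, and the reason the statement claims a one-sided computable zero rather than a computable one, is that the sign of $\ssf$ at a point is not decidable. The strict rational thresholds $h>2^{-\iin}$ and $h<-2^{-\iin}$ circumvent this, since each accepted grid point carries a true strict sign of $\ssf$; the price is that we do not locate an arbitrary zero but the boundary points $inf\{\ssf>0\}$ and $sup\{\ssf<0\}$, which are exactly the zeros approachable from one side through effectively positive, respectively negative, values. The main thing to get right is to pair each threshold with the matching standard part ($inf$ for the right, $sup$ for the left, via Lemma~\ref{corobesoind}); the approximation error $2^{-\iin}$ is infinitesimal and washes out on passing to the standard part.
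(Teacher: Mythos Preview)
Your proof is correct and follows essentially the same strategy as the paper's: discretise $[0,1]$ on an infinitely fine computable grid, replace the undecidable sign test on $\ssf$ by a decidable threshold on a rational approximation, take the componentwise minimum, and then extract the zero as $\inf$ (respectively $\sup$) of the resulting limited computable \cheap rational via Lemma~\ref{corobesoind}. The only cosmetic differences are that the paper chooses $\nN=\max(\comega,1/\ndelta)$ and threshold $\spsi\ge -\nepsilon$, whereas you take $\nN=2^{\comega}$ and threshold $h>2^{-\iin}$; your choice has the mild advantage that every selected grid point has a genuine strict sign of $\ssf$, which lets you identify the located zero explicitly as $\inf\{\ssf>0\}$ without invoking the effective modulus $\ndelta$ at all.
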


\begin{proof}
 Assume w.l.o.g that $[\sa,\ssb]=[0,1]$ and that $\ssf(0)<0$ and
 $\ssf(1)>0$. 

Consider effective infinitesimal $\nepsilon=\nepsilonn$. There must
exists some computable $\ndelta$ such that $ \mbox{if } |\sx-\sy| \le
\ndelta \mbox{ then } |\ssf(\sx)-\ssf(\sy)| \le \nepsilon.$ There must
also exists some indexed family of \cheap rationals
$\npsi(\sq)=\npsi(\sq)_{\iin}=\spsi(\sq,\iin)$, uniformly computable
in $\sq $, such that $|\npsi(\sq)- \ssf(\sq)| \leq \nepsilon$ for all
$\sq \in \Q \cap [0,1]$.

Consider infinitely large computable \cheap integer $\nN=\max(\comega,\frac{1}{\ndelta})$.  The idea is to consider the $\nx(\nk)$ of the form $\nk \cdot
  \frac{1}{\nN}$ for \cheap integer $0 \le \nk \le \nN$.

  To do so, consider $\nk^-=\min(\nS^+)$ where $\nS^+=\nS^+_{\iin}$
and $$\nS^+_{\iin}=\{0 \le \sk \le \nNn \mbox{ and } \spsi(\sk \cdot
  \frac{1}{\nNn},\iin) \ge -\nepsilonn\}.$$ 

\olivier{Question: Ca peut aussi s'écrire du coup:
$$\nk^-=\min(\{0 \le \nk \le \nN \mbox{ and } \spsi(\nk \cdot
  \frac{1}{\nN},\comega) \ge -\nepsilon\})$$
Est-ce plus clair ou mieux?
En tout les cas, plus subtile car le $\spsi(\nk \cdot
  \frac{1}{\nN},\comega) \ge -\nepsilon$ est en fait en train de parler composante par composante
}


  Function $\ssf$ is continuous, hence uniformly continuous on its
  domain. 
Since $\nx(\nk^-)$ and $\nx(\nk^-) - \frac{1}{\nN}$ are infinitely
close, necessarily $\ssf(\nx(\nk^-))$ and $\ssf(\nx(\nk^-) -
\frac{1}{\nN})$ must be infinitely close by Theorem
  \ref{th:ucontinuous}.

We have $\ssf(\nx(\nk^-)) \ge - 2\epsilon $ and $\ssf(\nx(\nk^-) - \frac{1}{\nN}) <
0$ by definition of $\nk^-$ and $\npsi$.

\olivier{On est d'accord sur ces majorations ?}


Consequently\footnote{Formally, we are using implicitely Lemma
  \ref{corobesoin}, which is easy to establish from definitions.} necessarily $\ssf(st^+(\nx(\nk^-))) \ge 0$ and
$\ssf(st^+(\nx(\nk^-)))=\ssf(st^+(\nx(\nk^-) - \frac{1}{\nN})) \le
0$, hence $\ssf(\sx)=0$ for standard $\sx=st^+(\nx(\nk^-))$. 

Furthermore by the
 property about $\ssf$ and $\ndelta$ and $\nepsilon$, we are sure that 
 $\ssf(\sy)<0$ for all $\sy < \sx$. Consequently, we also have
 $\sx=\sup(\nx(\nk^-))$. \olivier{On est bien d'accord sur ce
   paragraphe au dessus?}
From Lemma \ref{corobesoind}, it is
right-computable. 

Considering $st^+$, and the $\max(\nS^-)$
defined symmetrically, provides  a left-computable
zero.

\olivierOK{
ANCIENNE PREUVE, MOCHE. 
  Without loss of generality we
  can assume $\ssf(0)<0$ and $\ssf(1)>0$. 
  We split the discussion into two cases.  First case: There is a standard
  rational number $\sx$ with $\ssf(\sx) = 0$. Then $\ssf$ has a
  computable zero, namely  $\sx$. 

  Second case: For all standard rational number $\sx$ we have
  $\ssf(\sx) \neq 0$. As $\ssf$ is a continuous function, we know it
  has a zero, and actually that $\sx^*=\inf\{\sx|\ssf(\sx)> 0\}$ is a
  particular zero. 

  Split interval $[0,1]$ into
  $\nn=2^{\partientieresup{\log (\comega)}}$ subintervals: Namely,
  consider $\nx_{\nk}:=\frac{\nk}{\nn}$ for $\nk$ a \cheap \integer{}
  with $1 \le \nk \le \nn$. 

Observe that given some $\nk$, as we know that $\ssf(\nx_{\nk}) \neq
0$, either $\psi(\nx_{\nk}, \frac{1}{\nm})<-\frac{1}{\nm}$ for some
$\nm$ in which case we can deduce that $\ssf(\nx_{\nk}) <
0$, or $\psi(\nx_{\nk}, \frac{1}{\nm})>\frac{1}{\nm}$ for some
$\nm$ in which case we can deduce that $\ssf(\nx_{\nk}) >
0$. This provides computability of predicate $\ssf(\nx_{\nk}) >0$
(respectively $\ssf(\nx_{\nk}) <0$).


%
%
  Consider the least $\nk$ such that $\ssf(\nx_{\nk})<0$ and
  $\ssf(\nx_{\nk+1})>0$. Necessarily, 
  $\sx^* \in (\nx_{\nk}, \nx_{\nk+1})$. We have
  $|\nx_{\nk}-\sx| \le \frac{1}{\nn}$ which is an \ieffective
  infinitesimal. 
  Now, $\nx_{\nk}$ is a computable \cheap rational number
  from its definition and using above remarks.



}
\end{proof}

\begin{corollary} \label{coro:avantrice}
Every computable function $\ssf:[\sa,\ssb] \to \R$ with $f(\sa)  \cdot f(\ssb)
<0$ has a standard zero $\sx$.
It this zero is isolated (there exists some standard $\sepsilon>0$ such that
$\ssf$ has no other zero on $[\sx-\sepsilon,\sx+\sepsilon]$), then it is computable.
\end{corollary}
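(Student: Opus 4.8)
The plan is to treat the two assertions separately. For the existence of a standard zero, I would first note that a computable $\ssf$ is continuous: by its definition it satisfies the discretization property, hence has an effective modulus of continuity and is uniformly continuous, so continuity follows (or directly by Theorem~\ref{th:continuous}). Since $\ssf(\sa)\cdot\ssf(\ssb)<0$, the Intermediate Value Theorem (Theorem~\ref{th-th:arnaque}) produces a standard zero $\sx$; equivalently, the right-computable zero furnished by Theorem~\ref{th:arnaquec} already is a standard zero.

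For the computability statement I would use the characterization that a standard real is computable iff it is both left- and right-computable (Theorem~\ref{stdcomp}, made explicit by Theorem~\ref{th:un}). Assume $\sx$ is isolated, with a standard $\sepsilon>0$ for which $\sx$ is the unique zero of $\ssf$ in $[\sx-\sepsilon,\sx+\sepsilon]$. As computability of a single real is non-uniform, I may hard-code two rationals $r_1\in(\sx-\sepsilon,\sx)$ and $r_2\in(\sx,\sx+\sepsilon)$, so that $\sx$ is the unique zero of $\ssf$ in $[r_1,r_2]$. The crucial equivalence is then purely positional: for a rational $u\in(r_1,r_2]$ one has $\sx<u$ iff $\ssf$ has no zero on $[u,r_2]$, and symmetrically for $\ell\in[r_1,r_2)$ one has $\ell<\sx$ iff $\ssf$ has no zero on $[r_1,\ell]$ (using that $\sx$ is the only zero in $[r_1,r_2]$).

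The core step is that ``$\ssf$ has no zero on $[u,r_2]$'' is semi-decidable. Using the effective modulus (Theorem~\ref{th-th:nd}) together with the uniform approximation function from the definition of computability, I would sample $\ssf$ on a net of $[u,r_2]$ with spacing dictated by the modulus, and certify the condition once every sampled value has a common sign and exceeds in absolute value the modulus-controlled variation between neighbouring net points. If $\ssf$ is nonzero on $[u,r_2]$, compactness gives $\inf_{[u,r_2]}|\ssf|>0$ and a constant sign, so a sufficiently fine net certifies it; a genuine zero defeats certification at every scale. Thus the certifiable $u$ form a recursively enumerable set of rationals equal to $(\sx,r_2]\cap\Q$, whose infimum is $\sx$; taking running minima exhibits $\sx$ as the infimum of a computable sequence, i.e. right-computability. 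The symmetric procedure on $[r_1,\ell]$ gives left-computability, and Theorem~\ref{stdcomp} yields that $\sx$ is computable.

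I expect the main obstacle to be this semi-decision step, specifically ensuring that the certified bounds converge to $\sx$ rather than halting short of it: this is exactly where the strict sign-definiteness of $\ssf$ away from $\sx$ --- provided by compactness and isolation --- is indispensable. It is also what lets me avoid a case split between a transversal sign-change zero and a touching extremum, since phrasing the certifiable predicate as ``no zero on the subinterval'' rather than as a sign test covers both uniformly; in the sign-change case one could instead apply Theorem~\ref{th:arnaquec} to the restriction $\ssf|_{[r_1,r_2]}$ and invoke uniqueness, but that shortcut breaks for a touching zero, which is why I prefer the uniform formulation.
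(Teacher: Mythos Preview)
Your argument is correct. The paper's own proof takes exactly the shortcut you identify in your final sentence: it applies Theorem~\ref{th:arnaquec} to the restriction of $\ssf$ to $[\sx-\sepsilon,\sx+\sepsilon]$, so that the unique zero in that interval must coincide with both the left-computable and the right-computable zero produced by that theorem, and then concludes by Theorem~\ref{stdcomp}. Your direct semi-decision argument is longer but more self-contained and, as you observe, does not rely on a sign change across the isolated zero; it therefore establishes the slightly stronger fact that \emph{any} isolated zero of a computable function is computable, whereas the paper's quick reduction tacitly needs $\ssf(\sx-\sepsilon)\cdot\ssf(\sx+\sepsilon)<0$ in order to invoke Theorem~\ref{th:arnaquec}. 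Your explicit passage to rational endpoints $r_1,r_2$ is also cleaner than the paper's use of $\sx\pm\sepsilon$, which are not a priori computable.
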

\begin{proof}
The existence of $\sx$ follows from previous theorem (Intermediate
Value Theorem). If $\sx$ is isolated, then by considering $\ssf$ on
$[\sx-\sepsilon,\sx+\sepsilon]$ in previous Theorem, we get that this (unique) zero $\sx$ is
left-computable and right-computable. Hence, it is computable.
\end{proof}

All this can be used to prove for example Rice's theorem. 


\begin{theorem}[Rice's theorem] \label{th:rice}
The set of standard computable reals is a real closed field.
\end{theorem}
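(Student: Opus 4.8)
The plan is to verify the three defining properties of a real closed field: that the standard computable reals form an ordered field, that every positive computable real has a computable square root, and that every odd-degree polynomial with computable coefficients has a computable root. The ordering is simply inherited from $\R$. For the field structure, closure under $+$, $-$ and $\cdot$ follows from the characterization of Theorem \ref{th:un}: if $\sx$ and $\sy$ are computable via \cheap computable rationals approximating them within a monotone infinitesimal such as $2^{-\comega}$, then the corresponding \cheap rationals for $\sx+\sy$, $\sx-\sy$ and $\sx \cdot \sy$ are obtained by applying the (total recursive) rational field operations componentwise and stay infinitely close to the target. For division, given a computable $\sx \neq 0$, the hypothesis $\sx \neq 0$ lets us locate a standard rational lower bound on $|\sx|$ from any sufficiently good approximation, after which $1/\sx$ is approximated to precision $2^{-\comega}$ by inverting the approximating \cheap rational; hence $1/\sx$ is computable.

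For square roots, I would fix a positive computable real $\sa$ and apply Corollary \ref{coro:avantrice} to the function $\ssf(\sx) = \sx^2 - \sa$ on the interval $[0, \sa+1]$. This $\ssf$ is a computable function there, being a polynomial with computable coefficients: its effective modulus of continuity on a bounded interval is supplied by the Lipschitz bound $2(\sa+1)$, and its approximation function is built from the computable field operations. One has $\ssf(0) = -\sa < 0$ together with $\ssf(\sa+1) = (\sa+1)^2 - \sa > 0$, so the endpoints have strictly opposite signs. Since $\ssf$ is strictly increasing on $[0,\sa+1]$, its zero $\sqrt{\sa}$ is isolated, and Corollary \ref{coro:avantrice} then yields that $\sqrt{\sa}$ is computable.

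For odd-degree polynomials, let $\sP$ have odd degree $\sd$ and computable coefficients. Then $\sP$ is a computable function on any interval with computable endpoints, again because it is assembled from the computable field operations. Using a Cauchy-type root bound $\std{M}$, a computable real large enough that every real root of $\sP$ lies in $(-\std{M}, \std{M})$, the sign of $\sP$ at $\pm\std{M}$ is governed by the leading term; since $\sd$ is odd, $\sP(-\std{M})$ and $\sP(\std{M})$ are nonzero of opposite signs. A nonzero polynomial has only finitely many roots, hence every root is isolated, so applying Corollary \ref{coro:avantrice} on $[-\std{M}, \std{M}]$ produces a computable root.

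The main obstacle, and the step deserving the most care, is the verification that polynomials with computable real (as opposed to merely rational) coefficients genuinely qualify as computable functions in the sense of the paper, supplying both an effective modulus of continuity and a uniformly computable approximation over the rationals; this rests on showing that the function-level operations of addition and multiplication preserve computability, which is exactly where the \cheap stability properties behind Theorem \ref{th:un} and the discretization formulation do the work. Once this closure is in place, real-closedness reduces cleanly to the isolated-zero case of Corollary \ref{coro:avantrice}, since both $\sx^2 - \sa$ and any nonzero polynomial have only isolated zeros.
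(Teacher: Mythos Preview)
Your proposal is correct and follows essentially the same strategy as the paper: establish closure under the field operations, observe that polynomials with computable real coefficients are computable functions on bounded intervals, and then invoke Corollary~\ref{coro:avantrice} (the isolated-zero consequence of the computable Intermediate Value Theorem). The only noteworthy difference is the characterization of real-closedness you use: the paper argues directly that \emph{every} real root of a polynomial with computable coefficients is computable (by isolating an arbitrary given root in a rational interval and applying the corollary there), whereas you verify the equivalent axiomatic form via square roots of positives and roots of odd-degree polynomials. Both routes rest on the same two ingredients---computability of polynomial functions and Corollary~\ref{coro:avantrice}---so the distinction is cosmetic rather than mathematical.
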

\begin{proof} 
  Standard computable reals are closed by addition, subtraction,
  multiplication and division. We do the proof for multiplication,
  other proofs are similar. Fix some effective infinitesimal
  $0<\nepsilon$.  Assume $\sx$ and $\sy$ are computable.  Let $\sK$ be
  some standard constant such that $|\sx| \le \sK$ and $|\sy| \le
  \sK$.  Consider effective infinitesimal $\nepsilon'=
  \frac{\nepsilon}{2 \sK+1}$.  

  There must exist some \cheap computable rationals $\frac{\np}{\nq}$
  and $\frac{\np'}{\nq'}$ such that $\left|\sx-\frac{\np}{\nq}\right|
  \le \nepsilon'$ and $\left|\sy-\frac{\np'}{\nq'}\right| \le
  \nepsilon'.$ Then
\begin{eqnarray*}
\left| \sx \cdot \sy - \frac{\np
  \cdot 
  \np'}{\nq \cdot \nq'} \right| &\le&  \left| \sx -
\frac{\np}{\nq}\right| \cdot |\sy| + \left| \frac{\np}{\nq} 
\right| \cdot \left| y - \frac{\np'}{\nq'} \right| \\
& \le& \sK \nepsilon' + (\sK+\nepsilon') \nepsilon' \\
&=& (2\sK+1) \nepsilon' \\&=& \nepsilon,
\end{eqnarray*}
bounding the $\nepsilon'$ in term  $\sK+\nepsilon'$ by $1$. 

Similarly, it is easy to establish that polynomials with coefficients that are
  standard computable reals are computable. Then given such a
  polynomial, if it has a real root $\sx$, then one can always find
  some standard rational $\sa,\ssb$ such that $\sx$ is the only root
  in interval $[\sa,\ssb]$. One can then apply previous theorem
  (Intermediate Value Theorem) on the
  polynomial restricted to this interval to get that it must have a
  computable root. This computable root can only be $\sx$.
\end{proof}

With the same principle, the following can be established: 

\begin{theorem} \label{th:extremec}
Every computable function $\ssf:[\sa,\ssb] \to \R$ attains its maximum
in a right-computable standard point and in a left-computable standard
point. If a maximum point is isolated, then it is computable.
\end{theorem}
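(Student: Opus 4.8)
The plan is to mirror the proofs of Theorems \ref{th:extreme} and \ref{th:arnaquec}, replacing the exact grid values of $\ssf$ by the computable rational approximation furnished by the definition of a computable function, so that the maximizing index stays computable. First I would reduce to $[\sa,\ssb]=[0,1]$, fix an \ieffective computable infinitesimal $0<\nepsilon$, and extract from the computability of $\ssf$ both a computable $\ndelta$ with the discretization property and an indexed family of \cheap rationals $\npsi(\sq)=\spsi(\sq,\iin)$, uniformly computable in $\sq$, with $|\npsi(\sq)-\ssf(\sq)|\le \nepsilon$. Then I would take the infinitely large computable \cheap integer $\nN=\max(\comega,1/\ndelta)$ and work on the grid $\nx(\nk)=\nk\cdot\frac{1}{\nN}$, $0\le \nk\le \nN$.

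Next I would locate a maximizing grid index using $\spsi$ rather than $\ssf$. Set $\nk^-=\min(\nS)$ where
$$\nS_{\iin}=\{0\le \sk\le \nNn : \spsi(\sk\cdot\frac{1}{\nNn},\iin)\ge \spsi(\sk'\cdot\frac{1}{\nNn},\iin)\mbox{ for all }0\le \sk'\le \nNn\}.$$
Each $\nS_{\iin}$ is non-empty since a finite set of rationals has a maximum, and $\nk^-$ is a computable \cheap integer: at each index $\iin$ one evaluates the finitely many numbers $\spsi(\sk\cdot\frac{1}{\nNn},\iin)$ and returns the least maximizing $\sk$. As $0\le \nk^-\le \nN$, the \cheap rational $\nx(\nk^-)$ is limited and computable.

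Then I would verify that $\sx=st^-(\nx(\nk^-))$ is a global maximum point. Given any standard $\sy\in[0,1]$, the grid point $\nx(\nk')$ with $\nk'=\lceil \nN\cdot \sy\rceil$ is infinitely close to $\sy$, so by uniform continuity (Theorem \ref{th:ucontinuous}) $\ssf(\sy)$ is infinitely close to $\ssf(\nx(\nk'))$. Combining the two approximation bounds with the grid-maximality of $\nk^-$ gives
$$\ssf(\nx(\nk'))\le \npsi(\nx(\nk'))+\nepsilon\le \npsi(\nx(\nk^-))+\nepsilon\le \ssf(\nx(\nk^-))+2\nepsilon,$$
so $\ssf(\sy)$ is bounded by $\ssf(\nx(\nk^-))+2\nepsilon$ up to an infinitesimal. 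Since $\ssf(\sy)$ is standard and $\nepsilon$ is infinitesimal, passing to the standard part and invoking Lemma \ref{lelemmequimanque} yields $\ssf(\sy)\le \ssf(\sx)$, so $\sx$ maximizes $\ssf$.

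Finally, I would obtain computability from Lemma \ref{corobesoind}: since $\nx(\nk^-)$ is limited and computable, $\inf(\nx(\nk^-))$ is right-computable and $\sup(\nx(\nk^-))$ is left-computable, and the matching standard parts are maximum points by the argument above; running the symmetric construction with $\max(\nS)$ and $st^+$ produces the point of the remaining type. For the isolated case I would argue as in Corollary \ref{coro:avantrice}: on a standard neighborhood where the maximizer is unique, the left- and right-computable maxima produced by the two constructions must both coincide with it, so it is simultaneously left- and right-computable, hence computable. The main obstacle is the middle step: because only $\spsi$ is available, $\nk^-$ maximizes perturbed values, and one must show that the accumulated errors --- the two copies of $\nepsilon$ from the rational approximation together with the infinitesimal from uniform continuity --- all wash out under the standard part, so that $\sx$ is a genuine rather than approximate maximizer, while taking care that the correct standard part ($st^-$ versus $st^+$, and $\inf$ versus $\sup$) is used to match the claimed side of computability.
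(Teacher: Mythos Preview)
Your proposal is correct and follows essentially the same route as the paper: same reduction to $[0,1]$, same computable grid with $\nN=\max(\comega,1/\ndelta)$, same idea of selecting a $\spsi$-maximizing index $\nk^-$ componentwise, same appeal to uniform continuity and Lemma~\ref{corobesoind} for one-sided computability, and the same reduction to Corollary~\ref{coro:avantrice} for the isolated case. The only notable cosmetic differences are that the paper builds an $\nepsilon$-slack into the definition of $\nS$ (requiring $\spsi(\sk/\nNn,\iin)\ge \spsi(\sk'/\nNn,\iin)-\nepsilonn$ rather than exact maximality) and takes $\sx=\inf(\nk^-/\nN)$ directly, whereas you take exact maximizers and first work with $st^-$ before passing to $\inf/\sup$; just be careful that the maximum argument you give is for $st^-$ while Lemma~\ref{corobesoind} delivers computability of $\inf$, so you should either argue directly with $\inf$ (as the paper does) or note why the same estimate forces $\inf(\nx(\nk^-))$ to be a maximizer as well.
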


The proof is similar to Theorem \ref{th:arnaquec}, but adapting the
proof from Theorem \ref{th:extreme}. 

\begin{proof}
 Assume w.l.o.g that $[\sa,\ssb]=[0,1]$. 

Consider effective infinitesimal $\nepsilon=\nepsilonn$. There must
exists some computable $\ndelta$ such that $ \mbox{if } |\sx-\sy| \le
\ndelta \mbox{ then } |\ssf(\sx)-\ssf(\sy)| \le \nepsilon.$ There must
also exists some indexed family of \cheap rationals
$\npsi(\sq)=\npsi(\sq)_{\iin}=\spsi(\sq,\iin)$, uniformly computable
in $\sq $, such that $|\npsi(\sq)- \ssf(\sq)| \leq \nepsilon$ for all
$\sq \in \Q \cap [0,1]$.

Consider infinitely large computable \cheap integer $\nN=\max(\comega,\frac{1}{\ndelta})$.  The idea is to consider the $\nx(\nk)$ of the form $\nk \cdot
  \frac{1}{\nN}$ for \cheap integer $0 \le \nk \le \nN$.

  To do so, consider $\nk^-=\min(\nS)$ where $\nS=\nS_{\iin}$
and
$$\nS_{\iin}=\{0 \le \sk \le \nNn \mbox{ and } \spsi(\sk \cdot
  \frac{1}{\nNn},\iin)  \ge \spsi(\sk' \cdot
  \frac{1}{\nNn},\iin) - \nepsilonn \mbox{ for all } 0 \le \sk' \le \nN
\}.$$ 
\olivier{Oui: $\nNn=\iin$ et donc ca s'écrit plus simplement} 

\olivier{Question: Ca peut aussi s'écrire du coup
  autrement. L'écrire. 
}

\olivier{Question: Ca peut aussi s'écrire du coup:
$$\nk^-=\min(\{0 \le \nk \le \nN \mbox{ and } \spsi(\nk \cdot
  \frac{1}{\nN},\comega) \ge -\nepsilon\})$$
Est-ce plus clair ou mieux?
En tout les cas, plus subtile car le $\spsi(\nk \cdot
  \frac{1}{\nN},\comega) \ge - 2\nepsilon$ est en fait en train de parler composante par composante
}


  Function $\ssf$ is continuous, hence uniformly continuous on its
  domain.

Consider $\sx=\inf(\frac{\nk^-}{\nN})$, and $\sm=f(\sx)$. 
Then we claim that $f(\sy) \le \sm$
for all standard $\sy$. Indeed, any $\sy$ contains at least one $\nk' \cdot
  \frac{1}{\nN}$, $0 \le \nk' \le \nN$ infinitely close to it: 
  Consider $\nk'=\lceil \nN
  \cdot \sy \rceil$. 

Hence $\ssf(\sy)$ is $\nepsilon$ close to $\ssf(\nk' \cdot
  \frac{1}{\nN})$. Now, by construction $\ssf(\nk' \cdot
  \frac{1}{\nN}) \le 
\ssf(\nk^- \cdot
  \frac{1}{\nN}) - \nepsilon$, hence $\ssf(\sy) \le
  \sm$. \olivier{Arnaque. Detailler.}

Considering $\sup$, and the $\max(\nS)$
defined symmetrically, provides  a left-computable
zero. 

\end{proof}

\olivier{
section{Admissibility}
\label{def:admiss}

Let us define some specific
well-closed ordinals, the \recursivelyregular ones.

\begin{definition}
  A limit ordinal $\kappa$ is \recursivelyregular iff it closed under
  all $(\infty,\kappa)$-partial recursive functions.
\end{definition}

This is sometimes called admissibility.

The relation of this concept to classical concepts such as the
following concept 
is discussed in \cite{hinman2017recursion}.

\begin{definition}
  A limit ordinal $\alpha$ is admissible if and only if there doesn't
  exist a function $f$ from $\gamma < \alpha$ to $\alpha$ such that:
  \begin{itemize}
  \item $f$ is unbounded (no greatest element in $\alpha$) and
  \item $f$ is $\Sigma_1$-definable in $L_\alpha$.
  \end{itemize}
where $L_\alpha$ is the $\alpha$th level of the G\"odel hierarchy of
constructibles. 
\end{definition}

Relations (equivalence) with other notions of metafinite recursion
theory (e.g. \cite{sacks1990higher}) is also discussed in
\cite{hinman2017recursion}.

Notice that monographies  \cite{hinman2017recursion} and
\cite{sacks1990higher} are both  available online
on Euclid's project. 
}

\section{Discussions and Perspectives}
\label{sec:discuss}

Our presentation of concepts of computable analysis is based on \cheap
analysis. It may be important to discuss how this relates to
other approaches for presenting computable analysis, in particular to
Type-2 analysis.   Type-2 analysis is based on the concept of notation and
representations: A {notation of a
  denumerable set $X$ is a surjective function $\nu$ from a subset of
  $\Sigma^*$ to  $X$, where $\Sigma^*$ is the set of finite words over
  alphabet $\Sigma$.  A representation of a non denumerable set $X$ is
  a surjective function $\delta$ from a subset of $\Sigma^\omega$ to
  $X$ where $\Sigma^\omega$ denotes infinite words over alphabet
  $\Sigma$. Having fixed a representation or a notation for $X$ and for $Y$, a
  function $f$ from $X$ to $Y$ is then considered as computable if it
  has a realizer: given any representation of $x \in X$, the machine
  outputs a representation of $f(x)$. 
  A common representation of $\R$ is Cauchy's representation: a
  real $x \in \R$ is represented by a fastly converging sequence
  $(q_n)_n$ of rationals, that is to say 
  such that $|x-q_n| \le 2^{-n}$. With such a representation
  of $\R$, Type-2 Analysis basically considers machines working over
  sequences of rationals.  

  We want to point out that \cheap analysis brings meaning to
  sequences, as such a sequence of rationals can be read as a \cheap
  rational number.
  However, this analogy is not so direct, as in \cheap analysis,
  sequences are considered as equal if they coincide after some
  \finite rank, contrary to Type-2 analysis where two reals are (considered to
  be) equal  iff they have the same set of representations. Indeed, this does
  not clearly
  imply the existence of a formal simple translation
  from one framework to the other.

  Despite these difficulties, we believe that our framework provides a
  dual view of statements from computable analysis.

We also believe in the pedagogical 
  value of \cheap analysis, in particular when talking about
  computability. More precisely, through this paper, we exposed that several of the concepts
  from Analysis and 
  established properties have very nice presentations in this framework, either
  avoiding quantifier alternations, or relying on simpler to grasp concepts. 

  While some of the presented results are not new, the intended main
  interest of the discussed framework is not in establishing new
  statements but in its elegance. Actually, as most of our
  computability notions are proved to be similar to classical notions,
  if something can be proved using our framework, it can be proved
  using a classical reasoning. This criticism is not a side effect but
  an advantage we take into account to obtain a richer overview of
  some major mathematical concepts. It may also be important to put
  this discussion in the context of the usual criticisms about NSA's
  approach: In particular, NSA transfer property basically implies
  that any result proved in NSA can be proved without NSA, and hence
  this is sometimes used as an argument against this approach: See
  e.g. \cite{wikipediansa} for more deeper discussions and references
  about arguments against NSA's approach. It may also be important to
  do not forget that NSA and \cheap analysis differ, and that \cheap
  analysis approach is of some help to provide constructivity.

  In this paper, we derived our results using $\omega$ as an index
  set. But it could be some other well-ordered and well-closed
  set. This would thus provide some alternative views of statements from
  computability and analysis: In particular, once such a set is fixed,
  previous constructions provide infinitesimal and infinitely large
  elements with respect to all elements in that set. This can hence be
  iterated, using a transfinite induction, to provide richer and
  richer index sets, providing statements in richer and richer
  frameworks. Such an approach yields to contexts such like
  computation models with ordinal times.


As we mentioned before, also notice that links have already been
established between type-2 computability and transfinite computations
(see \cite{GNCompAn} for example) using surreal numbers to extend
$\R$.  More generally, the links between these ideas and already
studied models of computations over the ordinals deserve due
attention.  Models of computation over the ordinals include Sacks'
higher recursion theory \cite{sacks1990higher} or Infinite Time
Register Machines \cite{CarlFKMNW10} or Infinite Time Turing Machines
\cite{infTur}.

\olivier{COMMENTe:

It is now natural to
  wonder how to extend the sequences indexed by natural numbers used
  in \cheap to fit a potential transfinite time simulation. 

  A generalized index for sequences is thus motivated by our
  willingness not only to link computability to cheap non-standard
  analysis (CNSA) as we done above but also to higher-order
  computability, in particular to the generalized computability
  defined along well-closed sets such that the admissible ordinals
  (see Sacks' higher recursion theory for example,
  \cite{sacks1990higher}). It is now generally admitted that
  introducing infinity in computations is a way to better
  characterizing what it means to be computable for different objects,
  by introducing infinity in the underlying set on which computations
  happens (e.g. the reals in analysis, \cite{Wei00}) or on the
  underlying notion of time (e.g. ordinal computations,
  \cite{sacks1990higher,dagBook}). We believe that infinitely large
  numbers introduced in CNSA but extended to indices belonging to some
  well-chosen sets are of some help to establish interesting parallels
  between higher-order recursion models and what is called a kind of
  higher recursion. This leads to further perspectives, for instance a
  new paradigm equivalent, to some extent, to certain existing
  transfinite time computation models, hence introducing a more
  mathematical and thus easier to manipulate model than the current
  ones. Here we allow ourselves some speculation on the possibilities
  offered by the proposed framework to further explore computability
  theory applied to reals and higher-order computability (and their
  relations). 

  We would precisely discuss the case where we consider the index to
  be a \recursivelyregular ordinal $\kappa$ greater than $\N$, an
  ordinal closed under all $(\infty,\kappa)$-partial recursive
  functions. We would hence consider functions from $\kappa$ to
  $\kappa$ instead of functions over the natural numbers. Notice that
  actually, we can even assume that $g(0)=0$, for a successor ordinal
  $\alpha=\beta+1$, we have $g(\beta+1)=g(\alpha)+1$, and for a limit
  ordinal $\alpha$, we have $g(\alpha)>g(\beta)$ for all
  $\beta<\alpha$, to fulfill intuition about infinite largeness. We
  fall in the framework of computability models for such functions.
  Basically, looking at our arguments, we only need a notion of
  computability for \totalrecursive that is stable by composition,
  primitive recursion and minimization, and that include classical
  \totalrecursive functions over $\N$, where such computable functions
  can be enumerated, and with Theorem \ref{patchproperty}. If all of
  this holds, then all previous reasoning are valid for such an
  extension. We could base our work on models that have already been
  proposed, in particular in \cite[Chapter 8]{hinman2017recursion} or
  through Sack's $\alpha$-recursion ( \cite{sacks1990higher}). A first
  guess would be to say that in ordinary recursion theory an object is
  finite iff it is in a one-to-one correspondance with a natural
  number, i.e. with an element of the fundamental domain. Repeating
  \cite{hinman2017recursion}: If this were the only property of finite
  objects considered, given the Axiom of Choice, this would make every
  set \metafinite. It is then fundamental to observe that in addition,
  that in classical recursion theory every finite set of natural
  numbers is in a computable one-to-one correspondance with a natural
  number. In that spirit, an object is termed \metafinite {} iff it is
  in a computable one-to-one relationship with an ordinal
  \cite{hinman2017recursion}.

  Besides, as surreal numbers previously mentioned citing
  \cite{GNCompAn}, another way to consider the question of extensions
  to biggest numbers can be envisaged. To any \cheap number $0<\nx$,
  we can associate $\mu(\nx)=\{ \sy \mbox{ standard } | \sy < \nx\}.$
  As this is a well-ordered set, it has an order type (i.e. an
  ordinal) that we can denote $\nu(\nx)$.  For any \finite $\nx$,
  $\nu(\nx)$ is $\nx$. Previous constructions yield ways to provide
  infinitely large \cheap numbers $\nx$ but from the fact that
  standard numbers live in $\Idx$, for any \cheap number $\nx$, we
  always have $\nu(\nx) \le \kappa$, where $\kappa$ is the order type
  of $\Idx$. In particular, $\nu(\nx) \le \omega$ when $\Idx=\N$.As a
  consequence, considering index sets $\Idx$ with order type $\kappa$
  that are greater than $\omega$ is the only way to get infinitely
  large \cheap numbers with even greater $\nu$ value.  Of course, a
  symmetric discussion is about how small an infinitesimal can be
  (e.g. by reasoning on $\frac{1}{\nx}$).

}

\olivierOK{
V10: Faut de place supprime:
subsection{Effective Preservation Property}

Our previous reasonings actually proves the following general result: 

\begin{theorem}[Effective transfer property]
Let $\nx$ be some \cheap number that satisfies some recursive standard
predicate $\sP(\sn,\sx)$ for all standard $\sn$: 
$$\mbox{
For all standard $\sn$, }
\sP(\sn,\nx) $$

Assume that predicate $\sP$ is monotone:
$\sP(\sn',\sm)$ implies $\sP(\sn,\sm)$ for $\sn' \ge \sn$. 


In the general case, there always exists some \cheap \finite index $\nn$
with $\sP(\comega,\shiftd{\nx}{\nn}).$

When $\nx$ is computable, we have that  $\nn \le \nn_i$ for some $i$,
and there exists some \totalrecursive $\ssf: \N \to \N$ such that $$\mbox{
For all standard $\sn$, }
\sP(\sn, \shiftd{\nx}{\ssf(\sn)}). $$
\end{theorem}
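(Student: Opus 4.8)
The plan is to generalise the proof of Theorem~\ref{pr-th:truc}, reading the predicate $\sP(\sn,\cdot)$ in place of the inequalities ``$\ge \ny$'' / ``$\le \ny$'' used there, and then to add the effectivity layer by a safe minimization argument exactly as in the proof of Theorem~\ref{monotone}. For the \textbf{general case}, the hypothesis ``for all standard $\sn$, $\sP(\sn,\nx)$'' unfolds, by the \cheap semantics, into: for each standard $\sn$ there is a \finite rank $m(\sn)$ with $\sP(\sn,\sx_{\iin})=1$ for all $\iin \ge m(\sn)$. First I would define the \cheap index $\nn$ componentwise by $n_{\iin}=m(\iin)$. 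Since $\iin+n_{\iin}\ge m(\iin)$ for every $\iin$, one gets $\sP(\iin,\sx_{\iin+n_{\iin}})=1$, and reading this componentwise (recall $\comega_{\iin}=\iin$) is precisely $\sP(\comega,\shiftd{\nx}{\nn})$. This part uses neither computability of $\nx$ nor monotonicity, and $m(\cdot)$ need not be \totalrecursive.

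For the \textbf{computable case}, assume $\nx=\nxn=\sh(\iin)$ for some \totalrecursive $\sh$. I would introduce the standard predicate $\sP'(\sn,\sm):=\sP(\sn,\sh(\sn+\sm))$, which is \totalrecursive as a composition of \totalrecursive maps. The ranks $m(\sn)$ above witness that $\sP'$ is \emph{safe}: for each $\sn$, taking $\sm=m(\sn)$ gives $\sn+\sm\ge m(\sn)$, hence $\sP'(\sn,m(\sn))=1$. Therefore $\ssf(\sn):=\mu \sm~\sP'(\sn,\sm)$ is \totalrecursive by safe minimization (recalled in Section~\ref{sec:computabilityintegers}). Setting $\nn=\ssf(\comega)$, i.e.\ $n_{\iin}=\ssf(\iin)$, the \cheap index $\nn$ is computable, and by construction $\sP(\iin,\sh(\iin+\ssf(\iin)))=\sP'(\iin,\ssf(\iin))=1$ for all $\iin$, so $\sP(\comega,\shiftd{\nx}{\nn})$ holds with a computable shift. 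Since $\nn$ is a computable \cheap integer, it occurs in the enumeration $\nn_1,\nn_2,\dots$, whence $\nn\le \nn_i$ for some $i$ (in fact $\nn=\nn_j$ for some $j$).

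The final displayed assertion ``for all standard $\sn$, $\sP(\sn,\shiftd{\nx}{\ssf(\sn)})$'' is the componentwise reading of $\sP(\comega,\shiftd{\nx}{\nn})$ with $\nn=\ssf(\comega)$: at the component indexed by $\sn$ the shifted value is $\sh(\sn+\ssf(\sn))$ and the predicate holds there, the single \totalrecursive witness being $\ssf$. This is where \textbf{monotonicity} enters: from $\sP(\iin,\sx_{\iin+n_{\iin}})=1$ together with $\sP(\iin,\cdot)\Rightarrow \sP(\sn,\cdot)$ for $\iin\ge \sn$, one obtains $\sP(\sn,\sx_{\iin+n_{\iin}})=1$ for all $\iin\ge \sn$, i.e.\ the shifted \cheap number genuinely satisfies $\sP(\sn,\cdot)$ after a \finite rank, for every standard $\sn$ at once.

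The main obstacle is the safe-minimization step, which is really the heart of the effectivity claim: one must verify that $\sP'$ is both \totalrecursive and safe, the safety being exactly the translation of the \cheap hypothesis ``$\sP(\sn,\nx)$ for all standard $\sn$'' into the componentwise existence of a valid shift, after which computability of $\ssf$ follows as in Theorem~\ref{monotone}. A secondary subtlety is notational: the last display must be read diagonally, with the shift $\ssf(\sn)$ applied at component $\sn$, rather than as a shift of $\nx$ by a fixed standard amount.
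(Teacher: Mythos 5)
Your proof is correct and follows exactly the route the paper intends: the theorem appears there without a proof of its own (it is introduced only by ``Our previous reasonings actually proves the following general result''), and your argument is precisely an assembly of those reasonings --- the shift construction of Theorem~\ref{pr-th:truc} for the general case, and the safe-minimization pattern of Theorem~\ref{monotone} (form $\sP'(\sn,\sm)=\sP(\sn,\sh(\sn+\sm))$, check safety, minimize, then diagonalize along $\comega$) for the computable case, with the bound $\nn \le \nn_i$ obtained from the enumeration of computable \cheap integers as in the paper. Your closing caveat is also well placed: under the literal reading of the final display the claim already follows from the preservation property (Theorem~\ref{pr-transfer}) for an arbitrary $\ssf$, so the diagonal reading you prove via monotonicity is indeed the only contentful interpretation of the statement.
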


\olivierOK{On est bien d'accord? theorem VERIFIER.}
}

\bibliography{bournez,perso,Sabrina}

\newpage

\begin{acks} 
 This material is based upon work supported by the
  {RACAF Project from Agence National de la
    Recherche}{}. 
Olivier Bournez would like to thank Bruno Salvy for having pointed out the post
\cite{CheapNonStandardAnalysis} when preparing a subject for an exam
  about his logic course.

\end{acks}

\vfill

\olivierOK{Are we ok that all of this is working }
  \sabrinaOK{yes for previous version}


\olivierOK{Important: We said that something is true in the cheap world if it
  holds for any ultrafilter.

Can we say what it would mean for our concepts?

What holds in a particular ultrafilter vs all?
}

\newpage
\appendix

\olivier{Ca suffit comm preuve puor 9.2?}

\end{document}